\newtheorem{theorem}{Theorem}
\newtheorem{corollary}{Corollary}
\newtheorem{lemma}{Lemma}
\newcommand{\vect}[1]{\overrightarrow{#1}}
\begin{document}
\title{Cross Subspace Alignment and the Asymptotic  Capacity of $X$--Secure $T$--Private Information Retrieval}
\author{Zhuqing Jia$^\dagger$, Hua Sun$^*$ and Syed A. Jafar$^\dagger$}
\affil{$^\dagger$Center for Pervasive Communications and Computing (CPCC), UC Irvine\\
$^*$Department of Electrical Engineering, University of North Texas\\
Email: \{zhuqingj\}@uci.edu, \{hua.sun\}@unt.edu,  \{syed\}@uci.edu}
\date{}

\maketitle

\begin{abstract} 
$X$--secure and $T$--private information retrieval (XSTPIR) is a form of  private information retrieval  where data security is guaranteed against collusion among up to $X$ servers and the user's privacy is guaranteed against collusion among up to $T$ servers. The capacity of XSTPIR is characterized for arbitrary number of servers $N$, and arbitrary security and privacy thresholds $X$ and $T$,  in the limit as the number of messages $K\rightarrow\infty$. Capacity is also characterized for any number of messages if either $N=3, X=T=1$ or if $N\leq X+T$.  Insights are drawn from these results, about aligning versus decoding noise, dependence of PIR rate on field size, and robustness to  symmetric security constraints. In particular, the idea of cross subspace alignment, i.e., introducing a subspace dependence between Reed-Solomon code parameters, emerges as the  optimal way to align undesired terms while keeping desired terms resolvable. 
\end{abstract}
\section{Introduction}

Motivated by the importance of security and privacy  in the era of big data and distributed storage, in this work we explore the information theoretic capacity of  private information retrieval (PIR) in a secure distributed storage system. Specifically, our focus is on the $X$-secure and $T$-private information retrieval problem (XSTPIR).  A PIR scheme is said to be $T$-private if it allows a user to retrieve a desired message from a database of $K$ messages stored at $N$ distributed servers, without revealing any information about the identity of the desired message to any group of up to $T$ colluding servers. Similarly, a distributed storage scheme is said to be $X$-secure\footnote{In other words, everything that is stored at any $X$ servers must be independent of the $K$ messages. Besides $X$-security, no other constraints are imposed on the storage. The storage and the PIR scheme are jointly optimized to maximize the capacity of XSTPIR.} if it guarantees that any group of up to $X$ colluding servers learn nothing about the stored data. The $T$ and $X$ parameters may be chosen arbitrarily depending on the relative importance of security and privacy for any given application. 

The rate of a PIR scheme is the ratio of the number of bits retrieved by the user to the  total number of bits downloaded from all servers.  The supremum of achievable rates  is called the capacity of PIR. The capacity of the basic PIR setting was found in \cite{Sun_Jafar_PIR} to be  
\begin{align}
C_{\text{PIR}}(N,K)&=(1+1/N+1/N^2+\dots+1/N^{K-1})^{-1}.
\end{align}
The result was generalized subsequently in \cite{Sun_Jafar_TPIR}  to the $T$-PIR setting,  as 
\begin{equation}
C_{\mbox{\tiny TPIR}}(N, K, T)=
\left\{
\begin{array}{ll}
\left(1+T/N+T^2/N^2+\dots+T^{K-1}/N^{K-1}\right)^{-1},& T<N\\
1/K,& T\geq N.
\end{array}
\right.
\end{equation}
Further generalizations of $T$-privacy, e.g., when privacy is required only against certain specified collusion patterns \cite{Tajeddine_Gnilke_Karpuk_Etal, Jia_Sun_Jafar} have also been explored. In particular, capacity is known for disjoint colluding sets \cite{Jia_Sun_Jafar}.

The rapidly growing body of literature in this area has  produced capacity results for PIR under a rich variety of constraints \cite{Banawan_Ulukus, FREIJ_HOLLANTI, Sun_Jafar_MDSTPIR, Lin_Kumar_Rosnes_Amat, Attia_Kumar_Tandon, Banawan_Ulukus_MPIR, Sun_Jafar_PIRL, Sun_Jafar_MPIR, Tian_Sun_Chen, Kadhe_Garcia_Heidarzadeh_Rouayheb_Sprintson, Chen_Wang_Jafar_Side, Tandon_CachePIR, Wei_Banawan_Ulukus, Wei_Banawan_Ulukus_Side, Shariatpanahi_Siavoshani_Maddah, Li_Gastpar}. However,  the capacity for the natural setting of secure storage remains unknown, and relatively unexplored. While a number of efforts  are motivated by security concerns, such efforts have  focused largely on other models, e.g., wiretap models  where data security is desired against eavesdroppers listening to the communication between the user and the servers \cite{Banawan_Ulukus_Asymmetric, Wang_Sun_Skoglund}, Byzantine  models where the servers may respond incorrectly by introducing erasures or errors in their response to the user's queries \cite{Banawan_Ulukus_Byzantine, Zhang_Ge_Variant, Tajeddine_Gnilke_Karpuk_Hollanti, wang2018e, yao2019capacity}, and so called symmetric security models \cite{Sun_Jafar_SPIR, Wang_Skoglund_TSPIR, Wang_Skoglund_SPIRAd} that allow the user to learn nothing about the data besides his desired message. An exception in this regard is the recent work in \cite{Yang_Shin_Lee} where PIR with distributed storage is explored and the asymptotic (large $K$) capacity for the $X=T=1$ setting is bounded as
\begin{align}
\left(1-\frac{1}{\sqrt{N}}\right)^2&\leq \lim_{K\rightarrow\infty}C_{\mbox{\tiny XSTPIR}}(N, K,  X=1, T=1)\leq \left(1-\frac{1}{N}\right).
\end{align}

As the main result of our work, we close this gap and characterize the asymptotic capacity of XSTPIR for all $N, X, T$ as follows.
\begin{align}
 \lim_{K\rightarrow\infty}C_{\mbox{\tiny XSTPIR}}(N, K,  X, T)
 &=\left\{
\begin{array}{ll}
1-\left(\frac{X+T}{N}\right),& N> X+T\\
0,& N\leq X+T.
\end{array}
\right.
\end{align}
The asymptotic capacity characterization leads us to supplementary results which include a general upper bound on the capacity of XSTPIR, the exact capacity characterization for any number of messages $K$ if $N\leq X+T$, and the exact capacity characterization for any $K$ if $X=T=1, N=3$. The results also lead us to interesting observations about aligning versus decoding noise, dependence of PIR rate on field size, robustness to  symmetric security constraints, and a particularly useful idea called cross subspace alignment. When privately retrieving multiple  symbols from a desired message in a secure distributed storage system, the structure (say, $1, \beta, \beta^2, \cdots$, for one symbol and $1, \gamma, \gamma^2, \cdots$ for another, as in Reed-Solomon (RS) codes) of  storage and queries for each symbol  determines the number of dimensions occupied by interference and the resolvability of desired symbols. Choosing  identical RS codes ($\beta=\gamma$) for each symbol of the same message would cause desired signals to align among themselves, while making the RS codes insufficiently dependent would cause interference to occupy too many dimensions. Cross subspace alignment is achieved by drawing the code parameters  as linear combinations from the same subspace (say, $\beta=1+\alpha, \gamma=2+\alpha$), which turns out to be the optimal way to align interference while keeping desired symbols resolvable. For a summary of results and a better explanation of the main  observations we refer the reader directly to  Section \ref{sec:resobs}. 

Let us start by defining our notation.

\emph{Notation: }Let $[m:n]$ denote the set $\{m,m+1,\dots,n\}$ for any two integers $m, n$ such that $m\leq n$. For sake of simplicity, let $X_{[m:n]}$ denote the set of random variables $\{X_m,X_{m+1},\dots,X_n\}$.  
For an index set $\mathcal{I}=\{i_1,i_2,\dots,i_n\}$, let $X_{\mathcal{I}}$ denote the set $\{X_{i_1},X_{i_2},\dots,X_{i_n}\}$. For variables $a_n, n\in[1:N]$ and an arbitrary function $f(\cdot)$, we denote the $N\times 1$ vector whose $n^{th}$ term is $f(a_n)$, as $\vect{f(a)}$. Similarly, $\vect{g(b)}$ denotes the vector $(g(b_1), \cdots, g(b_n))^T$ for variables $b_n, n \in [1:N]$ and a function $g(\cdot)$.
For such $N\times 1$ vectors $\vect{f(a)}$ and $\vect{g(b)}$, let $\vect{f(a)}\circ\vect{g(b)}$ denote their Hadamard product, i.e., the $N\times 1$ vector whose $n^{th}$ term is $f(a_n)\times g(b_n)$.
The notation $X\sim Y$ is used to indicate that $X$ and $Y$ are identically distributed. When a natural number, say $\ell\in\mathbb{N}$, is used to represent an element of a finite field $\mathbb{F}_q$, it denotes the sum of $\ell$ ones in $\mathbb{F}_q$, i.e., $\ell\triangleq \sum_{l=1}^\ell 1$, where the addition is over $\mathbb{F}_q$.

\section{XSTPIR: Problem Statement}
Consider data that is stored at $N$ distributed servers. The data consists of  $K$ independent messages, $W_1, W_2, \cdots, W_K$, and each message is represented\footnote{As usual for an information theoretic formulation, the actual size of each message is allowed to approach infinity. The parameters $L$ and $q$ partition the data into blocks and may be  chosen freely by the coding scheme to  match the code dimensions. Since the coding scheme for a block can be repeated for each successive block of data with no impact on rate, it suffices to consider one block of data subject to optimization over $L$ and $q$.} by $L$ random symbols from the finite field $\mathbb{F}_q$.
\begin{eqnarray}
H(W_1)=H(W_2)=\dots=H(W_K)=L, \label{h2}\\
H(W_1,W_2,\dots,W_K)=KL, \label{h1}
\end{eqnarray}
in $q$-ary units. 
There are $N$ servers. The information stored at the $n^{th}$ server is denoted by $S_n, n\in[1:N]$.  An $X$-secure scheme, $0\leq X <N$, guarantees that any $X$ (or fewer) colluding servers learn nothing about the data.
\begin{align}
\text{[$X$-Security] }&& I(S_{\mathcal{X}};W_1,\dots,W_K)&=0, &&\forall \mathcal{X}\subset [1:N], |\mathcal{X}|=X. \label{secur}
\end{align}
Besides $X$-security, we place no other constraint\footnote{The  amount of storage at each server is not constrained \emph{a priori}, however, it is remarkable that none of the XSTPIR schemes in this work end up storing  more than $KL$ symbols at each server. Thus the amount of storage used is not worse than a data replication scheme in the absence of security constraints.} on the amount of storage or the storage code used at each server, all of which is jointly optimized to maximize the capacity of XSTPIR. To ensure information retrieval is possible, note that the set of messages $W_1, \cdots, W_K$ must be a function of $S_{[1:N]}$.
\begin{align}\label{msgfunc}
H(W_1,\cdots, W_K\mid S_{[1:N]})&=0.
\end{align}
The user generates a desired message index $\theta$ privately and uniformly from $[1:K]$. In order to retrieve $W_\theta$ privately, the user generates $N$ queries, $Q_1^{[\theta]},Q_2^{[\theta]},\dots,Q_N^{[\theta]}$. The query  $Q_n^{[\theta]}$ is sent to the $n^{th}$ server.
The user has no prior knowledge of the information stored at the servers, i.e.,
\begin{equation}
I(S_{[1:N]}
;Q_{[1:N]}^{[\theta]}, \theta)=0.\label{indp}
\end{equation}
$T$-privacy, $1\leq T\leq N$, guarantees that any $T$ (or fewer) colluding servers learn nothing about $\theta$.
\begin{align}
\text{[$T$-Privacy] }&&I(Q_{\mathcal{T}}^{[\theta ]},  S_{\mathcal{T}};\theta)&=0, &&\forall \mathcal{T}\subset [1:N], |\mathcal{T}|=T. \label{privacy}
\end{align}
Upon receiving the query $Q_n^{[\theta]}$, the $n^{th}$ server generates an answering string $A_n^{[\theta]}$, as a function of the query $Q_n^{[\theta]}$ and its  stored information $S_n$. 
\begin{equation}
H(A_n^{[\theta]}|Q_n^{[\theta]},S_n)=0. \label{ans_det}
\end{equation}
From all the answers the user must be able to recover the desired message $W_\theta$,
\begin{align}
\text{[Correctness] }&& H(W_\theta|A_{[1:N]}^{[\theta]},Q_{[1:N]}^{[\theta]},\theta)=0. \label{corr}
\end{align}
The rate of an XSTPIR scheme characterizes how many bits of desired message are retrieved per downloaded bit, (equivalently, how many $q$-ary symbols of desired message are retrieved per downloaded $q$-ary symbol),
\begin{equation}
R= \frac{L}{D},
\end{equation}
where $D$ is the expected value (with respect to the random queries) of the number of $q$-ary symbols downloaded by the user from all servers.
The capacity of XSTPIR, denoted $C_{\mbox{\tiny XSTPIR}}(N, K, X, T)$, is  the supremum of achievable rates.

Finally, note that setting $X=0$ and $T=1$ reduces the XSTPIR problem to the basic PIR setting where data storage is not secure and the user's privacy is only guaranteed if no collusion takes place among servers. Setting $X=0$ for arbitrary $T$, reduces XSTPIR to the $T$-PIR problem. Setting $T=0$ for arbitrary $X$ reduces XSTPIR to an $X$-secure storage scheme with no privacy constraint.

\section{Capacity of XSTPIR: Results and Observations}\label{sec:resobs}

The results of this work are presented in this section, followed by  some observations.
\subsection{Results}\label{sec:results}
Our first result, presented in the following theorem, is an upper bound  on the capacity of XSTPIR.
\begin{theorem}\label{theorem:converse}
\begin{equation}
C_{\mbox{\tiny XSTPIR}}(N, K, X, T)\leq \left(\frac{N-X}{N}\right)C_{\mbox{\tiny TPIR}}(N-X, K, T).\label{eq:bound}
\end{equation}
\end{theorem}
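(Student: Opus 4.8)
The plan is to reduce the XSTPIR problem to a $T$-PIR problem over $N-X$ servers, and then invoke the known $T$-PIR capacity bound. The key intuition is that the $X$-security constraint forces the scheme to ``waste'' a fraction $X/N$ of the download, because the contents of any $X$ servers are independent of the messages, so morally the useful information is carried by only $N-X$ servers' worth of communication, each of which must additionally respect $T$-privacy for a $K$-message database.

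\emph{Step 1 (Isolating the secure ``noise'').} First I would fix a good XSTPIR scheme with rate $R$, messages of size $L$, download $D$, and use $X$-security \eqref{secur} together with the independence of queries from storage \eqref{indp}. The idea is to split the answer strings $A_{[1:N]}^{[\theta]}$ conceptually into a ``message-bearing'' part and a part that depends only on the security randomness. A clean way to do this: pick any subset $\mathcal{X}\subset[1:N]$ with $|\mathcal{X}|=X$; by \eqref{secur} the stored data $S_{\mathcal{X}}$ (and hence the answers $A_{\mathcal{X}}^{[\theta]}$, which are deterministic functions of $S_{\mathcal{X}}$ and the queries) reveal nothing about $W_{[1:K]}$. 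The goal is to argue that conditioning on $A_{\mathcal{X}}^{[\theta]}, Q_{\mathcal{X}}^{[\theta]}, S_{\mathcal{X}}$, the remaining $N-X$ servers together still allow decoding of $W_\theta$ (by correctness \eqref{corr}), and the conditioned system behaves like a $T$-private scheme on $N-X$ servers for $K$ messages.

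\emph{Step 2 (Bounding the useful download from $N-X$ servers).} Averaging over the choice of $\mathcal{X}$, or equivalently using a symmetrization/averaging argument, I would show that the expected download from the ``best'' $N-X$ servers is at most $\frac{N-X}{N}D$. Then, treating those $N-X$ servers as a fresh $T$-PIR instance — with the data $S_{\mathcal{X}}$ of the removed servers playing the role of side information/common randomness that is independent of the messages, so it does not help decode and does not break $T$-privacy among any $T$ of the remaining servers — correctness forces that download to be at least $L / C_{\mbox{\tiny TPIR}}(N-X, K, T)$. Combining, $L \le \frac{N-X}{N} D \cdot C_{\mbox{\tiny TPIR}}(N-X,K,T)$, i.e. $R = L/D \le \frac{N-X}{N} C_{\mbox{\tiny TPIR}}(N-X,K,T)$, which is exactly \eqref{eq:bound}. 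The inequality $L \ge D_{\text{sub}} / C_{\mbox{\tiny TPIR}}$ would be justified by showing the conditioned scheme is a valid (possibly only ``on average'' over $\mathcal{X}$) $T$-PIR scheme, and then appealing to the converse implicit in the $T$-PIR capacity result of \cite{Sun_Jafar_TPIR}.

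\emph{Main obstacle.} The delicate point is making Step 1 rigorous: showing that after conditioning on $(A_{\mathcal{X}}^{[\theta]}, Q_{\mathcal{X}}^{[\theta]}, S_{\mathcal{X}})$ for a fixed $\mathcal{X}$, the residual answers from the other $N-X$ servers genuinely constitute a $T$-private, correct scheme with download averaging at most $\frac{N-X}{N}D$ over the choice of $\mathcal{X}$ — in particular that one cannot circumvent the bound by cleverly correlating the answers of the $\mathcal{X}$-servers with those of the others so that neither subset alone carries much. The resolution should lean on $X$-security: since $S_{\mathcal{X}}$ is independent of $W_{[1:K]}$ and the queries are independent of all storage, conditioning on the $\mathcal{X}$-side information removes exactly the ``injected noise'' dimension without destroying message content, and an entropy/counting argument over all $\binom{N}{X}$ choices of $\mathcal{X}$ converts this into the clean $\frac{N-X}{N}$ factor. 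I expect the formal argument to mirror the handshake/averaging entropy manipulations standard in PIR converses, with $X$-security used precisely at the step that drops $S_{\mathcal{X}}$ from the mutual-information terms.
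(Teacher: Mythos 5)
Your proposal reproduces the thought experiment the paper itself gives as intuition for Theorem \ref{theorem:converse}, but the step you flag as the ``main obstacle'' is precisely where the argument breaks, and your sketch does not resolve it. Invoking the $T$-PIR converse of \cite{Sun_Jafar_TPIR} as a black box on the residual $N-X$ servers is not legitimate: after conditioning on $(S_{\mathcal{X}}, Q_{\mathcal{X}}^{[\theta]}, A_{\mathcal{X}}^{[\theta]})$, the remaining servers do not store replicated messages (they store $S_{\overline{\mathcal{X}}}$, an arbitrary code mixing messages with noise that is shared with $S_{\mathcal{X}}$, and \eqref{msgfunc} does not even guarantee $W_{[1:K]}$ is recoverable from $S_{\overline{\mathcal{X}}}$ alone), and the user now holds side information $S_{\mathcal{X}}$ that is correlated with $S_{\overline{\mathcal{X}}}$ and with the answers $A_{\overline{\mathcal{X}}}^{[\theta]}$, which violates the no-prior-knowledge assumption (the analogue of \eqref{indp}) on which the $T$-PIR converse rests. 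Likewise, in the ``relaxed'' problem where all messages are handed to the servers, the common randomness shared between user and servers is correlated with queries and storage, so the relaxed problem is not an instance of $T$-PIR in the sense of \cite{Sun_Jafar_TPIR}. Saying that the clean-up ``should lean on $X$-security'' and mirror standard averaging manipulations is exactly the nontrivial content that is missing; the paper explicitly notes that formalizing this intuition is not trivial because of this correlated side information, and for that reason takes a different route.

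What the paper actually does is re-derive the induction underlying the $T$-PIR converse while carrying the conditioning on $S_{\mathcal{X}}$ through every step, rather than reducing to $T$-PIR: Lemma \ref{lemma:sprivacy} (a distributional invariance in the desired index, jointly with storage and messages, implied by \eqref{privacy} and \eqref{indp}), Lemma 2 (equation \eqref{eq:mar}, conditional independence of $A_{\mathcal{T}}^{[k]}$ from the other servers' queries), Lemma \ref{lemma:int} (which uses \eqref{secur} and \eqref{corr} to get $L \leq \sum_{n\in\overline{\mathcal{X}}} D_n - H(A_{\overline{\mathcal{X}}}^{[1]} \mid S_{\mathcal{X}}, Q_{[1:N]}^{[1]}, W_1)$), and Lemma \ref{lemma:recursive} (a recursion over message indices obtained from Han's inequality over $T$-subsets of $\overline{\mathcal{X}}$). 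Iterating the recursion and averaging over all $\binom{N}{X}$ choices of $\mathcal{X}$ produces the factor $\frac{N-X}{N}$ and the geometric series, i.e., the bound of \eqref{eq:bound}. If you want to salvage your outline, you would have to prove a conditional version of the $T$-PIR converse valid under correlated side information, which in effect amounts to redoing these lemmas; as written, the proposal has a genuine gap at its central step.
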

The proof of Theorem \ref{theorem:converse} appears in Section \ref{proof:converse}. The intuition behind Theorem \ref{theorem:converse} may be understood through a thought experiment as follows. Without loss of generality, suppose the expected number of bits downloaded from each server is the same. Now, relax the constraints so that $S_{[1:X]}$, i.e., the stored information at the first $X$ servers is made available globally (to all servers and to the user) for free, the messages $W_1, W_2, \cdots, W_K$ are made available to all servers, and the data-security constraint is  eliminated. None of  this can hurt capacity because any XSTPIR scheme from before can still be used with the relaxed constraints. So any upper bound on capacity of this relaxed setting is still an upper bound on the capacity of the original XSTPIR setting. The relaxed setting is analogous to the  $T$-PIR problem with $K$ messages and $N-X$ servers, for which we already know the optimal download per server from the existing capacity results for $T$-PIR. 
Thus, the statement of Theorem \ref{theorem:converse} follows. However, formalizing this intuition into a proof is not trivial because of the correlated side-information generated at the user and servers in the process of relaxing the constraints. Indeed, the formal proof presented in Section \ref{proof:converse} takes a less direct approach.

It turns out the bound in Theorem \ref{theorem:converse}  is quite powerful. In fact, we suspect that this bound might be tight in general. An immediate observation is that if we set $X=0$, i.e., remove the data storage security constraint, then the bound is tight because it gives us the capacity of $T$-PIR. Similarly, if we set $T=0$, i.e., the privacy constraint is removed, then the bound is also tight, and the capacity in the absence of privacy constraints is easily seen to be $C_{\mbox{\tiny XSTPIR}}(N, K, X, T=0) = 1-\frac{X}{N}$, which is achievable by a simple secret-sharing scheme. We further prove the tightness of this bound for the   cases identified in our next set of results. The first setting identifies a somewhat degenerate extreme where  it is optimal to download everything. 
\begin{theorem}\label{theorem:mds}
 If $N\leq X+T$, then\footnote{Note that $N > X$ by definition.} for arbitrary $K$,
\begin{align}
C_{\mbox{\tiny XSTPIR}}(N, K, X, T)&= \left(\frac{N-X}{N}\right)C_{\mbox{\tiny TPIR}}(N-X, K, T)\\
&=\frac{N-X}{NK}.
\end{align}
\end{theorem}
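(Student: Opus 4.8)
The plan is to prove the two matching inequalities separately, with essentially all of the work on the converse side coming for free from Theorem~\ref{theorem:converse}. For the upper bound, I would first note that the hypothesis $N\le X+T$ is exactly the condition $T\ge N-X$, which places the right-hand side of \eqref{eq:bound} in the degenerate branch of the $T$-PIR capacity formula, $C_{\mbox{\tiny TPIR}}(N-X,K,T)=1/K$. Substituting this into \eqref{eq:bound} yields $C_{\mbox{\tiny XSTPIR}}(N,K,X,T)\le\left(\frac{N-X}{N}\right)\frac{1}{K}=\frac{N-X}{NK}$, which also records the identity $\frac{N-X}{N}C_{\mbox{\tiny TPIR}}(N-X,K,T)=\frac{N-X}{NK}$ claimed in the statement (recall $N>X$, so $N-X\ge 1$). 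It then remains only to produce a scheme of rate $\frac{N-X}{NK}$.

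For achievability I would use the ``privacy is hopeless, so just download a secret-shared copy of the whole database'' idea, taking care to amortize the secret-sharing noise over as many message symbols as recoverability permits. Concretely, set $L=N-X$, choose $\mathbb{F}_q$ with $q\ge N+1$, and fix distinct nonzero $\lambda_1,\dots,\lambda_N\in\mathbb{F}_q$. Write $W_k=(W_k(1),\dots,W_k(N-X))$, draw independent and uniform noise symbols $Z_k(1),\dots,Z_k(X)$ independently across $k$, and let server $n$ store the $K$ symbols $S_n=\big(g_1(\lambda_n),\dots,g_K(\lambda_n)\big)$, where
\[
g_k(x)=\sum_{i=1}^{N-X}W_k(i)\,x^{i-1}+\sum_{j=1}^{X}Z_k(j)\,x^{N-X+j-1}
\]
is a polynomial of degree $N-1$ whose low-order coefficients carry the message and whose top $X$ coefficients carry the noise. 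Each server stores only $K$ symbols, i.e., at most $KL$. The queries carry no information: every server is simply asked to return all of its stored symbols, so $Q_n^{[\theta]}$ is a constant independent of $\theta$, and $A_n^{[\theta]}=S_n$.

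Next I would verify the four defining properties. For $X$-security, fix $\mathcal{X}$ with $|\mathcal{X}|=X$; for each $k$ and each fixed value of $W_k$, the map $(Z_k(1),\dots,Z_k(X))\mapsto(g_k(\lambda_n))_{n\in\mathcal{X}}$ is an affine bijection, since its linear part is a diagonal matrix with nonzero entries $\lambda_n^{N-X}$ times an $X\times X$ Vandermonde matrix on the distinct nodes $\{\lambda_n\}_{n\in\mathcal{X}}$; hence $(g_k(\lambda_n))_{n\in\mathcal{X}}$ is uniform and independent of $W_k$, and since the noise is independent across messages, $S_{\mathcal{X}}$ is independent of $W_{[1:K]}$, which is \eqref{secur}. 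For \eqref{msgfunc} and correctness \eqref{corr}, each $g_k$ is a degree-$(N-1)$ polynomial sampled at $N$ distinct points, hence determined by $S_{[1:N]}$, so every $W_k$ and in particular $W_\theta$ is recovered from the answers. For \eqref{indp} and $T$-privacy \eqref{privacy}, the queries are constant and the storage is a deterministic function of messages and noise, all independent of $\theta$, so all the relevant mutual informations vanish regardless of collusion size. Finally the user downloads $D=NK$ symbols and decodes $L=N-X$ desired symbols, for rate $R=(N-X)/(NK)$.

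I do not expect a genuine obstacle in this argument: the converse is a one-line specialization of Theorem~\ref{theorem:converse}, and the achievable scheme is elementary. The only places that need a bit of attention are reading off the correct branch of the $T$-PIR formula under $N\le X+T$; amortizing the $X$ noise symbols over $N-X$ message symbols (so that one obtains rate $(N-X)/(NK)$ rather than the strictly worse $1/(NK)$ that would result from protecting each message symbol separately); and checking that the polynomial evaluation code is simultaneously $X$-secure and fully recoverable, both of which reduce to invertibility of Vandermonde-type matrices and therefore hold as soon as $q$ is taken large enough.
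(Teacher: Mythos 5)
Your proposal is correct and follows essentially the same route as the paper: the converse is read off from Theorem~\ref{theorem:converse} in its degenerate branch $C_{\mbox{\tiny TPIR}}(N-X,K,T)=1/K$, and achievability is a download-everything scheme whose only nontrivial ingredient is $X$-secure storage amortizing $X$ noise symbols over $L=N-X$ message symbols per message. Your Reed--Solomon polynomial storage is just an explicit instantiation of the paper's construction (zero-padded message plus $(N,X)$ MDS-coded noise), with the same rate $\frac{N-X}{NK}$ and the same Vandermonde-type invertibility arguments for security and recoverability.
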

The proof of Theorem \ref{theorem:mds} is presented in Section \ref{proof:mds}. Since the upper bound is already provided by Theorem \ref{theorem:converse}, only a proof of achievability is needed. Furthermore, since retrieving the desired message  in this setting amounts to downloading everything stored at all servers regardless of which message is desired, the only thing required for the achievable scheme is a secure storage scheme, which is readily achieved by including $X$ uniformly random noise symbols for every $N-X$ symbols of each message.

Next, the main result of this paper is the asymptotic capacity characterization presented in the following theorem. 
\begin{theorem} \label{theorem:main} As the number of messages $K\rightarrow\infty$,  for arbitrary $N, X, T$, 
\begin{align}
\lim_{K\rightarrow\infty}C_{\mbox{\tiny XSTPIR}}(N, K, X, T)&= \lim_{K\rightarrow\infty}\left(\frac{N-X}{N}\right)C_{\mbox{\tiny TPIR}}(N-X, K, T)\\
&=
\left\{
\begin{array}{ll}
1-\left(\frac{X+T}{N}\right),& N> X+T\\
0,& N\leq X+T.
\end{array}
\right.
\end{align}
\end{theorem}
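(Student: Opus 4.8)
The plan is to establish Theorem~\ref{theorem:main} by combining the converse in Theorem~\ref{theorem:converse} with a matching achievable scheme in the limit $K\rightarrow\infty$. The case $N\leq X+T$ is already settled by Theorem~\ref{theorem:mds}, and in the case $N>X+T$ the upper bound follows immediately from Theorem~\ref{theorem:converse} together with the known fact that $\lim_{K\rightarrow\infty}C_{\mbox{\tiny TPIR}}(N-X,K,T)=1-\frac{T}{N-X}$ whenever $N-X>T$: indeed $\left(\frac{N-X}{N}\right)\left(1-\frac{T}{N-X}\right)=1-\frac{X+T}{N}$. So the entire content of the theorem lies in constructing an XSTPIR scheme whose rate approaches $1-\frac{X+T}{N}$ as $K\rightarrow\infty$.

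For the achievability I would build the scheme around \emph{cross subspace alignment}. The idea is to choose a finite field $\mathbb{F}_q$ with $q$ large enough, fix distinct constants $f_1,\dots,f_N\in\mathbb{F}_q$ (one per server), and store each message symbol in a form that superimposes (i) a Reed--Solomon-type sharing of the data with evaluation points that depend on $f_n$ and (ii) $X$ independent uniform noise symbols, so that any $X$ servers' storage is independent of the messages, guaranteeing \eqref{secur}. The queries are constructed to be $T$-private by adding $T$-out-of-$N$ secret-shared randomness, exactly as in the $T$-PIR constructions of \cite{Sun_Jafar_TPIR}, but with the query coefficients for the $\ell$-th desired symbol of the form $\beta_\ell + \alpha_n$ rather than $\beta_\ell$ alone — i.e.\ the ``cross subspace'' trick of drawing the RS parameters as linear shifts of a common variable $\alpha_n$ that also appears in the storage. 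When server $n$ computes the inner product of its stored vector with the query, the undesired-message contributions and the security/privacy noise align into a bounded-dimensional subspace (its dimension controlled by $X+T$), while the desired symbols remain separable because their coefficients $\{\beta_\ell+\alpha_n\}_n$ are distinct across servers (a Cauchy/Vandermonde-type invertibility argument). I would then count: with $L$ desired symbols retrieved and roughly $L\cdot\frac{X+T}{N-X-T}$ ``overhead'' symbols per block, iterating the batching construction over $K$ messages in the style of \cite{Sun_Jafar_PIR,Sun_Jafar_TPIR} drives the normalized download $D/L$ to $\frac{N}{N-X-T}$, i.e.\ rate $\to 1-\frac{X+T}{N}$.

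Concretely the steps, in order, are: (1) dispose of the $N\leq X+T$ case by citing Theorem~\ref{theorem:mds}; (2) prove the upper bound for $N>X+T$ by plugging the asymptotic $T$-PIR capacity into Theorem~\ref{theorem:converse}; (3) specify the secure storage code with one common subspace variable $\alpha_n$ per server and verify $X$-security via the independence of any $X$ noise shares; (4) specify the query structure, verify $T$-privacy by the secret-sharing of the query randomness and \eqref{indp}; (5) verify correctness by showing the relevant coefficient matrix (desired symbols versus interference subspace) is invertible over $\mathbb{F}_q$ for a suitable choice of the constants $f_n,\alpha_n,\beta_\ell$, which is where the ``cross subspace'' dependence is essential — too little dependence blows up the interference dimension, too much (e.g.\ identical RS codes) collapses the desired symbols; (6) compute the rate of one block and then iterate over message batches à la \cite{Sun_Jafar_PIR} to push the rate to $1-\frac{X+T}{N}$ as $K\rightarrow\infty$.

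The main obstacle I anticipate is step~(5): certifying that the desired symbols remain resolvable while the interference stays confined to exactly the right number of dimensions. This is a simultaneous condition on a structured (Cauchy-like, since entries look like $1/(f_n-\text{something})$ or products of such) matrix, and one must show a single assignment of field constants makes all the required submatrices full rank; this typically forces $q$ to be large but finite, and the bookkeeping of which terms ``align'' is delicate. A secondary obstacle is making the asymptotic batching argument rigorous — ensuring the per-block overhead really is $\Theta\!\left(\frac{X+T}{N-X-T}L\right)$ with no lower-order terms that survive the $K\rightarrow\infty$ limit — but this is essentially the same recursive accounting as in \cite{Sun_Jafar_PIR,Sun_Jafar_TPIR} and should go through once the single-block scheme is in hand.
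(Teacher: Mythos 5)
Your overall architecture is the same as the paper's: dispose of $N\leq X+T$ via Theorem \ref{theorem:mds}, get the upper bound for $N>X+T$ from Theorem \ref{theorem:converse} together with $\lim_{K\rightarrow\infty}C_{\mbox{\tiny TPIR}}(N-X,K,T)=1-\frac{T}{N-X}$, and prove achievability by a cross-subspace-aligned scheme. The gap is in the achievability as you specify it. The decisive ingredient is not merely that the storage and query RS parameters are shifts $l+\alpha_n$ of a common per-server variable; it is that the query block associated with the $l$-th symbol is additionally scaled by $\frac{\Delta_n}{l+\alpha_n}=\prod_{i\in[1:L],\,i\neq l}(i+\alpha_n)$. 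With your queries of the form ${\bf Q}_\theta+(\beta_\ell+\alpha_n){\bf Z}'_{\ell 1}+\cdots$ and no such cross-multiplication, every desired symbol appears along the all-ones vector, so the $L$ desired symbols align with \emph{each other} and are not resolvable (this is exactly the obstruction discussed in Section \ref{sec:cross}); conversely, decoupling the codes to separate the desired symbols inflates the interference beyond $X+T$ dimensions. The $\Delta/(l+\alpha)$ weighting is what simultaneously places the $l$-th desired symbol along $\vect{\Delta}\circ\vect{(l+\alpha)^{-1}}$ (distinct across $l$) and collapses all $L(X+1)(T+1)-L$ interference terms into the common span of $\vect{\Delta},\vect{\Delta}\circ\vect{\alpha},\dots,\vect{\Delta}\circ\vect{\alpha^{X+T-1}}$ via the binomial expansion of $\Delta(l+\alpha)^i$. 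Your step (5) then needs the invertibility of the resulting $N\times N$ decoding matrix; the paper's Lemma \ref{disinv} proves this by a polynomial root-counting argument valid for \emph{any} distinct $\alpha_n$ avoiding $\{-1,\dots,-L\}$, with $q\geq L+N$ and characteristic exceeding $L-1$ — no large-field existence/Schwartz--Zippel-style argument over "all required submatrices" is needed, so the field-size concern you raise is much milder than you anticipate.

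Your step (6) is also off target: no Sun--Jafar-style batching or recursion over $K$ is used or needed. The one-shot scheme downloads exactly one $q$-ary symbol from each of the $N$ servers and recovers $L=N-X-T$ desired symbols, so its rate is exactly $1-\frac{X+T}{N}$ for \emph{every} $K$; the limit $K\rightarrow\infty$ enters the theorem only through the converse, which decreases to this value. Attempting to layer the scheme over message indices to "push" the rate would amount to chasing the finite-$K$ upper bound, which the paper does not achieve in general (only for $N=3$, $X=T=1$, via the separate field-size device of Theorem \ref{theorem:N3}), and your anticipated bookkeeping of lower-order overhead terms disappears once the single-block scheme is specified correctly.
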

The proof of Theorem \ref{theorem:main} appears in Section \ref{proof:main}. Theorem \ref{theorem:main} is   significant  for two reasons. First, asymptotic capacity results are particularly relevant for PIR problems because the capacity approaches its asymptotic value extremely quickly ---  the gap is negligible even for moderate values of $K$, and $K$ is typically a large value. Second, the asymptotic capacity result showcases a new idea, \emph{cross subspace alignment}, that is interesting by itself.

\bigskip

Insights from the asymptotically optimal scheme allow us to settle the exact capacity of  XSTPIR  with $X=T=1$, $N=3$ and arbitrary $K$.
\begin{theorem}\label{theorem:N3}
 If the number of servers, $N=3$, and $X=T=1$, then for arbitrary number of messages, $K$,
\begin{align}
C_{\mbox{\tiny XSTPIR}}(N=3, K, X=1, T=1)&= \left(\frac{N-X}{N}\right)C_{\mbox{\tiny TPIR}}(N-X, K, T)\\
&=\frac{2}{3}\left(1+\frac{1}{2}+\frac{1}{2^2}+\cdots+\frac{1}{2^{K-1}}\right)^{-1}.
\end{align}
\end{theorem}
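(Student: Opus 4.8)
Since Theorem~\ref{theorem:converse} already gives $C_{\mbox{\tiny XSTPIR}}(3,K,1,1)\leq \tfrac23\,C_{\mbox{\tiny TPIR}}(2,K,1)=\tfrac23\big(1+\tfrac12+\cdots+\tfrac1{2^{K-1}}\big)^{-1}$, which coincides with the claimed value, the whole task is to construct a scheme meeting this rate. The plan is to build that scheme by grafting the cross subspace alignment idea underlying Theorem~\ref{theorem:main} onto the capacity-achieving $T=1$ PIR scheme of \cite{Sun_Jafar_TPIR} for $N-X=2$ servers and $K$ messages. Intuitively, the three secure servers should emulate, via CSA, a noise-free pair of servers running the optimal two-server $T$-PIR scheme, with the ``extra'' server supplying exactly the redundancy needed to cancel the security noise.

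Concretely I would: (1) work over a large field $\mathbb{F}_q$ and fix distinct server labels $\alpha_1,\alpha_2,\alpha_3$; (2) after a suitable symbol extension (the two-server $T$-PIR scheme uses $L=2^{K-1}$, and a further constant-factor extension may be needed to pair symbols across servers), store every message block in CSA form, i.e., as Reed--Solomon evaluations whose code parameters are drawn as linear combinations from a common subspace (the $\beta=1+\alpha,\ \gamma=2+\alpha,\ldots$ construction of Section~\ref{sec:resobs}), with one uniformly random symbol appended per coded symbol to enforce $X=1$ security; (3) form the queries from the randomization that makes the two-server $T$-PIR scheme $T=1$-private (a uniformly random relabeling of the $K$ messages together with i.i.d.\ random scalars), now additionally folded into the subspace parameters so that privacy noise and security noise are forced to occupy the same aligned dimensions; (4) have the servers answer following the round-by-round schedule of the two-server $T$-PIR scheme --- message symmetry within each round, and in each later round exploiting as side information the interference already reconstructed --- with each downloaded equation being a CSA combination from which the user, after collecting all three answers, strips off the aligned interference and inverts a Cauchy-type linear system to extract fresh desired symbols.

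Granting such a construction, the three required properties split cleanly: $X=1$ security holds because each stored symbol is one uniformly random pad away from independence of the data; $T=1$ privacy holds because each server's query has the same marginal for every $\theta$, inherited from the two-server $T$-PIR randomization; and correctness reduces to two algebraic facts --- that the interference seen in each round lies in the span of terms already downloaded (so it cancels) and that the remaining CSA system is non-singular for generic $\alpha_n$ once $q$ is large enough (a Vandermonde/Cauchy non-singularity statement). The rate is then bookkeeping: one designs the schedule so that the per-server download equals that of the two-server scheme, whose total download is $L\big(1+\tfrac12+\cdots+\tfrac1{2^{K-1}}\big)$, so here $D=\tfrac32 L\big(1+\tfrac12+\cdots+\tfrac1{2^{K-1}}\big)$ and $R=L/D=\tfrac23\big(1+\tfrac12+\cdots+\tfrac1{2^{K-1}}\big)^{-1}$.

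The hard part, I expect, is step (4): the finite-$K$ $T$-PIR download schedule is a delicate recursion built on message symmetry and on reusing previously seen interference, whereas CSA imposes rigid algebraic relations among the Reed--Solomon parameters in order to collapse interference into few dimensions while keeping the desired symbols resolvable. One must choose the subspace parameters, the number of rounds, the block length, and --- crucially --- which interference terms to align versus which to decode and subtract (the ``aligning versus decoding noise'' tension noted in Section~\ref{sec:resobs}) so that in \emph{every} round both the alignment condition and the non-singularity condition hold simultaneously, and so that the total download is not inflated beyond $\tfrac32 L\big(1+\tfrac12+\cdots+\tfrac1{2^{K-1}}\big)$. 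Verifying these conditions is the technical core; the security and privacy checks are then short. The restriction to $N=3$, $X=T=1$ is what keeps this explicit construction manageable, since both the CSA structure and the recursion simplify at these small parameters.
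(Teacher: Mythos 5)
Your converse step is fine (it is exactly Theorem~\ref{theorem:converse}), but the achievability part is not a proof: everything rests on step (4), which you yourself flag as "the technical core" and leave entirely unverified. The claim that one can run the finite-$K$ two-server $T$-PIR recursion of \cite{Sun_Jafar_TPIR} "through" a CSA-coded secure storage layer, with per-server download equal to that of the noise-free two-server scheme, is precisely what would need to be constructed and checked; it is not mere bookkeeping. The obstruction is concrete: under $X=1$ security every retrieved linear combination drags in extra interference terms of the form (storage noise)$\times$(query) and (message)$\times$(privacy noise), so each "round" of the $T$-PIR schedule generates new interference that the two-server recursion never has to pay for, and it is not established that these can always be aligned into previously downloaded dimensions while the desired symbols stay resolvable and the total download stays at $\tfrac32 L\bigl(1+\tfrac12+\cdots+\tfrac1{2^{K-1}}\bigr)$. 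As written, "granting such a construction" assumes the theorem's hard half.

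For contrast, the paper's achievability goes a different and much shorter way, and the ingredient it needs is one your outline does not contain: field size. It keeps $L=1$ (no symbol extension, no rounds) and observes (Lemma~\ref{lemma:finite}) that the asymptotic scheme's query to each server is uniform, hence all-zero with probability $1/q^{KL}$, in which case nothing is downloaded; this already gives rate $\tfrac13\bigl(1-1/q^{K}\bigr)^{-1}$ for $N=3$, $X=T=1$. The bound of Theorem~\ref{theorem:converse} is met exactly iff $q=2$, so the real work is redesigning the scheme over $\mathbb{F}_2$, where the $\alpha_n$-based CSA construction cannot live (it needs $q\geq L+N$): the paper replaces it by storage ${\bf W}+{\bf Z}$, ${\bf W}+{\bf Z}{\bf B}$, ${\bf Z}$ and queries ${\bf Z}'$, ${\bf Q}_\theta+{\bf Z}'$, $({\bf I}_K+{\bf B}){\bf Z}'+{\bf B}{\bf Q}_\theta$, with ${\bf B}$ and ${\bf I}_K+{\bf B}$ both invertible (Lemma~\ref{lemma:bk}), so that the three one-bit answers sum to $W_\theta$ and the expected download $3(1-2^{-K})$ bits yields exactly $\tfrac23\bigl(1+\tfrac12+\cdots+\tfrac1{2^{K-1}}\bigr)^{-1}$. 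If you want to salvage your route, you would have to either carry out the round-by-round CSA schedule in full (existence, alignment, non-singularity, and the exact download count), or pivot to the zero-query/field-size mechanism, which is what actually closes the gap here.
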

Theorem \ref{theorem:N3} is proved in Section \ref{proof:N3}. The capacity achieving scheme introduces a new insight. For almost all  PIR settings studied so far, asymptotic capacity achieving schemes have been found that send a uniformly random query vector to each server and download a product of the query vector and  information stored at the server.  Suppose the query vector is uniform over $\mathbb{F}_q^M$. Then with probability $1/q^M$ the query vector is all zero, and the scheme requests nothing from the server. Typically $M$ depends on the number of messages $K$.  As $K$ approaches infinity the probability of requesting nothing approaches zero, so this does not help in the asymptotic sense. However, if the same scheme is used for finite $K$, then $M$ is also finite, $1/q^M>0$, and the average download is reduced by the factor $(1-1/q^M)$, which improves the achieved rate of the scheme. It is remarkable that the rate achieved in this way depends on the field size. This idea is essential to the capacity achieving scheme for Theorem \ref{theorem:N3}. 

Next we present some observations that place our results in perspective.
 
\subsection{Observations}\label{sec:insights}
\subsubsection{Alignment of Noise and Interference}
\begin{figure}[h]
\centering
\includegraphics[width=8cm]{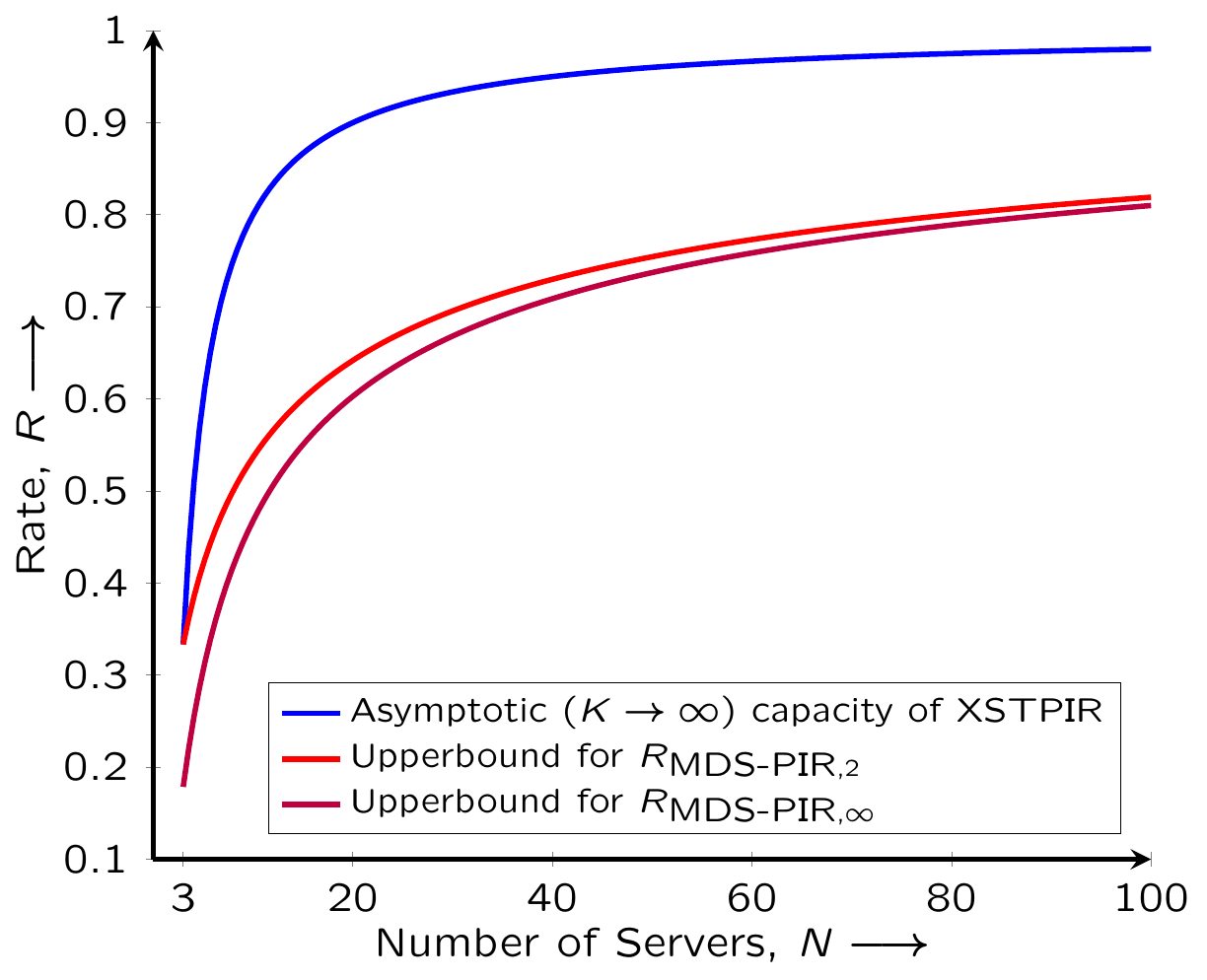}
\caption{\small \it Suboptimality of the rate achieved by the $X=1$ secure MDS-PIR alternative that allows the user to decode noise relative to the rate achieved with the asymptotically optimal XSTPIR scheme where the noise is aligned with other interference.}
\label{comp}
\end{figure}

Consider the simplest non-trivial setting for XSTPIR, where $X=1, T=1$, and the number of servers, $N\geq 3$. 
A natural idea for providing $X=1$ secure storage is to include $1$ independent uniformly random noise symbol along with the $L$ symbols of each message, creating a new message with $M=L+1$ symbols. This new message is stored  across $N$ servers according to an $(N, M)$ MDS code, essentially storing a linear combination of the $M$ message symbols at each server, where the coefficients for the noise symbol at each server must be non-zero. Capacity is known for PIR with coded storage (MDS-PIR \cite{Banawan_Ulukus}), and one might wonder if such an MDS-PIR scheme might suffice to achieve capacity with secure storage. It is not difficult to see that the best rate achievable with such an MDS-PIR scheme is
\begin{eqnarray}
R_{\text{MDS-PIR}}=\frac{M-1}{M}\left(1+\left(\frac{M}{N}\right)+\dots+\left(\frac{M}{N}\right)^{K-1}\right)^{-1}.
\end{eqnarray}
The  $\frac{M-1}{M}$ penalty appears because one of the $M$ symbols of the \emph{decoded} message is the noise symbol. As $K\rightarrow \infty$, the rate approaches $R_{\text{MDS-PIR},\infty}=\frac{M-1}{M}\left(1-\left(\frac{M}{N}\right)\right)$. This expression takes its maximum value when $M=\sqrt{N}$, so it can be bounded as,
\begin{eqnarray}
R_{\text{MDS-PIR},\infty}&\leq&\frac{\sqrt{N}-1}{\sqrt{N}}\left(1-\left(\frac{\sqrt{N}}{N}\right)\right)=\left(1-\frac{1}{\sqrt{N}}\right)^2.
\end{eqnarray}
Note that this expression matches the achievable rate bound of \cite{Yang_Shin_Lee}. However, it is strictly smaller than, $1-2/N$, the asymptotic capacity of XSTPIR for this setting.  Evidently, the natural MDS-PIR solution, and the secret sharing based scheme of \cite{Yang_Shin_Lee},   are asymptotically suboptimal. In fact, the MDS-PIR solution falls short of the asymptotic ($K\rightarrow\infty$) capacity of XSTPIR, even if the MDS-PIR scheme is only required to deal with $K=2$ messages. Denoting the corresponding rate of the MDS-PIR scheme as $R_{\text{MDS-PIR},2}$, we have, 
\begin{align}
R_{\text{MDS-PIR},2}
&\leq\frac{\sqrt{N+1}}{\sqrt{N+1}+1}\left(1+\left(\frac{\sqrt{N+1}+1}{N}\right)\right)^{-1} \leq 1-\frac{2}{N}.
\end{align}

Figure \ref{comp} shows that the gap between the $X=1$ secure MDS-PIR alternative and the XSTPIR scheme is significant. Intuitively, the reason for this gap is the following. The secure MDS-PIR alternative allows the user to \emph{decode} the artificial noise symbol which is added to the message to guarantee security. However, in the XSTPIR scheme, the user is able to decode only  the desired message, and not the noise protecting it. In fact this noise is \emph{aligned} with other interfering symbols, e.g., the noise terms protecting other message symbols, thus creating a more efficient solution. Incidentally, the alignment of noise provides another unexpected benefit, in some cases it automatically makes the scheme symmetrically secure, as explained next.

\subsubsection{Symmetric Security: Capacity of Sym-XSPIR}
Let us fix $T=1$, thereby relaxing the $T$-privacy constraint to its minimum value for PIR. Now, suppose in addition to $X$-secure storage, we also include the so called `symmetric' security constraint, that the user should learn nothing about the data besides his desired message, i.e., 
\begin{align}
\text{[Sym-Security] }&& I(W_{[1:K]}; A_{[1:N]}^{[\theta]}\mid Q_{[1:N]}^{[\theta]}, W_\theta, \theta)&=0.
\end{align}
Capacity of the basic $(X=0, T=1, K>1)$ Sym-PIR setting  was shown in \cite{Sun_Jafar_SPIR} to be 
\begin{align}
C_{\mbox{\tiny Sym-PIR}}(K,N)=1-\frac{1}{N}.
\end{align}
 Note that there is a loss of capacity due to the additional symmetric security constraint. Furthermore,  the capacity without the symmetric security constraint depends on the number of messages $K$ while the capacity with the symmetric security constraint does not. 

XSTPIR with the symmetric security constraint and with $T=1$, in short the Sym-XSPIR setting (note that we drop the $T$ because $T=1$ is the degenerate case for $T$-privacy), reveals a surprising aspect of our XSTPIR schemes, that imposing the symmetric security constraint does not affect\footnote{For $T>1$ our XSTPIR schemes are not symmetrically secure.} our capacity results for $T=1$. This is made explicit in the following corollaries for Sym-XSPIR, that  match  the corresponding theorems for XSTPIR.

\begin{corollary}\label{cor:converse}
\begin{equation}
C_{\mbox{\tiny Sym-XSPIR}}(N, K, X)\leq \left(\frac{N-X}{N}\right)C_{\mbox{\tiny PIR}}(N-X, K).
\end{equation}
\end{corollary}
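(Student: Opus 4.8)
The plan is to reduce the Sym-XSPIR converse to the XSTPIR converse of Theorem \ref{theorem:converse}, and then observe that the extra symmetric-security constraint lets us replace $C_{\mbox{\tiny TPIR}}(N-X, K, T)$ by its more restrictive symmetric counterpart, which for $T=1$ is exactly $C_{\mbox{\tiny PIR}}(N-X,K)$. Concretely, I would run the same relaxation argument sketched after Theorem \ref{theorem:converse}: give the stored contents $S_{[1:X]}$ of the first $X$ servers to everyone for free, reveal all messages $W_{[1:K]}$ to the servers, and delete the $X$-security constraint \eqref{secur}. None of these steps can decrease the achievable rate, so any upper bound on the relaxed problem bounds $C_{\mbox{\tiny Sym-XSPIR}}(N,K,X)$. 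The point is that the relaxation preserves the symmetric-security constraint: correctness gives $H(W_\theta\mid A_{[1:N]}^{[\theta]},Q_{[1:N]}^{[\theta]},\theta)=0$, and Sym-Security gives $I(W_{[1:K]};A_{[1:N]}^{[\theta]}\mid Q_{[1:N]}^{[\theta]},W_\theta,\theta)=0$, neither of which involves $S_{[1:X]}$ or the deleted security constraint in a way that is affected by the relaxation. Hence the relaxed setting is precisely a symmetric-PIR problem with $K$ messages and $N-X$ servers and privacy threshold $T=1$ (i.e. no collusion), whose capacity is $C_{\mbox{\tiny Sym-PIR}}(K,N-X)=1-\tfrac{1}{N-X}$ by \cite{Sun_Jafar_SPIR}.

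The remaining step is the normalization that converts "rate of the relaxed $(N-X)$-server problem" into "rate of the original $N$-server problem." As in the proof of Theorem \ref{theorem:converse}, assume without loss of generality that the expected download $D/N$ is the same from every server; the relaxed scheme only ever uses the $N-X$ servers whose storage was not given away for free (the first $X$ contribute nothing new once $S_{[1:X]}$ is public and the messages are known to all), so the effective download in the relaxed problem is $\tfrac{N-X}{N}D$. Therefore $L/\big(\tfrac{N-X}{N}D\big)\le C_{\mbox{\tiny Sym-PIR}}(N-X,K)$, i.e. $R=L/D\le \tfrac{N-X}{N}\,C_{\mbox{\tiny Sym-PIR}}(N-X,K)=\tfrac{N-X}{N}\,C_{\mbox{\tiny PIR}}(N-X,K)$ — wait, more precisely it equals $\tfrac{N-X}{N}\big(1-\tfrac{1}{N-X}\big)$, and since the Sym-PIR capacity $1-\tfrac{1}{N-X}$ coincides with $C_{\mbox{\tiny PIR}}(N-X,K)$ only in the limit, one should simply state the bound in terms of $C_{\mbox{\tiny PIR}}(N-X,K)$, which upper-bounds $C_{\mbox{\tiny Sym-PIR}}(N-X,K)$; that inequality is all that is needed to write the corollary as stated.

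I expect the main obstacle to be the same one flagged for Theorem \ref{theorem:converse}: the informal relaxation argument hides correlated side information created at the user and servers when $S_{[1:X]}$ is revealed, so a fully rigorous proof should instead mirror the formal entropy-inequality argument of Section \ref{proof:converse} rather than argue by "reduction to a known setting." The modification needed is minor — everywhere the XSTPIR converse invokes correctness to bound a conditional entropy of downloads, one additionally invokes Sym-Security to conclude that the downloads carry no information about $W_{[1:K]}$ beyond $W_\theta$, which is exactly the extra ingredient used in \cite{Sun_Jafar_SPIR} to get the $1-\tfrac{1}{N}$ bound in place of the $K$-dependent PIR bound. So the cleanest route is: copy the proof of Theorem \ref{theorem:converse}, and at the single step where the $T$-PIR converse bound of \cite{Sun_Jafar_TPIR} is invoked for the $(N-X)$-server residual problem, invoke the Sym-PIR converse of \cite{Sun_Jafar_SPIR} instead; the rest goes through verbatim.
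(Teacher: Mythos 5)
Your route has a genuine flaw: the intermediate bound you derive is false. If your relaxation really did reduce Sym-XSPIR to a symmetric PIR problem with $N-X$ servers, then invoking the Sym-PIR converse of \cite{Sun_Jafar_SPIR} would give $C_{\mbox{\tiny Sym-XSPIR}}(N,K,X)\leq \frac{N-X}{N}\left(1-\frac{1}{N-X}\right)$ for every $K$. But this contradicts the paper's own achievability: by Corollary \ref{cor:N3}, for $N=3$, $X=1$, $K=2$ the capacity is $\frac{2}{3}\left(1+\frac{1}{2}\right)^{-1}=\frac{4}{9}$, which exceeds $\frac{2}{3}\cdot\frac{1}{2}=\frac{1}{3}$. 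The reduction fails precisely at the step you claim is harmless: once $S_{[1:X]}$ is handed to the user, the induced scheme need not satisfy the symmetric-security constraint of the residual $(N-X)$-server problem, because the original constraint only controls $I(W_{[1:K]};A_{[1:N]}^{[\theta]}\mid Q_{[1:N]}^{[\theta]},W_\theta,\theta)$ and says nothing once you also condition on (or reveal) $S_{[1:X]}$, which contains exactly the noise that shields the undesired messages. So symmetric security does \emph{not} survive the relaxation, and the Sym-PIR converse cannot be invoked for the residual problem. Your suggested formalization has the same problem in a second form: the proof of Theorem \ref{theorem:converse} is a self-contained recursion (Lemmas \ref{lemma:int} and \ref{lemma:recursive}), not an invocation of the $T$-PIR converse as a black box, so there is no ``single step'' to swap out; and any modification that imports the $1-\frac{1}{N-X}$ Sym-PIR bound would again prove the false inequality above.

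The corollary itself needs none of this machinery, and the paper's proof is a one-liner: imposing the symmetric-security constraint can only shrink the set of feasible schemes, so $C_{\mbox{\tiny Sym-XSPIR}}(N,K,X)\leq C_{\mbox{\tiny XSTPIR}}(N,K,X,T=1)$, and Theorem \ref{theorem:converse} with $T=1$ (for which $C_{\mbox{\tiny TPIR}}(N-X,K,1)=C_{\mbox{\tiny PIR}}(N-X,K)$) gives the stated bound. In short: drop the attempt to exploit symmetric security in the converse here — for $T=1$ it buys nothing (that is exactly the surprise the paper highlights), and trying to cash it in via a Sym-PIR reduction produces a bound that the paper's schemes provably beat.
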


\begin{corollary}\label{cor:mds}
 If  $N=X+1$, then\footnote{Note that since $X< N$ by definition, and $T=1$ for XSPIR,  the condition $N\leq X+T$ is equivalent to $N=X+1$.} for arbitrary $K$,
\begin{align}
C_{\mbox{\tiny Sym-XSPIR}}(N, K, X)&= \left(\frac{N-X}{N}\right)C_{\mbox{\tiny PIR}}(N-X, K)\\
&=\frac{1}{NK}.
\end{align}
\end{corollary}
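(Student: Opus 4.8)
The statement to prove is Corollary \ref{cor:mds}: when $N = X+1$ and $T=1$, the capacity of Sym-XSPIR equals $\frac{1}{NK}$.

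\medskip

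The plan is to obtain this as an immediate consequence of the general machinery already in place, so that essentially nothing new needs to be done beyond checking that the symmetric security constraint does not break anything. First I would establish the converse direction. Corollary \ref{cor:converse} already gives $C_{\mbox{\tiny Sym-XSPIR}}(N,K,X) \leq \left(\frac{N-X}{N}\right) C_{\mbox{\tiny PIR}}(N-X,K)$; setting $N = X+1$ makes $N-X = 1$, and $C_{\mbox{\tiny PIR}}(1,K) = 1/K$ from the basic PIR capacity formula (a single server must download all $K$ messages, or equivalently $(1+1+\cdots+1)^{-1} = 1/K$). Hence $C_{\mbox{\tiny Sym-XSPIR}}(N,K,X) \leq \frac{1}{NK}$, which matches the claimed value. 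Alternatively, since any Sym-XSPIR scheme is in particular an XSTPIR scheme with $T=1$, the upper bound also follows directly from Theorem \ref{theorem:mds} specialized to $T=1$, $N = X+1$.

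\medskip

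For achievability, I would reuse the scheme underlying Theorem \ref{theorem:mds} and verify it is symmetrically secure when $T=1$. Recall that scheme is trivial: append $X$ uniformly random noise symbols to every $N-X = 1$ message symbol, encode each resulting length-$N$ block with an $(N,N)$ MDS code (equivalently, store one $\mathbb{F}_q$ symbol of a full-rank linear combination at each server), and retrieval simply downloads everything from all $N$ servers. The rate is $\frac{L}{NL} \cdot \frac{N-X}{N}$ accounting for the noise overhead — more directly, each message has $L$ symbols, one downloads $N$ times $L/(N-X) = L$ coded symbols per server when $N-X=1$, giving total download $NL$ and rate $\frac{L}{NL} = \frac{1}{N}$ per message, but since all $K$ messages are simultaneously decoded the per-desired-message accounting gives $R = \frac{1}{NK}$, matching Theorem \ref{theorem:mds} at $N-X=1$. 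The point to check is Sym-Security: since the user downloads the entire storage $S_{[1:N]}$, and $S_{[1:N]}$ determines all messages $W_{[1:K]}$ (by \eqref{msgfunc}), the user does learn all messages, so naively symmetric security seems violated. The resolution is that when $N = X+1$, $T=1$ is forced and $K$ can be any value, but the symmetric security constraint combined with downloading everything is only compatible if there is effectively one message — so I expect the correct achievable scheme here is slightly different: partition the download so the user decodes only $W_\theta$. This is the main subtlety.

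\medskip

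So the hard part — really the only nontrivial part — will be constructing a symmetrically secure achievable scheme meeting rate $\frac{1}{NK}$ rather than blindly citing Theorem \ref{theorem:mds}. The natural fix, which I would carry out, is: for each $k \in [1:K]$ and each of the $L$ symbols of $W_k$, store across the $N$ servers an $(N,N)$ MDS-coded block consisting of that symbol plus $X = N-1$ independent noise symbols, where crucially the noise is resampled independently for each $k$; the server's answer to query $Q_n^{[\theta]}$ is then the inner product of its stored vector with a query vector that is supported only on the coordinates corresponding to $W_\theta$'s blocks and is chosen (as in the standard secret-sharing / XSTPIR query construction) so that across all $N$ servers the $N$ answers for a given symbol of $W_\theta$ are a full-rank system in that symbol plus the $X$ noise terms, hence invertible, while for $k \neq \theta$ the query places zero weight and nothing about $W_k$ is revealed — giving Sym-Security — and the $T=1$ privacy holds because each individual $Q_n^{[\theta]}$ is marginally independent of $\theta$ by the secret-sharing randomization. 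The download is $L \cdot N$ symbols (N servers, $L$ symbols each) per desired message out of the $K$ messages, but since the query structure for $W_\theta$ already occupies the full $N$-dimensional space per symbol there is no room to retrieve other messages, so $D = NKL \cdot \frac{1}{K}$... I would reconcile the exact bookkeeping to land on $D = NKL$ against $L$ desired symbols when viewed appropriately, i.e.\ $R = \frac{1}{NK}$. In any case, once the scheme is pinned down, checking \eqref{secur}, \eqref{privacy}, \eqref{corr}, and Sym-Security is routine linear algebra over $\mathbb{F}_q$, and combined with the converse above this establishes the corollary.
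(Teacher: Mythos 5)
Your converse is fine and matches the paper: Corollary \ref{cor:converse} with $N-X=1$ gives $C_{\mbox{\tiny Sym-XSPIR}}\leq \frac{1}{N}\cdot\frac{1}{K}$. You also correctly spot the real issue --- the download-everything scheme of Theorem \ref{theorem:mds} lets the user decode all of $W_{[1:K]}$, so it is not symmetrically secure and a new scheme is needed. But your proposed fix has a genuine gap. A query vector that ``is supported only on the coordinates corresponding to $W_\theta$'s blocks and places zero weight'' on $k\neq\theta$ is itself seen by the server, and its support reveals $\theta$; secret-sharing randomization of the nonzero coefficients does not hide which coordinates are nonzero, so $T=1$ privacy fails. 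Nor can you repair this by masking the query with noise over all $KL$ coordinates as in the asymptotic scheme, because that construction needs $N>X+T$, which is exactly what fails when $N=X+1$. The unresolved bookkeeping is a symptom of the same problem: one inner product per server gives download about $N$ symbols for $L$ desired symbols, i.e.\ rate near $1/N$, which would contradict the $1/(NK)$ converse you just proved --- so no choice of coefficients can make that architecture simultaneously correct and private.

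The paper's achievability argument is structurally different and exploits unconstrained (super-linear) storage. With $L=1$, it introduces noise symbols $Z_{x,k,m}$ for $x\in[1:X]$, $k,m\in[1:K]$; servers $1,\dots,X$ store pure noise ($S_n=\{Z_{n,k,m}\}$, $K^2$ symbols each), and server $N$ stores the $K^2$ masked symbols $W_k+\sum_{x}Z_{x,k,m}$. The user privately picks $m_o$ uniform in $[1:K]$, downloads $\{Z_{n,k,m_o}\}_{k}$ from each noise server and $\{W_k+\sum_x Z_{x,k,m_o-\theta+k}\}_k$ from server $N$ (indices mod $K$), for a total of $NK$ downloaded symbols. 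The cyclic shift $m_o-\theta+k$ does all the work at once: for $k=\theta$ the mask index is $m_o$, matching the downloaded noise, so $W_\theta$ is recovered; each server's request pattern is distributed independently of $\theta$ because $m_o$ is uniform, giving $T=1$ privacy; and for $k\neq\theta$ the mask index $m_o-\theta+k\neq m_o$, so the masking noise is never revealed, giving symmetric security. This indexing/offset mechanism, and the willingness to store $K^2$ symbols per server, is the missing idea in your proposal; without it (or something playing the same role) the achievability half of the corollary is not established.
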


\begin{corollary} \label{cor:main} As the number of messages $K\rightarrow\infty$,  for arbitrary $N, X$, 
\begin{align}
\lim_{K\rightarrow\infty}C_{\mbox{\tiny Sym-XSPIR}}(N, K, X)&= \lim_{K\rightarrow\infty}\left(\frac{N-X}{N}\right)C_{\mbox{\tiny PIR}}(N-X, K)\\
&=
\left\{
\begin{array}{ll}
1-\left(\frac{X+1}{N}\right),& N> X+1\\
0,& N\leq X+1.
\end{array}
\right.
\end{align}
\end{corollary}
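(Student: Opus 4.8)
The plan is to split into the two regimes $N=X+1$ and $N>X+1$ and, in each, reduce to results already established in the excerpt. For $N=X+1$ the claim is immediate: since $X<N$ forces $N\ge X+1$, the branch ``$N\le X+1$'' here means exactly $N=X+1$, and Corollary~\ref{cor:mds} already gives $C_{\mbox{\tiny Sym-XSPIR}}(N,K,X)=\frac{1}{NK}$ for every finite $K$, whose limit as $K\to\infty$ is $0$. So all the real work lies in the regime $N>X+1$, where I must produce matching upper and lower bounds on $\lim_{K\to\infty}C_{\mbox{\tiny Sym-XSPIR}}(N,K,X)$.

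For the converse I would invoke Corollary~\ref{cor:converse}, $C_{\mbox{\tiny Sym-XSPIR}}(N,K,X)\le\frac{N-X}{N}\,C_{\mbox{\tiny PIR}}(N-X,K)$, and pass to the limit using the basic PIR capacity formula: since $N-X\ge 2$, the geometric series gives $\lim_{K\to\infty}C_{\mbox{\tiny PIR}}(N-X,K)=1-\frac{1}{N-X}$, hence $\lim_{K\to\infty}C_{\mbox{\tiny Sym-XSPIR}}(N,K,X)\le\frac{N-X}{N}\bigl(1-\frac{1}{N-X}\bigr)=1-\frac{X+1}{N}$. (Corollary~\ref{cor:converse} itself is just Theorem~\ref{theorem:converse} at $T=1$, since Sym-XSPIR is XSTPIR with $T=1$ plus the extra symmetric-security constraint, which can only decrease the rate.)

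For achievability I would not build a new scheme; instead I would take the asymptotically optimal XSTPIR scheme underlying Theorem~\ref{theorem:main}, specialized to $T=1$. Theorem~\ref{theorem:main} already certifies that this scheme is $X$-secure, $T$-private, correct, and achieves rate approaching $1-\frac{X+1}{N}$ as $K\to\infty$, so the only thing to verify is the symmetric-security condition $I(W_{[1:K]};A_{[1:N]}^{[\theta]}\mid Q_{[1:N]}^{[\theta]},W_\theta,\theta)=0$. The structural fact to exploit is that, conditioned on the queries and on $\theta$, each server's answer is linear in its stored symbols and decomposes as a \emph{desired part} --- a function of $W_\theta$, the queries, and the security noise, which the user decodes --- plus an \emph{interference part} into which the non-desired messages $W_k$ ($k\ne\theta$) are mixed together with the independent uniform security-noise symbols. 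I would then run a one-time-pad argument: for $T=1$ there are enough fresh noise symbols in the interference part that, for every fixed value of $(W_k)_{k\ne\theta}$ and every fixed query realization, a measure-preserving relabeling of the noise leaves the joint distribution of $A_{[1:N]}^{[\theta]}$ unchanged; hence $A_{[1:N]}^{[\theta]}$ is conditionally independent of $(W_k)_{k\ne\theta}$ given $(Q_{[1:N]}^{[\theta]},\theta)$, and therefore also given $W_\theta$, which is exactly symmetric security. Taking $K\to\infty$ finishes the proof.

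The step I expect to be the main obstacle is precisely this last verification: one must bookkeep exactly which security-noise symbols are consumed by the desired part versus the interference part of every answer, and confirm that the interference part remains fully masked \emph{even after the user subtracts off its decoded desired contribution}. This is a genuine constraint --- the same tension between ``keeping desired symbols resolvable'' and ``masking/aligning interference'' that motivates cross subspace alignment --- and it is where $T=1$ is essential: for $T>1$ the queries must themselves carry privacy-protecting randomness that draws on the noise budget, leaving the non-desired messages only partially masked, so automatic symmetric security is a genuinely $T=1$ phenomenon. I would isolate it as a lemma about the $T=1$ scheme and prove it by the explicit noise-bookkeeping just described.
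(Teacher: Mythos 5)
Your proposal follows essentially the same route as the paper: the converse is Corollary \ref{cor:converse} plus the limit of the PIR capacity formula, the degenerate case $N=X+1$ is covered by Corollary \ref{cor:mds}, and achievability comes from showing that the Theorem \ref{theorem:main} scheme with $T=1$ is automatically symmetrically secure. The noise-bookkeeping lemma you flag as the main obstacle is exactly what the paper's appendix proves --- after conditioning it reduces to a single block and a three-case analysis on the (now constant) query noise ${\bf Z}_1'$: zero, a nonzero multiple of ${\bf Q}_\theta$ (where the only ``leak'' is the desired symbol $W_{\theta 1}$), or neither (where ${\bf Z}_{11}{\bf Z}_1'$ supplies a fresh mask) --- which is precisely the case split your one-time-pad relabeling would need, so your plan goes through.
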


\begin{corollary}\label{cor:N3}
 If the number of servers, $N=3$, and $X=1$, then for arbitrary number of messages, $K$,
\begin{align}
C_{\mbox{\tiny Sym-XSPIR}}(N=3, K, X=1)&= \left(\frac{N-X}{N}\right)C_{\mbox{\tiny PIR}}(N-X, K)\\
&=\frac{2}{3}\left(1+\frac{1}{2}+\frac{1}{2^2}+\cdots+\frac{1}{2^{K-1}}\right)^{-1}.
\end{align}
\end{corollary}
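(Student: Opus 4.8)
The plan is to obtain Corollary \ref{cor:N3} by combining (i) the converse that is already available in Corollary \ref{cor:converse}, with (ii) the observation that the rate-optimal XSTPIR scheme of Theorem \ref{theorem:N3} can be used verbatim and \emph{already} meets the symmetric-security constraint, so that no rate is lost. For the converse, instantiating Corollary \ref{cor:converse} at $N=3$, $X=1$ and substituting the basic PIR capacity $C_{\mbox{\tiny PIR}}(N',K)=(1+1/N'+\cdots+1/(N')^{K-1})^{-1}$ from \cite{Sun_Jafar_PIR} gives
\begin{align}
C_{\mbox{\tiny Sym-XSPIR}}(3,K,1)\leq \frac{2}{3}\,C_{\mbox{\tiny PIR}}(2,K)=\frac{2}{3}\left(1+\frac{1}{2}+\cdots+\frac{1}{2^{K-1}}\right)^{-1},
\end{align}
which is exactly the claimed value. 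Hence only a matching achievable scheme is needed, and by Theorem \ref{theorem:N3} such a scheme already exists; the remaining task is the symmetric-security check.

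I would first recall the structure of the $N=3$, $X=T=1$ scheme from the proof of Theorem \ref{theorem:N3}: each server stores cross-subspace-aligned Reed--Solomon-type linear combinations of the message symbols together with independent uniformly random noise symbols (the noise being what provides $1$-security of the storage), each server receives a query whose marginal distribution is independent of $\theta$ ($1$-privacy), and from the three answers the desired symbols of $W_\theta$ are resolvable while \emph{all} undesired message symbols and \emph{all} noise are confined to --- and aligned within --- the complementary set of ``interference'' dimensions. I would also recall the finite-$K$ refinement in that proof, where the query vector is drawn uniformly from some $\mathbb{F}_q^M$ and nothing is downloaded when it is the zero vector, which supplies the $(1-1/q^M)$ factor that sharpens the rate to the value above.

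For the symmetric-security check, fix an arbitrary realization of $(Q_{[1:3]}^{[\theta]},\theta)$ and express the downloaded vector as an invertible linear image of a list of ``effective symbols'': on one side the $L$ symbols of $W_\theta$, and on the other side $D-L$ further symbols, each of which is the sum of a linear combination of \emph{non-desired} message symbols and a \emph{distinct} uniformly random noise combination that is independent of all $K$ messages and that the user cannot cancel (it is not among the noise combinations that are subtracted off when $W_\theta$ is recovered). Conditioned on $(Q_{[1:3]}^{[\theta]},\theta,W_\theta)$, the download is then distributed as $L$ deterministic symbols followed by $D-L$ i.i.d.\ uniform symbols, hence independent of $W_{[1:K]}$, which is precisely $I(W_{[1:K]};A_{[1:3]}^{[\theta]}\mid Q_{[1:3]}^{[\theta]},W_\theta,\theta)=0$. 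The zero-query realization is trivially symmetrically secure (empty download) and does not alter the rate beyond the already-accounted $(1-1/q^M)$ factor. The other three corollaries (Corollaries \ref{cor:converse}--\ref{cor:main}) would be handled the same way, pairing the corresponding converse with the $T=1$ XSTPIR scheme.

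\textbf{Main obstacle.} The delicate point is verifying that the noise genuinely masks \emph{every} non-desired dimension: one must confirm that the Theorem \ref{theorem:N3} scheme supplies at least $D-L$ fresh noise symbols and that they are positioned so that each interference term carries its own independent mask, rather than a mask that is removed together with the interference when the user decodes $W_\theta$. Because symmetric security is a per-realization (not on-average) requirement, this bookkeeping must be done for every query realization, including the interplay with the finite-$K$ zero-query mechanism; once the explicit scheme is written out, however, this reduces to a direct linear-algebraic verification.
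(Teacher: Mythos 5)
Your overall strategy coincides with the paper's: the converse is immediate from Corollary \ref{cor:converse} (symmetric security cannot increase capacity), and the substance is to show that the scheme proving Theorem \ref{theorem:N3} is \emph{already} symmetrically secure. However, there are two problems with the way you propose to execute this. First, the scheme you recall is not the one that proves Theorem \ref{theorem:N3}. The cross-subspace Reed--Solomon scheme of Section \ref{proof:main} requires $q\geq L+N=4$, and its zero-query refinement only yields the rate $\frac{1}{3}\left(1-\frac{1}{4^K}\right)^{-1}$, which is strictly below the claimed capacity $\frac{1}{3}\left(1-\frac{1}{2^K}\right)^{-1}$. The capacity-achieving scheme is the modified $\mathbb{F}_2$ construction with storage ${\bf W}+{\bf Z}$, ${\bf W}+{\bf Z}{\bf B}$, ${\bf Z}$, queries ${\bf Z}'$, ${\bf Q}_\theta+{\bf Z}'$, $({\bf I}_K+{\bf B}){\bf Z}'+{\bf B}{\bf Q}_\theta$, and a matrix ${\bf B}$ with ${\bf B}$ and ${\bf I}_K+{\bf B}$ invertible. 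Since your security argument is meant to be a direct linear-algebraic verification of an explicit scheme, identifying the correct scheme is not cosmetic.

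Second, and more importantly, the verification itself is where the gap lies. Your claim that for \emph{every} query realization the download decomposes into $W_\theta$ together with $D-L$ symbols, each masked by its own uncancellable independent uniform noise combination (hence i.i.d.\ uniform given the conditioning), is false for this scheme. Given the queries and $W_\theta$, the three answers satisfy $A_1^{[\theta]}+A_2^{[\theta]}+A_3^{[\theta]}=W_\theta$ over $\mathbb{F}_2$; the only message-dependent quantity beyond $W_\theta$ is ${\bf W}{\bf Z}'$, and the available masks are ${\bf Z}{\bf Z}'$ and ${\bf Z}{\bf B}({\bf Z}'+{\bf Q}_\theta)$. These masks can coincide (when ${\bf B}({\bf Z}'+{\bf Q}_\theta)={\bf Z}'$, so the third symbol is merely redundant, not independently masked), one of them can vanish (when ${\bf Z}'={\bf Q}_\theta$, in which case ${\bf W}{\bf Z}'=W_\theta$ and nothing new leaks, or when ${\bf Z}'={\bf 0}$), and only in the remaining case are the two masks jointly uniform and independent. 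So the desired conclusion $I\left(W_{[1:K]};A_{[1:3]}^{[\theta]}\mid Q_{[1:3]}^{[\theta]},W_\theta,\theta\right)=0$ does hold, but not via your uniform-mask decomposition; it requires precisely the per-realization case analysis on ${\bf Z}'$ and ${\bf B}({\bf Z}'+{\bf Q}_\theta)$ that the paper carries out and that you defer as the ``main obstacle.'' Since the converse is trivial, this case analysis is essentially the entire proof of the corollary, and your proposal asserts it rather than supplying it, while the mechanism you sketch breaks down exactly in the degenerate realizations where the real work lies.
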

The proofs of all $4$ corollaries appear in Appendix \ref{proof:cor}. Surprisingly, note that there is no loss of capacity in each case due to the additional symmetric security constraint.  Also note that according to Corollary \ref{cor:N3}, unlike Sym-PIR, the capacity of Sym-XSPIR depends on the number of messages $K$  for all $K>1$.

\subsubsection{Cross Subspace Alignment}\label{sec:cross}
Conceptually, the most intriguing aspect of the asymptotically optimal XSTPIR scheme is the extent to which it is able to align interference. Interference alignment is central to PIR \cite{Sun_Jafar_BIAPIR, Sun_Jafar_PIR}, and nearly all existing PIR constructions use some form of interference alignment. The strength of XSTPIR lies in the novel idea of cross subspace alignment, that we explain intuitively in this section through an example. 
Consider the setting of $X=2$ secure and $T=1$ PIR with $N=5$ servers. Let $w_{1}$ be a symbol from a desired message $W$. For simplicity (and because identical alignments are applied to all messages), it suffices to focus on only this message for the purpose of this explanation. In order to guarantee $X=2$ security, $w_1$ is {\color{black}mixed} with $2$ random noise symbols $z_{11}, z_{12}$,  according to the following RS Code, so that the $n^{th}$ row is stored at the $n^{th}$ server, $n\in[1:5]$.

\begin{align*}
&
\left[\begin{matrix}
1\\
1\\
1\\
1\\
1\\
\end{matrix}
\right]w_{1}+\left[\begin{matrix}
\beta_1\\
\beta_2\\
\beta_3\\
\beta_4\\
\beta_5
\end{matrix}
\right]z_{11}
+\left[\begin{matrix}
\beta_1^2\\
\beta_2^2\\
\beta_3^2\\
\beta_4^2\\
\beta_5^2
\end{matrix}
\right]z_{12}\\
&\triangleq\vect{1}w_{1}+\vect{\beta}z_{11}+\vect{\beta^{2}}z_{12} .
\end{align*}
To ensure privacy, the query symbol $q_\theta$ ($q_\theta=1$, i.e., this message is desired) is similarly mixed with a noise symbol $z_{1}'$.
\begin{align*}
\vect{1}q_\theta+\vect{\beta}z_1'=\vect{1}+\vect{\beta}z_1'
\end{align*}
and the $n^{th}$ row of this query vector is sent to the $n^{th}$ server. Each server returns the product of the noisy query symbol and the noisy stored symbol, so that the user receives the $5$ answers.
\begin{align*}
&\left(\vect{1}w_1+\vect{\beta}z_{11}+\vect{\beta^{2}}z_{12}\right)\circ\left(\vect{ 1}+\vect{\beta}z_1'\right)\\
&=\vect{1}w_1+\vect{\beta}\left(w_1z_1'+z_{11}\right)+\vect{\beta^{2}}\left(z_{11}z_1'+z_{12}\right)+\vect{\beta^{3}}z_{12}z_1' .
\end{align*}
The desired symbol $w_1$ appears along the vector $\vect{1}$ while the remaining $5$ undesired symbols align along $3$ dimensions. Specifically, the undesired symbols $w_1z_1'$ and $z_{11}$  align along the vector $\vect{\beta}$; undesired symbols $z_{11}z_1'$ and $z_{12}$  align along  the vector $\vect{\beta^{2}}$ and  undesired symbol $z_{12}z_1'$ appears along the vector $\vect{\beta^{3}}$. This type of alignment, enabled by using the same $\vect{\beta}$ in the storage and query, is indeed very useful and has been used previously by Freij-Hollanti et al. for MDS-TPIR \cite{FREIJ_HOLLANTI}. However, note that we have a $5$ dimensional space (all vectors are $5\times 1$) and we are so far only using  $4$ dimensions (one desired, three interference), so there is room for improvement. 

In order to improve the efficiency of the retrieval scheme, suppose we try to retrieve another symbol, $w_2$, from the same desired message $W=(w_1, w_2)$. The challenge is that because of the $X=2$ security requirement $w_2$ is mixed with new (independent) noise symbols $z_{21}, z_{22}$ according to an RS code parameterized by $\gamma$, 
\begin{align}
\vect{1}w_{2}+\vect{\gamma}z_{21}+\vect{\gamma^{2}}z_{22},
\end{align}
so any attempt to retrieve $w_2$ will add new interference terms. Since we already have $3$ dimensions of interference, the new interference added due to the noise protecting $w_2$ must align completely within the existing interference. This will be accomplished by cross-alignment, i.e., introducing additional structure across the storage and query codes for the different symbols to be retrieved. 
In particular, we will use the query vector $\vect{\gamma}\circ \left(\vect{1}+\vect{\beta}z_1'\right)$ to multiply with the stored variables containing $w_1$ (i.e., $\vect{1}w_{1}+\vect{\beta}z_{11}+\vect{\beta^{2}}z_{12}$) and the query vector $\vect{\beta}\circ \left(\vect{1}+\vect{\gamma}z_2'\right)$ to multiply with the stored variables containing $w_2$ (i.e., $\vect{1}w_{2}+\vect{\gamma}z_{21}+\vect{\gamma^{2}}z_{22}$). The sum of the two multiplications is returned as the answer. Note that Hadamard products are commutative and associative.
The answers from the $5$ servers are now expressed  as follows.
\begin{align}
&\vect{\gamma}\circ\left(\vect{1}w_1+\vect{\beta}z_{11}+\vect{\beta^{2}}z_{12}\right)\circ\left(\vect{1}+\vect{\beta}z_1'\right)+\vect{\beta}\circ(\vect{1}w_2+\vect{\gamma}z_{21}+\vect{\gamma^{2}}z_{22})\circ\left(\vect{1}+\vect{\gamma}z_2'\right)\\
&=\vect{\gamma}w_1+\vect{\beta}w_2+\vect{\beta}\circ\vect{\gamma}\left(w_1z_1'+z_{11}+w_2z_2'+z_{21}\right)\nonumber\\
&\hspace{1cm}+ \vect{\beta^{2}}\circ\vect{\gamma}(z_{11}z_1'+z_{12})+\vect{\beta}\circ\vect{\gamma^{2}}(z_{21}z_2'+z_{22})+\vect{\beta^{3}}\circ\vect{\gamma}z_{12}z_1'+\vect{\beta}\circ\vect{\gamma^{3}}z_{22}z_2' .
\end{align}
Note that we cannot choose $\vect{\beta} = \vect{\gamma}$, because the two desired symbols $(w_1, w_2)$ must not align in the same dimension. Also note that by cross-multiplying the first set of answers with $\vect{\gamma}$ and the second with $\vect{\beta}$ we have achieved  \emph{cross alignment} of $4$ terms along $\vect{\beta} \circ \vect{\gamma}$. However, we now have $5$  dimensions occupied by interference, along the $5$ vectors, $\vect{\beta}\circ\vect{\gamma}, \vect{\beta^{2}}\circ\vect{\gamma}, \vect{\beta}\circ\vect{\gamma^{2}}, \vect{\beta^{3}}\circ\vect{\gamma}, \vect{\beta}\circ\vect{\gamma^{3}}$. Since the overall space is only $5$ dimensional and we need two dimensions for desired symbols, we need to restrict interference to no more than $3$ dimensions. Surprisingly, it is possible to do this by cross \emph{subspace} alignment as we show next. Let us introduce a  structural relationship between $\beta$ and $\gamma$. In particular,  let us set,
\begin{align}
\vect{\beta}&=\vect{1+\alpha}\\
\vect{\gamma}&=\vect{2+\alpha}
\end{align}
so that the answers from the $5$ servers are now expressed as,
\begin{align}
\left(\vect{2+\alpha}\right)w_1+
\left(\vect{1+\alpha}\right)w_2+\left(\vect{1+\alpha}\right)\circ\left(\vect{2+\alpha}\right)I
\end{align}
where the interference $I$ is 
\begin{align}
I&=\vect{1}(w_1z_1'+z_{11}+w_2z_2'+z_{21})+\left(\vect{1+\alpha}\right)(z_{11}z_1'+z_{12})+\left(\vect{2+\alpha}\right)(z_{21}z_1'+z_{22})\nonumber\\
&+\left(\vect{1 + 2\alpha+\alpha^{2}}\right)z_{12}z_1'+\left(\vect{4+4\alpha+\alpha^{2}}\right)z_{22}z_2' .
\end{align}
Note that there are still $5$ interference vectors, no two of which align directly with each other. However, the $5$ interference vectors align into a $3$ dimensional subspace of the $5$ dimensional vector space. This is what we mean by  \emph{cross subspace alignment} and it is essential to  this work. To see explicitly how the interference aligns into a $3$ dimensional subspace, we can rewrite $I$ as,
\begin{align}
I&=\vect{1}(w_1z_1'+z_{11}+w_2z_2'+z_{21}+z_{11}z_1'+z_{12}+2z_{21}z_1'+2z_{22}+z_{12}z_1'+4z_{22}z_2')\nonumber\\
&+\vect{\alpha}(z_{11}z_1'+z_{12}+z_{21}z_1'+z_{22}+2z_{12}z_1'+4z_{22}z_2')\nonumber\\
&+\vect{\alpha^{2}}(z_{12}z_1'+z_{22}z_2') .
\end{align}
Thus, due to cross subspace alignment, all of $I$ aligns within a $3$ dimensional space, leaving the remaining $2$ dimensions interference-free for the desired symbols.  Exactly the same alignments apply to all messages as explained in the formal descriptions of the schemes provided in this paper.

\section{Proof of Theorem \ref{theorem:converse}}\label{proof:converse}
\allowdisplaybreaks
Let us start with two useful lemmas. The first one shows that the desired message index is independent of the messages, stored variables, queries and answers.

\begin{lemma}\label{lemma:sprivacy}
For all $k, k' \in [1:K], \forall \mathcal{T} \in [1:N], |\mathcal{T}| = T$, we have
\begin{align}
(Q_{\mathcal{T}}^{[k]}, A_{\mathcal{T}}^{[k]},  S_{[1:N]}, W_1,\cdots, W_K) \sim  (Q_{\mathcal{T}}^{[k']}, A_{\mathcal{T}}^{[k']},  S_{[1:N]}, W_1,\cdots, W_K)
\label{eq:sprivacy}
\end{align}
\end{lemma}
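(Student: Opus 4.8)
The plan is to prove the stronger and cleaner statement that the desired index $\theta$ is \emph{statistically independent} of the whole tuple $\big(Q_{\mathcal{T}}^{[\theta]}, A_{\mathcal{T}}^{[\theta]}, S_{[1:N]}, W_{[1:K]}\big)$. The claimed equality in distribution \eqref{eq:sprivacy} is then immediate: conditioning this tuple on the event $\{\theta=k\}$ reproduces $\big(Q_{\mathcal{T}}^{[k]}, A_{\mathcal{T}}^{[k]}, S_{[1:N]}, W_{[1:K]}\big)$, and independence forces the conditional law to be the same for every index, in particular for $k$ and $k'$. So the task reduces to showing $I\big(\theta; Q_{\mathcal{T}}^{[\theta]}, A_{\mathcal{T}}^{[\theta]}, S_{[1:N]}, W_{[1:K]}\big)=0$.

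First I would establish the core fact $I\big(\theta; Q_{\mathcal{T}}^{[\theta]}, S_{[1:N]}\big)=0$. Two hypotheses feed in. From \eqref{indp}, restricting the collection of queries to those of the servers in $\mathcal{T}$ and using monotonicity of mutual information gives $I\big(\theta, Q_{\mathcal{T}}^{[\theta]}; S_{[1:N]}\big)=0$. From $T$-privacy \eqref{privacy}, and because $|\mathcal{T}|=T$, we have $I\big(\theta; Q_{\mathcal{T}}^{[\theta]}, S_{\mathcal{T}}\big)=0$. Now expand by the chain rule, $I\big(\theta; Q_{\mathcal{T}}^{[\theta]}, S_{[1:N]}\big) = I\big(\theta; Q_{\mathcal{T}}^{[\theta]}, S_{\mathcal{T}}\big) + I\big(\theta; S_{[1:N]\setminus\mathcal{T}} \mid Q_{\mathcal{T}}^{[\theta]}, S_{\mathcal{T}}\big)$: the first term is $0$ by privacy, and the second term is bounded above by $I\big(\theta, Q_{\mathcal{T}}^{[\theta]}; S_{[1:N]}\big)=0$ from the storage-independence step, so both are $0$.

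Next I would adjoin $W_{[1:K]}$ and $A_{\mathcal{T}}^{[\theta]}$ one at a time, each by a deterministic-function argument. Since \eqref{msgfunc} makes the messages a deterministic function of $S_{[1:N]}$, we get $H\big(W_{[1:K]} \mid Q_{\mathcal{T}}^{[\theta]}, S_{[1:N]}\big)=0$, hence $I\big(\theta; W_{[1:K]} \mid Q_{\mathcal{T}}^{[\theta]}, S_{[1:N]}\big)=0$, and the extended tuple is still independent of $\theta$. Since \eqref{ans_det} makes each $A_n^{[\theta]}$ a deterministic function of $(Q_n^{[\theta]}, S_n)$ via the fixed, $\theta$-oblivious answering map of server $n$, the collection $A_{\mathcal{T}}^{[\theta]}$ is a deterministic function of $\big(Q_{\mathcal{T}}^{[\theta]}, S_{[1:N]}\big)$ and can likewise be appended without destroying independence of $\theta$. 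This yields $I\big(\theta; Q_{\mathcal{T}}^{[\theta]}, A_{\mathcal{T}}^{[\theta]}, S_{[1:N]}, W_{[1:K]}\big)=0$, which by the first paragraph finishes the proof.

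I expect the only genuinely delicate point to be the combination step in the second paragraph: $T$-privacy only constrains the joint law of the queries with the \emph{local} storage $S_{\mathcal{T}}$, whereas the lemma needs the joint law with the \emph{full} storage $S_{[1:N]}$ (and therefore with all of $W_{[1:K]}$), and closing that gap is exactly what forces the use of \eqref{indp}; the chain-rule bookkeeping has to be arranged so that every conditional-mutual-information term is squeezed between $0$ and a quantity already known to vanish. A secondary care point is to note explicitly that the servers' answering functions do not depend on $\theta$ (the servers never see $\theta$), since that is what guarantees that conditioning on $\{\theta=k\}$ reproduces $A_{\mathcal{T}}^{[k]}$ with the correct functional form, so that the independence statement really does translate into the stated equality in distribution.
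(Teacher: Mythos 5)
Your proposal is correct and follows essentially the same route as the paper: both reduce the lemma, via the deterministic dependence of $W_{[1:K]}$ on $S_{[1:N]}$ and of $A_{\mathcal{T}}^{[\theta]}$ on $(Q_{\mathcal{T}}^{[\theta]},S_{\mathcal{T}})$, to showing $I\big(\theta; Q_{\mathcal{T}}^{[\theta]}, S_{[1:N]}\big)=0$, and both obtain this by combining the privacy constraint \eqref{privacy} with the query--storage independence \eqref{indp}. The only difference is bookkeeping: you split $S_{[1:N]}$ into $S_{\mathcal{T}}$ and $S_{[1:N]\setminus\mathcal{T}}$ and bound the conditional term by \eqref{indp}, whereas the paper conditions on all of $S_{[1:N]}$ and strips the conditioning using $I(Q_{\mathcal{T}}^{[\theta]};S_{[1:N]})=I(Q_{\mathcal{T}}^{[\theta]};S_{[1:N]}\mid\theta)=0$; both are valid.
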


{\it Proof:} Since $W_1,\cdots, W_K$ is a function of $S_{[1:N]}$ and $A_{\mathcal{T}}^{[\theta]}$ is a function of $(Q_{\mathcal{T}}^{[\theta]}, S_{\mathcal{T}})$ (refer to (\ref{ans_det})), it suffices to prove $I(\theta; Q_{\mathcal{T}}^{[\theta ]},  S_{[1:N]}) = 0$. From (\ref{indp}), we have
\begin{align}
&I(Q_{[1:N]}^{[\theta]}, \theta; S_{[1:N]}) = 0 \\
\Rightarrow~~& I(Q_{\mathcal{T}}^{[\theta]}, \theta; S_{[1:N]}) = 0\\
\Rightarrow~~& I(Q_{\mathcal{T}}^{[\theta]} ; S_{[1:N]}) = I(Q_{\mathcal{T}}^{[\theta]} ; S_{[1:N]} | \theta)  = 0 \label{eq:ss}
\end{align}
Next, we have,
\begin{align}
 I(\theta; Q_{\mathcal{T}}^{[\theta ]},  S_{[1:N]}) 
 &\overset{(\ref{indp})}{=} I(\theta; Q_{\mathcal{T}}^{[\theta ]} | S_{[1:N]}) \\
&= H(Q_{\mathcal{T}}^{[\theta ]} | S_{[1:N]})  - H(Q_{\mathcal{T}}^{[\theta ]} | S_{[1:N]}, \theta) \\
&\overset{(\ref{eq:ss})}{=} H(Q_{\mathcal{T}}^{[\theta ]}) - H(Q_{\mathcal{T}}^{[\theta ]}|\theta) \\
 &\overset{(\ref{privacy})}{=} 0
\end{align}
$\hfill\square$

The second lemma is a statement of conditional independence of answers from one set of servers from the queries to the rest of the servers.
\begin{lemma}
For all $\mathcal{T}, \mathcal{X} \subset [1:N], \forall k \in [1:K], \forall \mathcal{K} \in [1:K]$, we have
\begin{align}
 H(A_{\mathcal{T}}^{[k]} | S_{\mathcal{X}}, Q_{[1:N]}^{[k]}, W_{\mathcal{K}}) = H(A_{\mathcal{T}}^{[k]} | S_{\mathcal{X}}, Q_{\mathcal{T}}^{[k]}, W_{\mathcal{K}}) \label{eq:mar}
\end{align}
\end{lemma}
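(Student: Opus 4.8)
The plan is to rewrite the claimed identity (\ref{eq:mar}) as the statement that the queries to the servers outside $\mathcal{T}$ add no information about $A_{\mathcal{T}}^{[k]}$ once the relevant side information is fixed. Writing $\bar{\mathcal{T}}\triangleq [1:N]\setminus\mathcal{T}$, the gap between the right and left sides of (\ref{eq:mar}) is exactly $I(A_{\mathcal{T}}^{[k]}; Q_{\bar{\mathcal{T}}}^{[k]}\mid S_{\mathcal{X}}, Q_{\mathcal{T}}^{[k]}, W_{\mathcal{K}})$, so it suffices to show this conditional mutual information is $0$. First I would record the three structural facts that do all the work: by (\ref{ans_det}), $A_{\mathcal{T}}^{[k]}$ is a deterministic function of $(Q_{\mathcal{T}}^{[k]},S_{\mathcal{T}})$; by (\ref{msgfunc}), $W_{\mathcal{K}}$ is a deterministic function of $S_{[1:N]}$; and, exactly as in the chain of implications leading to (\ref{eq:ss}) in the proof of Lemma \ref{lemma:sprivacy}, the independence property (\ref{indp}) yields $I(S_{[1:N]};Q_{[1:N]}^{[k]})=0$, i.e. the stored data is independent of the entire query tuple generated when message $k$ is requested.

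The first real step is to upgrade this independence to the conditional form $I(S_{\mathcal{T}}; Q_{\bar{\mathcal{T}}}^{[k]}\mid S_{\mathcal{X}}, W_{\mathcal{K}}, Q_{\mathcal{T}}^{[k]})=0$. Since $(S_{\mathcal{X}},S_{\mathcal{T}},W_{\mathcal{K}})$ is a function of $S_{[1:N]}$, it is independent of $Q_{[1:N]}^{[k]}=(Q_{\mathcal{T}}^{[k]},Q_{\bar{\mathcal{T}}}^{[k]})$; expanding $I(S_{\mathcal{X}},S_{\mathcal{T}},W_{\mathcal{K}};Q_{[1:N]}^{[k]})=0$ by the chain rule and repeatedly invoking nonnegativity of mutual information peels off the term $I(S_{\mathcal{T}};Q_{\bar{\mathcal{T}}}^{[k]}\mid S_{\mathcal{X}},W_{\mathcal{K}},Q_{\mathcal{T}}^{[k]})$ and forces it to vanish.

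Finally I would conclude with one more chain-rule expansion: $I(A_{\mathcal{T}}^{[k]};Q_{\bar{\mathcal{T}}}^{[k]}\mid S_{\mathcal{X}},Q_{\mathcal{T}}^{[k]},W_{\mathcal{K}})\le I(S_{\mathcal{T}},A_{\mathcal{T}}^{[k]};Q_{\bar{\mathcal{T}}}^{[k]}\mid S_{\mathcal{X}},Q_{\mathcal{T}}^{[k]},W_{\mathcal{K}})=I(S_{\mathcal{T}};Q_{\bar{\mathcal{T}}}^{[k]}\mid S_{\mathcal{X}},Q_{\mathcal{T}}^{[k]},W_{\mathcal{K}})+I(A_{\mathcal{T}}^{[k]};Q_{\bar{\mathcal{T}}}^{[k]}\mid S_{\mathcal{T}},S_{\mathcal{X}},Q_{\mathcal{T}}^{[k]},W_{\mathcal{K}})$, in which the first summand is $0$ by the previous step and the second is $0$ because $A_{\mathcal{T}}^{[k]}$ is, by (\ref{ans_det}), a function of $(S_{\mathcal{T}},Q_{\mathcal{T}}^{[k]})$, both of which lie in the conditioning. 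Nonnegativity then pins the left-hand side at $0$, which is (\ref{eq:mar}). The argument is entirely elementary; the only points needing care are the passage from the randomized-index form of (\ref{indp}) to the fixed-$k$ statement $I(S_{[1:N]};Q_{[1:N]}^{[k]})=0$, and the observation that conditioning on $W_{\mathcal{K}}$ — being a function of the storage — cannot re-couple $S_{\mathcal{T}}$ with the queries, which is precisely what makes the first step go through.
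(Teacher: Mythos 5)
Your proposal is correct and matches the paper's argument in essence: both reduce (\ref{eq:mar}) to showing $I(A_{\mathcal{T}}^{[k]}; Q_{[1:N]}^{[k]}\mid S_{\mathcal{X}}, Q_{\mathcal{T}}^{[k]}, W_{\mathcal{K}})=0$ and obtain this from the storage--query independence (\ref{indp}) together with the determinism facts (\ref{ans_det}) and (\ref{msgfunc}) via elementary chain-rule bounds. The only cosmetic difference is that you first isolate the conditional independence $I(S_{\mathcal{T}};Q_{\bar{\mathcal{T}}}^{[k]}\mid S_{\mathcal{X}},W_{\mathcal{K}},Q_{\mathcal{T}}^{[k]})=0$ and then adjoin $S_{\mathcal{T}}$ to the mutual information, whereas the paper enlarges directly to $I(A_{\mathcal{T}}^{[k]},S_{[1:N]},W_{\mathcal{K}};Q_{[1:N]}^{[k]}\mid Q_{\mathcal{T}}^{[k]})$ and strips the deterministic terms.
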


{\it Proof:} It suffices to prove that $I(A_{\mathcal{T}}^{[k]}; Q_{[1:N]}^{[k]} | S_{\mathcal{X}}, Q_{\mathcal{T}}^{[k]}, W_{\mathcal{K}}) = 0$. This proof is presented as follows.
\begin{align}
I(A_{\mathcal{T}}^{[k]}; Q_{[1:N]}^{[k]} | S_{\mathcal{X}}, Q_{\mathcal{T}}^{[k]}, W_{\mathcal{K}}) &\leq I(A_{\mathcal{T}}^{[k]}, S_{\mathcal{X}}, W_{\mathcal{K}}; Q_{[1:N]}^{[k]} | Q_{\mathcal{T}}^{[k]}) \\
&\leq I(A_{\mathcal{T}}^{[k]}, S_{[1:N]}, W_{\mathcal{K}}; Q_{[1:N]}^{[k]} | Q_{\mathcal{T}}^{[k]}) \\
&\overset{(\ref{msgfunc})(\ref{ans_det})}{=} I(S_{[1:N]}; Q_{[1:N]}^{[k]} | Q_{\mathcal{T}}^{[k]}) \\
&\overset{(\ref{indp})}{=} 0
\end{align}

$\hfill\square$

The next lemma formalizes the intuition that because of the security constraint, the answers from any $X$ servers are, in some sense, not very useful. Specifically, after conditioning on the information contained in any $X$ servers,  the answers from the remaining $N-X$ servers must still contain at least $L$ more bits than the interference that is included in those answers. For a set $\mathcal{X}$, its complement set is denoted as $\overline{\mathcal{X}}$, i.e., $\overline{\mathcal{X}} = \{n | n \in [1:N], n \notin \mathcal{X}\}$. We use $D_n$ to denote the expected number of symbols downloaded from Server $n$. 
\begin{lemma}\label{lemma:int}
For all $ \mathcal{X} \in [1:N], |\mathcal{X}| = X$, we have
\begin{align}
L \leq \sum_{n \in \overline{\mathcal{X}}}D_n - 
H(A_{\overline{\mathcal{X}}}^{[1]} | S_{\mathcal{X}}, Q_{[1:N]}^{[1]}, W_1)
\label{eq:int}
\end{align}
\end{lemma}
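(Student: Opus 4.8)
The plan is to pin down two facts and then subtract them. First, that $W_1$ is decodable once we are handed $S_{\mathcal{X}}$, all the queries, and only the $N-X$ answers $A_{\overline{\mathcal{X}}}^{[1]}$; and second, that conditioning on $S_{\mathcal{X}}$ and all the queries leaves $W_1$ completely uncertain, i.e., of entropy $L$. Expanding a conditional mutual information and bounding answer entropies by download sizes then yields (\ref{eq:int}) directly.

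For the first fact, recall from (\ref{ans_det}) that $A_n^{[1]}$ is a deterministic function of $(Q_n^{[1]},S_n)$, so $A_{\mathcal{X}}^{[1]}$ is determined by $(Q_{\mathcal{X}}^{[1]},S_{\mathcal{X}})$, hence $A_{[1:N]}^{[1]}$ is determined by $(A_{\overline{\mathcal{X}}}^{[1]}, S_{\mathcal{X}}, Q_{[1:N]}^{[1]})$. Applying correctness (\ref{corr}) with $\theta=1$ gives
\[
H\!\left(W_1\mid A_{\overline{\mathcal{X}}}^{[1]}, S_{\mathcal{X}}, Q_{[1:N]}^{[1]}\right)=0 .
\]
For the second fact, by (\ref{indp}) the storage $S_{[1:N]}$ is independent of $(Q_{[1:N]}^{[\theta]},\theta)$; since $W_{[1:K]}$ is a deterministic function of $S_{[1:N]}$ by (\ref{msgfunc}), the pair $(S_{[1:N]}, W_{[1:K]})$ is also independent of $(Q_{[1:N]}^{[\theta]},\theta)$, and in particular (specializing to $\theta=1$) of $Q_{[1:N]}^{[1]}$. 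Combining this with $X$-security (\ref{secur}), which gives $I(S_{\mathcal{X}};W_1)=0$, via $I(W_1; S_{\mathcal{X}}, Q_{[1:N]}^{[1]})=I(W_1;S_{\mathcal{X}})+I(W_1;Q_{[1:N]}^{[1]}\mid S_{\mathcal{X}})\leq 0+I(S_{[1:N]},W_{[1:K]};Q_{[1:N]}^{[1]})=0$, so by (\ref{h2})
\[
H\!\left(W_1\mid S_{\mathcal{X}}, Q_{[1:N]}^{[1]}\right)=H(W_1)=L .
\]

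Subtracting the first displayed identity from the second,
\[
L=I\!\left(W_1;A_{\overline{\mathcal{X}}}^{[1]}\mid S_{\mathcal{X}},Q_{[1:N]}^{[1]}\right)=H\!\left(A_{\overline{\mathcal{X}}}^{[1]}\mid S_{\mathcal{X}},Q_{[1:N]}^{[1]}\right)-H\!\left(A_{\overline{\mathcal{X}}}^{[1]}\mid S_{\mathcal{X}},Q_{[1:N]}^{[1]},W_1\right).
\]
It then remains to bound the first term: dropping the conditioning on $S_{\mathcal{X}}$, using subadditivity over the $N-X$ servers in $\overline{\mathcal{X}}$, and using the standard bound $H(A_n^{[1]})\leq D_n$ (the $q$-ary entropy of an answer cannot exceed its expected length), we get $H(A_{\overline{\mathcal{X}}}^{[1]}\mid S_{\mathcal{X}},Q_{[1:N]}^{[1]})\leq\sum_{n\in\overline{\mathcal{X}}}D_n$, which produces exactly (\ref{eq:int}).

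The steps are all elementary; the only real care is in the independence argument for the second fact, where one must push the two axioms (\ref{indp}) and (\ref{secur}) through the observation that $W_{[1:K]}$ is a function of $S_{[1:N]}$, so that the security noise embedded in $S_{\mathcal{X}}$ offers no leverage on $W_1$ once accounted for in the conditioning. A minor technical point is the variable-length bound $H(A_n^{[1]})\leq D_n$, which I would justify by the usual convention (or by assuming, without loss of rate, that the answer lengths are fixed given the queries).
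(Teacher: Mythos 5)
Your proposal is correct and follows essentially the same route as the paper: both reduce the claim to showing $I(W_1; A_{\overline{\mathcal{X}}}^{[1]} \mid S_{\mathcal{X}}, Q_{[1:N]}^{[1]})\geq L$ by combining correctness, the fact that $A_{\mathcal{X}}^{[1]}$ is a function of $(S_{\mathcal{X}},Q_{\mathcal{X}}^{[1]})$, the $X$-security constraint, and the storage–query independence, and then bound $H(A_{\overline{\mathcal{X}}}^{[1]}\mid S_{\mathcal{X}},Q_{[1:N]}^{[1]})$ by $\sum_{n\in\overline{\mathcal{X}}}D_n$. The only (cosmetic) difference is that you establish the two conditional entropies $H(W_1\mid S_{\mathcal{X}},Q_{[1:N]}^{[1]})=L$ and $H(W_1\mid A_{\overline{\mathcal{X}}}^{[1]},S_{\mathcal{X}},Q_{[1:N]}^{[1]})=0$ and subtract, whereas the paper runs the equivalent chain of mutual-information manipulations starting from $L=I(W_1;A_{[1:N]}^{[1]}\mid Q_{[1:N]}^{[1]})$ and adding $S_{\mathcal{X}}$ as side information.
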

{\it Proof:}
\begin{align}
L = H(W_1) &\overset{(\ref{corr})}{=} I(W_1; A_{[1:N]}^{[1]} | Q_{[1:N]}^{[1]}) \\
&\leq I(W_1; A_{[1:N]}^{[1]}, S_{\mathcal{X}} | Q_{[1:N]}^{[1]}) \\
&=  I(W_1; S_{\mathcal{X}} | Q_{[1:N]}^{[1]}) +  I(W_1; A_{\mathcal{X}}^{[1]}, A_{\overline{\mathcal{X}}}^{[1]} | S_{\mathcal{X}}, Q_{[1:N]}^{[1]}) \\
&\overset{(\ref{ans_det})}{=}  I(W_1; S_{\mathcal{X}} | Q_{[1:N]}^{[1]}) +  I(W_1; A_{\overline{\mathcal{X}}}^{[1]} | S_{\mathcal{X}}, Q_{[1:N]}^{[1]}) \\
&\overset{(\ref{indp})}{=} I(W_1, Q_{[1:N]}^{[1]}; S_{\mathcal{X}}) +  I(W_1; A_{\overline{\mathcal{X}}}^{[1]} | S_{\mathcal{X}}, Q_{[1:N]}^{[1]}) \\
&\overset{(\ref{secur})}{=} I(Q_{[1:N]}^{[1]}; S_{\mathcal{X}}| W_1) +  I(W_1; A_{\overline{\mathcal{X}}}^{[1]} | S_{\mathcal{X}}, Q_{[1:N]}^{[1]}) \\
&\overset{}{\leq} I(Q_{[1:N]}^{[1]}; S_{\mathcal{X}},W_1) +  I(W_1; A_{\overline{\mathcal{X}}}^{[1]} | S_{\mathcal{X}}, Q_{[1:N]}^{[1]}) \\
&\overset{(\ref{indp})}{=}  I(W_1; A_{\overline{\mathcal{X}}}^{[1]} | S_{\mathcal{X}}, Q_{[1:N]}^{[1]}) \label{eq:red} \\
&\overset{}{\leq} \sum_{n \in \overline{\mathcal{X}}}D_n - H(A_{\overline{\mathcal{X}}}^{[1]} | S_{\mathcal{X}}, Q_{[1:N]}^{[1]}, W_1)
\end{align}
$\hfill\square$

We may interpret the second term of the RHS of (\ref{eq:int}) as the interference term. To bound it, we need the following recursive relation, stated in a lemma.

\begin{lemma}\label{lemma:recursive}
For all $\mathcal{X} \in [1:N], |\mathcal{X}| = X$ and for all $k \in [1:K]$, we have
\begin{align}
& H(A_{\overline{\mathcal{X}}}^{[k]} | S_{\mathcal{X}}, Q_{[1:N]}^{[k]}, W_{[1:k]}) \geq \frac{T}{N-X} \left( L + H(A_{\overline{\mathcal{X}}}^{[k+1]} | S_{\mathcal{X}}, Q_{[1:N]}^{[{k+1}]}, W_{[1:k+1]}) \right), ~\mbox{if}~ N > X+T.\\
& H(A_{\overline{\mathcal{X}}}^{[k]} | S_{\mathcal{X}}, Q_{[1:N]}^{[k]}, W_{[1:k]}) \geq L + H(A_{\overline{\mathcal{X}}}^{[k+1]} | S_{\mathcal{X}}, Q_{[1:N]}^{[{k+1}]}, W_{[1:k+1]}), ~\mbox{if}~ N\leq X+T.
\label{eq:recursive}
\end{align}
\end{lemma}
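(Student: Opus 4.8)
The plan is to prove the two recursive inequalities in Lemma \ref{lemma:recursive} by an induction-style argument over servers, exploiting $T$-privacy to swap the message index from $k$ to $k+1$, following the template of the converse for $T$-PIR. First I would fix an arbitrary $\mathcal{X}$ with $|\mathcal{X}|=X$, and observe that the quantity $H(A_{\overline{\mathcal{X}}}^{[k]} \mid S_{\mathcal{X}}, Q_{[1:N]}^{[k]}, W_{[1:k]})$ should be thought of as ``the entropy of the answers from the $N-X$ useful servers that is not already explained by the first $k$ messages.'' The starting move is to pick any subset $\mathcal{T}\subset\overline{\mathcal{X}}$ with $|\mathcal{T}|=T$ (which exists precisely when $N-X\geq T$; the case $N-X<T$, i.e.\ $N<X+T$, is the degenerate branch handled separately below) and write $H(A_{\overline{\mathcal{X}}}^{[k]}\mid \cdots) \geq H(A_{\mathcal{T}}^{[k]}\mid \cdots)$. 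Then, using Lemma \ref{lemma:sprivacy} (which gives that $(Q_{\mathcal{T}}^{[k]},A_{\mathcal{T}}^{[k]},S_{[1:N]},W_{[1:K]})$ is distributed identically for all desired indices), I would replace the superscript $[k]$ by $[k+1]$ inside this conditional entropy term: $H(A_{\mathcal{T}}^{[k]}\mid S_{\mathcal{X}},Q_{\mathcal{T}}^{[k]},W_{[1:k]}) = H(A_{\mathcal{T}}^{[k+1]}\mid S_{\mathcal{X}},Q_{\mathcal{T}}^{[k+1]},W_{[1:k]})$, after first using Lemma 2 (equation (\ref{eq:mar})) to drop the queries to servers outside $\mathcal{T}$ from the conditioning so that the index swap is legitimate.

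Next I would introduce message $W_{k+1}$ into the picture. Having moved to superscript $[k+1]$, correctness (\ref{corr}) says $W_{k+1}$ is recoverable from all $N$ answers $A_{[1:N]}^{[k+1]}$ and queries; the idea is to show that conditioned on $S_{\mathcal{X}}$, $Q_{[1:N]}^{[k+1]}$, $W_{[1:k]}$, the answers $A_{\overline{\mathcal{X}}}^{[k+1]}$ carry at least $L$ bits about $W_{k+1}$ plus whatever is left over after also conditioning on $W_{k+1}$ — mimicking the chain $H(A_{\overline{\mathcal{X}}}^{[k+1]}\mid\ldots,W_{[1:k]}) = I(W_{k+1};A_{\overline{\mathcal{X}}}^{[k+1]}\mid\ldots) + H(A_{\overline{\mathcal{X}}}^{[k+1]}\mid\ldots,W_{[1:k+1]})$, and lower-bounding the mutual information term by $L = H(W_{k+1})$ via an argument parallel to Lemma \ref{lemma:int} (using $X$-security (\ref{secur}) and (\ref{indp}) to kill the $S_{\mathcal{X}}$ contribution). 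Combining: summing the ``index-swapped plus $W_{k+1}$-extracted'' bound over all $T$-subsets $\mathcal{T}$ of $\overline{\mathcal{X}}$ and averaging — each server in $\overline{\mathcal{X}}$ appears in a $\frac{T}{N-X}$ fraction of such subsets — converts the per-subset bound into the claimed $\frac{T}{N-X}$ coefficient on $\big(L + H(A_{\overline{\mathcal{X}}}^{[k+1]}\mid S_{\mathcal{X}},Q_{[1:N]}^{[k+1]},W_{[1:k+1]})\big)$. For the second inequality, when $N\leq X+T$ we cannot fit a $T$-subset inside $\overline{\mathcal{X}}$, so instead I would take $\mathcal{T}$ to be all of $\overline{\mathcal{X}}$ padded with enough servers from $\mathcal{X}$ to reach size $T$ (possible since $N-X\leq T$); then $T$-privacy still licenses the index swap on $A_{\overline{\mathcal{X}}}^{[k]}$ directly (no subsetting loss), and the same extraction of $W_{k+1}$ gives coefficient $1$ rather than $\frac{T}{N-X}$.

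The main obstacle I anticipate is making the index-swap step fully rigorous: Lemma \ref{lemma:sprivacy} is stated for a fixed $\mathcal{T}$ and for the joint tuple with $S_{[1:N]}$ and all messages, but here I need it inside a conditional entropy where the conditioning set is $(S_{\mathcal{X}},Q_{\mathcal{T}}^{[k]},W_{[1:k]})$ — a sub-tuple — and I must be careful that restricting to this sub-tuple preserves the identical-distribution property (it does, since marginals of identically distributed tuples are identically distributed, but one must first use (\ref{eq:mar}) to remove $Q_{\overline{\mathcal{T}}}^{[k]}$ from the conditioning, as the answers $A_{\mathcal{T}}^{[k]}$ are conditionally independent of those queries given $Q_{\mathcal{T}}^{[k]}, S_{\mathcal{X}}, W_{[1:k]}$). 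A secondary subtlety is the bookkeeping in the averaging argument — ensuring the combinatorial identity ``fraction of $T$-subsets of an $(N-X)$-set containing a fixed element is $T/(N-X)$'' is applied to the right quantity and that the inequality direction is preserved throughout the sum-and-divide. Once these are handled, the rest is the standard PIR-converse bookkeeping (chain rule, non-negativity of mutual information, dropping conditioning to upper-bound entropies) and should go through routinely.
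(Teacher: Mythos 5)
Your proposal matches the paper's proof essentially step for step: restrict to a $T$-subset $\mathcal{T}\subset\overline{\mathcal{X}}$, use (\ref{eq:mar}) to drop and later restore the queries outside $\mathcal{T}$, use Lemma \ref{lemma:sprivacy} to swap the message index from $k$ to $k+1$, average over the $T$-subsets, then extract $W_{k+1}$ using correctness (\ref{corr}), (\ref{ans_det}), and the independence $I(W_{k+1};S_{\mathcal{X}},Q_{[1:N]}^{[k+1]},W_{[1:k]})=0$; the $N\leq X+T$ branch is likewise handled by applying the swap to $\overline{\mathcal{X}}$ itself, and your padding of $\overline{\mathcal{X}}$ up to size $T$ is exactly why Lemma \ref{lemma:sprivacy} still applies there. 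The one place your justification is thinner than the paper's is the averaging step: the inequality needed, namely that the average of $H(A_{\mathcal{T}}^{[k+1]}\mid S_{\mathcal{X}},Q_{[1:N]}^{[k+1]},W_{[1:k]})$ over $T$-subsets of $\overline{\mathcal{X}}$ is at least $\frac{T}{N-X}H(A_{\overline{\mathcal{X}}}^{[k+1]}\mid S_{\mathcal{X}},Q_{[1:N]}^{[k+1]},W_{[1:k]})$, is the (conditional) Han inequality, and it does not follow from the counting fact that each server lies in a $T/(N-X)$ fraction of the subsets --- that fact combined with subadditivity bounds the average from above, i.e., in the wrong direction. Also note that $W_{k+1}$ must be extracted only after the averaging, at the level of $A_{\overline{\mathcal{X}}}^{[k+1]}$ (as your displayed chain indeed does), since correctness requires all $N$ answers and only $A_{\mathcal{X}}^{[k+1]}$ is determined by the conditioning on $(S_{\mathcal{X}},Q_{[1:N]}^{[k+1]})$.
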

{\it Proof:} First consider $N> X+T$. Consider any set $\mathcal{T} \subset \overline{\mathcal{X}}, |\mathcal{T}| = T$.
\begin{align}
 H(A_{\overline{\mathcal{X}}}^{[k]} | S_{\mathcal{X}}, Q_{[1:N]}^{[k]}, W_{[1:k]}) &\geq H(A_{\mathcal{T}}^{[k]} | S_{\mathcal{X}}, Q_{[1:N]}^{[k]}, W_{[1:k]}) \\
 &\overset{(\ref{eq:mar})}{=} H(A_{\mathcal{T}}^{[k]} | S_{\mathcal{X}}, Q_{\mathcal{T}}^{[k]}, W_{[1:k]})\\
 &\overset{(\ref{eq:sprivacy})}{=} H(A_{\mathcal{T}}^{[k+1]} | S_{\mathcal{X}}, Q_{\mathcal{T}}^{[k+1]}, W_{[1:k]}) \\
 &\overset{(\ref{eq:mar})}{=} H(A_{\mathcal{T}}^{[k+1]} | S_{\mathcal{X}}, Q_{[1:N]}^{[k+1]}, W_{[1:k]})  \label{eq:hh1}
\end{align}
Averaging (\ref{eq:hh1}) over all choices of $\mathcal{T}$ and applying Han's inequality, we have
\begin{align}
&~~~~~~~ H(A_{\overline{\mathcal{X}}}^{[k]} | S_{\mathcal{X}}, Q_{[1:N]}^{[k]}, W_{[1:k]}) \notag\\
&\geq \frac{T}{N-X} H(A_{\overline{\mathcal{X}}}^{[k+1]} | S_{\mathcal{X}}, Q_{[1:N]}^{[k+1]}, W_{[1:k]}) \\
&\overset{(\ref{ans_det})(\ref{corr})}{=}  \frac{T}{N-X} H(A_{\overline{\mathcal{X}}}^{[k+1]}, W_{k+1} | S_{\mathcal{X}}, Q_{[1:N]}^{[k+1]}, W_{[1:k]}) \\
&\overset{}{=}  \frac{T}{N-X} \left( H(W_{k+1} | S_{\mathcal{X}}, Q_{[1:N]}^{[k+1]}, W_{[1:k]}) + H(A_{\overline{\mathcal{X}}}^{[k+1]} | S_{\mathcal{X}}, Q_{[1:N]}^{[k+1]}, W_{[1:k+1]})  \right)\\
&\overset{}{=}  \frac{T}{N-X} \left( L + H(A_{\overline{\mathcal{X}}}^{[k+1]} | S_{\mathcal{X}}, Q_{[1:N]}^{[k+1]}, W_{[1:k+1]})  \right)
\end{align}
where the last step uses $L = H(W_{k+1})$ and $I(W_{k+1} ; S_{\mathcal{X}}, Q_{[1:N]}^{[k+1]}, W_{[1:k]}) = 0$, proved as follows.
\begin{align}
I(W_{k+1} ; S_{\mathcal{X}}, Q_{[1:N]}^{[k+1]}, W_{[1:k]}) &\overset{(\ref{h2})(\ref{h1})}{=} I(W_{k+1} ;  S_{\mathcal{X}}, Q_{[1:N]}^{[k+1]} \mid W_{[1:k]}) \\
&\leq I(W_{[1:k+1]} ;  S_{\mathcal{X}}, Q_{[1:N]}^{[k+1]}) \\
&\overset{(\ref{secur})}{=}  I(W_{[1:k+1]} ; Q_{[1:N]}^{[k+1]} |  S_{\mathcal{X}}) \\
&\leq  I(W_{[1:k+1]}, S_{\mathcal{X}} ; Q_{[1:N]}^{[k+1]} ) \\
&\leq  I(S_{[1:N]} ; Q_{[1:N]}^{[k+1]} ) \\
&\overset{(\ref{indp})}{=} 0 \label{eq:ind1}
\end{align}

Next, consider $N \leq X+T$. The proof is similar to that presented above. Note that $|\mathcal{X}| = N-X \leq T$.
\begin{align}
 H(A_{\overline{\mathcal{X}}}^{[k]} | S_{\mathcal{X}}, Q_{[1:N]}^{[k]}, W_{[1:k]}) &\overset{(\ref{eq:mar})}{=} H(A_{\overline{\mathcal{X}}}^{[k]} | S_{\mathcal{X}}, Q_{\overline{\mathcal{X}}}^{[k]}, W_{[1:k]})\\
 &\overset{(\ref{eq:sprivacy})}{=} H(A_{\overline{\mathcal{X}}}^{[k+1]} | S_{\mathcal{X}}, Q_{\overline{\mathcal{X}}}^{[k+1]}, W_{[1:k]}) \\
  &\overset{(\ref{eq:mar})}{=} H(A_{\overline{\mathcal{X}}}^{[k+1]} | S_{\mathcal{X}}, Q_{[1:N]}^{[k+1]}, W_{[1:k]}) \\
&\overset{(\ref{ans_det})(\ref{corr})}{=} H(A_{\overline{\mathcal{X}}}^{[k+1]}, W_{k+1} | S_{\mathcal{X}}, Q_{[1:N]}^{[k+1]}, W_{[1:k]})\\
&\overset{}{=} H(W_{k+1} | S_{\mathcal{X}}, Q_{[1:N]}^{[k+1]}, W_{[1:k]}) + H(A_{\overline{\mathcal{X}}}^{[k+1]} | S_{\mathcal{X}}, Q_{[1:N]}^{[k+1]}, W_{[1:k+1]}) \\
&\overset{(\ref{eq:ind1})}{=}  L + H(A_{\overline{\mathcal{X}}}^{[k+1]} | S_{\mathcal{X}}, Q_{[1:N]}^{[k+1]}, W_{[1:k+1]})
\end{align}
This completes the proof of Lemma \ref{lemma:recursive}.
$\hfill\square$

Now let us apply Lemma \ref{lemma:recursive} repeatedly for $k = 1, 2, \cdots$. When $N > X+T$, we have
\begin{align}
H(A_{\overline{\mathcal{X}}}^{[1]} | S_{\mathcal{X}}, Q_{[1:N]}^{[1]}, W_1) &\geq \frac{T}{N-X} \left( L + H(A_{\overline{\mathcal{X}}}^{[2]} | S_{\mathcal{X}}, Q_{[1:N]}^{[2]}, W_{[1:2]} \right) \\
&\geq \frac{T}{N-X}\left( L + \frac{T}{N-X}\left(L+H(A_{\overline{\mathcal{X}}}^{[3]} | S_{\mathcal{X}}, Q_{[1:N]}^{[3]}, W_{[1:3]}\right) \right) \\
&\geq \cdots\\
&\geq L\left( \frac{T}{N-X} + \left(\frac{T}{N-X}\right)^2 + \cdots + \left(\frac{T}{N-X}\right)^{K-1} \right) \label{eq:in1}
\end{align}
Similarly, when $N \leq X+T$, we have
\begin{align}
H(A_{\overline{\mathcal{X}}}^{[1]} | S_{\mathcal{X}}, Q_{[1:N]}^{[1]}, W_1) &\geq  L + H(A_{\overline{\mathcal{X}}}^{[2]} | S_{\mathcal{X}}, Q_{[1:N]}^{[2]}, W_{[1:2]}) \\
&\geq \cdots \\
&\geq L(K-1) \label{eq:in2}
\end{align}

\noindent Substituting (\ref{eq:in1}), (\ref{eq:in2}) into (\ref{eq:int}), we have
\begin{align}
& L \leq \sum_{n \in \overline{\mathcal{X}}}D_n - L\left( \frac{T}{N-X} + \left(\frac{T}{N-X}\right)^2 + \cdots + \left(\frac{T}{N-X}\right)^{K-1} \right), ~\mbox{if}~N > X+T.\\
& L \leq \sum_{n \in \overline{\mathcal{X}}}D_n - L(K-1),~\mbox{if}~N \leq X+T.
\end{align}
Averaging over all $\mathcal{X}$, we have
\begin{align}
& L \leq  \left(\frac{N-X}{N} \right)D - L\left( \frac{T}{N-X} + \left(\frac{T}{N-X}\right)^2 + \cdots + \left(\frac{T}{N-X}\right)^{K-1} \right), ~\mbox{if}~N > X+T.\\
& L \leq  \left(\frac{N-X}{N} \right)D - L(K-1),~\mbox{if}~N \leq X+T.
\end{align}
Finally since the rate  is defined as $R = L/D$, we arrive at the final bound.
\begin{align}
& R \leq  \frac{N-X}{N} \left(1+ \frac{T}{N-X} + \left(\frac{T}{N-X}\right)^2 + \cdots + \left(\frac{T}{N-X}\right)^{K-1} \right)^{-1}, ~\mbox{if}~N > X+T.\\
& R \leq  \frac{N-X}{N} \times \frac{1}{K} ,~\mbox{if}~N \leq X+T. \\
\intertext{Thus}
& C_{\mbox{\tiny XSTPIR}}(N, K, X, T)\leq \left(\frac{N-X}{N}\right)C_{\mbox{\tiny TPIR}}(N-X, K, T),
\end{align}
and the proof of Theorem \ref{theorem:converse} is complete.
$\hfill \square$

\section{Proof of Theorem \ref{theorem:mds}}\label{proof:mds}
Let each message consist of $L=N-X$ symbols in $\mathbb{F}_q$, $q\geq N$, and append $X$ instances of $0$ symbols, to create artificial messages of length $N$, 
\begin{align}
\bar{W}_k&=(W_{k1}, W_{k_2}, \cdots, W_{k(N-X)}, \underbrace{0, 0, \cdots, 0}_{X}), && \forall k\in[1:K].
\end{align}
Corresponding to each message $W_k$, let $Z_k=(Z_{k1}, Z_{k2}, \cdots, Z_{kX})\in\mathbb{F}_q^{X}$ be $X$ independent uniform noise symbols,  to be used for $X$-security.  Let $Z_k$ be encoded with an $(N, X)$ MDS code to produce $\bar{Z}_k\in\mathbb{F}_q^N$. For each $k\in[1:K]$ and $n\in[1:N]$, the $n^{th}$ server stores the $n^{th}$ symbol of $\bar{W}_k+\bar{Z}_k$. Thus, each server stores a total of $K$ symbols. The MDS property of $\bar{Z}_k$ ensures that the data storage is $X$-secure. Retrieval is trivial --- in order to retrieve the desired message $W_\theta$, the  user simply downloads everything from all servers. Since the queries do not depend on the desired message, the scheme is $N$-private, so it is also $T$-private. The rate achieved is $\frac{N-X}{NK}$ which matches the capacity for this setting.$\hfill\square$

\section{Proof of Theorem \ref{theorem:main}}\label{proof:main}
Let us present an  XSTPIR scheme for arbitrary $X$, $T$, $N$, $K$, that is asymptotically optimal (as $K\rightarrow\infty$). The  asymptotic capacity is zero for $N\leq X+T$, so we only need to consider $N>X+T$. 
Throughout this scheme we will set
\begin{align}
L&=N-X-T
\end{align}
and we will use the compact notation,
\begin{align}
\Delta&=\prod_{i=1}^{L}(i+\alpha).\label{eq:delta}
\end{align}
$\Delta_n$ will represent the value of $\Delta$ when $\alpha$ is replaced with $\alpha_n$.

Each message $W_k, k\in[1:K]$, consists of $L=N-X-T$ symbols, $W_k=(W_{k1}, W_{k2}, \cdots, W_{kL})$ from a finite field $\mathbb{F}_q$. The field $\mathbb{F}_q$ is assumed to have size $q\geq L+N$, and  characteristic greater than $L-1$. For the design of this scheme, we will need constants $\alpha_n, n\in[1:N]$ that are distinct elements of $\mathbb{G}$,
\begin{align}
\mathbb{G}&=\{\alpha\in\mathbb{F}_q: \alpha+i\neq 0, ~\forall i\in[1:L]\}. \label{eq:gg}
\end{align}
Such $\alpha_n, n\in[1:N]$ must exist because $q\geq L+N$. These constants will be globally known. In the following description of the scheme, we will explain explicitly how the values of these constants are chosen. For now, let us note that because the characteristic of the field is assumed to be greater than $L-1$, the values $\alpha+1, \alpha+2, \cdots, \alpha+L$ are distinct for any $\alpha\in\mathbb{F}_q$.

Let us split the messages into $L$ vectors, so that ${\bf W}_l=(W_{1l}, W_{2l}, \cdots, W_{Kl})$, $l\in[1:L]$, contains the $l^{th}$ symbol of every message. Let ${\bf Z}_{lx}, l\in[1:L], x\in[1:X]$, be independent uniformly random noise vectors from $\mathbb{F}_q^{1\times K}$, that are used to guarantee security. Similarly, let ${\bf Z}'_{lt}, l\in[1:L], t\in[1:T]$, be independent uniformly random noise vectors from $\mathbb{F}_q^{K\times 1}$, that are used to guarantee privacy. The independence between noise vectors, messages, and the user's desired message index $\theta$ is specified as follows.
\begin{align}
H\Big(\left({\bf W}_l\right)_{l\in[1:L]}, \left({\bf Z}_{lx}\right)_{l\in[1:L], x\in[1:X]}, \left({\bf Z}'_{lt}\right)_{l\in[1:L], t\in[1:T]},\theta\Big)&=H(\left({\bf W}_l\right)_{l\in[1:L]})+H(\theta)+KL(X+T)
\end{align}
in $q$-ary units. Let ${\bf Q}_\theta$ represent\footnote{Note that the XSTPIR scheme described in this section works even if ${\bf Q}_\theta$ is an arbitrary vector, i.e., if instead of retrieving one of the $K$ messages, the user wishes to compute an arbitrary linear function of the $K$ messages over $\mathbb{F}_q$. Thus, the scheme automatically settles the asymptotic capacity of the natural $X$-secure  and $T$-private generalization of the linear private computation problem introduced in \cite{Sun_Jafar_PC} (also known as linear private function retrieval \cite{Mirmohseni_Maddah}).} the $\theta^{th}$ column of the $K\times K$ identity matrix, so it contains a $1$ in the $\theta^{th}$ position and zeros everywhere else. Note that 
\begin{align}
({\bf W}_1{\bf Q}_\theta, {\bf W}_2{\bf Q}_\theta, \cdots, {\bf W}_L{\bf Q}_\theta)&=(W_{\theta1}, W_{\theta2}, \cdots, W_{\theta L})=W_\theta
\end{align}
is the message desired by the user. A succinct summary of the storage at each server, the queries, and a partitioning of signal and interference dimensions contained in the answers from each server, is provided below.

\begin{align*}
\begin{array}{cc}\hline
&\mbox{Server `$n$' (Replace $\alpha,\Delta$ with $\alpha_n,\Delta_n$)} \\\hline
\mbox{Storage} & \mathbf{W}_1+(1+\alpha)\mathbf{Z}_{11}+\cdots+(1+\alpha)^X\mathbf{Z}_{1X},\\
(S_n)&\mathbf{W}_2+(2+\alpha)\mathbf{Z}_{21}+\cdots+(2+\alpha)^X\mathbf{Z}_{2X},\\
&\vdots\\
&\mathbf{W}_{L}+(L+\alpha)\mathbf{Z}_{L1}+\cdots+(L+\alpha)^X\mathbf{Z}_{LX}\\\hline
\mbox{Query} &\frac{\Delta}{1+\alpha}\Big(\mathbf{Q_\theta}+(1+\alpha)\mathbf{Z}_{11}'+\cdots+(1+\alpha)^T\mathbf{Z}_{1T}'\Big),\\
(Q_n^{[\theta]})&\frac{\Delta}{2+\alpha}\Big(\mathbf{Q_\theta}+(2+\alpha)\mathbf{Z}_{21}'+\cdots+(2+\alpha)^T\mathbf{Z}_{2T}'\Big),\\
&\vdots\\
&\frac{\Delta}{L+\alpha}\Big(\mathbf{Q_\theta}+(L+\alpha)\mathbf{Z}_{L1}'+\cdots+(L+\alpha)^T\mathbf{Z}_{LT}'\Big)\\\hline
\multicolumn{2}{c}{\mbox{Desired symbols appear along vectors}}\\
\multicolumn{2}{c}{\vect{\Delta}\circ\Big(\vect{(1+\alpha)^{-1}}, \vect{(2+\alpha)^{-1}}, \cdots, \vect{(L+\alpha)^{-1}}\Big)}\\
\multicolumn{2}{c}{\mbox{Interference  appears along vectors}}\\
\multicolumn{2}{c}{\vect{\Delta}\circ\Big(\vect{1}, \vect{(1+\alpha)},  \cdots, \vect{(1+\alpha)^{ X+T-1}}, \vect{(2+\alpha)}, \cdots, \vect{(2+\alpha)^{ X+T-1}}, \cdots,} \\
\multicolumn{2}{c} {\hspace{1cm}\cdots, \vect{(L+\alpha)}, \cdots, \vect{(L+\alpha)^{ X+T-1}}\Big)}\\\hline
\end{array}
\end{align*}

Initially, the user knows only his desired message index $\theta$ and the noise terms ${\bf Z}'_{lt}, l\in[1:L], t\in[1:T]$, all of which are privately generated by the user. Each server $n\in [1:N]$ knows only its stored information $S_n$. The storage $S_n$ at Server $n$ may be viewed as a $1\times LK$ row vector formed by concatenating the $L$ row vectors, ${\bf W}_l+\sum_{x=1}^X(l+\alpha_n)^x{\bf Z}_{lx}$, $l\in[1:L]$. Similarly, the query $Q_n^{[\theta]}$ may be viewed as an $LK\times 1$ column vector formed by concatenating the $L$ column vectors, $\frac{\Delta_n}{l+\alpha_n}\left({\bf Q}_\theta + \sum_{t=1}^T(l+\alpha_n)^t{\bf Z}_{lt}'\right)$, $l\in[1:L]$.

Upon receiving the query $Q^{[\theta]}_n$ from the user, Server $n$ responds with the answer $A_n^{[\theta]}$ that is exactly one symbol in $\mathbb{F}_q$, found by multiplying  $S_n$ with $Q^{[\theta]}_n$.
\begin{align}
A_n^{[\theta]}&=S_nQ^{[\theta]}_n .
\end{align}
This produces a single equation in a total of $L(X+1)(T+1)$ terms. Out of these, $L$ terms are desired message symbols ${\bf W}_l{\bf Q}_\theta$, $l\in[1:L]$, and the remaining $L(X+1)(T+1)-L$ terms are undesired, or interference terms. The interference terms include $LT$ terms of the type ${\bf W}_l {\bf Z}_{lt}'$, $LX$ terms of the type ${\bf Z}_{lx}{\bf Q}_\theta$, and $LXT$ terms of the type ${\bf Z}_{lx}{\bf Z}_{lt}'$. The user obtains one such equation from each server, for a total of $N$ equations, from which he must be able to retrieve his $L$ desired symbols. The key to this is the alignment of $L(X+1)(T+1)-L$ interference terms into $N-L$ dimensions, leaving $L$ dimensions free from interference from which the $L$ desired symbols can be decoded.

First let us identify the desired signal dimensions, i.e., the vectors along which desired symbols are seen by the user.  Each answer $A_n^{[\theta]}$ contains the desired symbols $\frac{\Delta_n}{l+\alpha_n}{\bf W}_l{\bf Q}_\theta = \frac{\Delta_n}{l+\alpha_n}W_{\theta l}$, $l\in[1:L]$. These $L$ desired symbols appear along the following $L$ vectors.
\begin{align}
\left[
\begin{matrix}
\frac{\Delta_1}{1+\alpha_1}\\
\frac{\Delta_2}{1+\alpha_2}\\
\vdots\\
\frac{\Delta_N}{1+\alpha_N}\\
\end{matrix}
\right], \left[
\begin{matrix}
\frac{\Delta_1}{2+\alpha_1}\\
\frac{\Delta_2}{2+\alpha_2}\\
\vdots\\
\frac{\Delta_N}{2+\alpha_N}\\
\end{matrix}
\right], \cdots, \left[
\begin{matrix}
\frac{\Delta_1}{L+\alpha_1}\\
\frac{\Delta_2}{L+\alpha_2}\\
\vdots\\
\frac{\Delta_N}{L+\alpha_N}\\
\end{matrix}
\right]\triangleq 
\vect{\Delta}\circ\Big(\vect{(1+\alpha)^{-1}}, \vect{(2+\alpha)^{-1}}, \cdots, \vect{(L+\alpha)^{-1}}\Big).
\end{align}
Recall that $\circ$ represents the Hadamard product. 
Similarly, the vectors along which interference symbols appear are identified as follows.
\begin{align}
{\vect{\Delta}\circ\Big(\vect{1}, \vect{(1+\alpha)},  \cdots, \vect{(1+\alpha)^{ X+T-1}}, \vect{(2+\alpha)}, \cdots, \vect{(2+\alpha)^{ X+T-1}}, \cdots,}\nonumber\\
 {\hspace{1cm}\cdots, \vect{(L+\alpha)}, \cdots, \vect{(L+\alpha)^{ X+T-1}}\Big)} .
\end{align}
Thus, the vector of answers from all $N$ servers can be expressed as
\begin{align}
\vect{A^{[\theta]}}&=\sum_{l=1}^{L}W_{\theta l}\vect{\Delta}\circ\vect{(l+\alpha)^{-1}}+\sum_{l=1}^L\sum_{i=0}^{X+T-1}\vect{\Delta}\circ\vect{(l+\alpha)^{ i}}I_{li}
\end{align}
for some interference terms $I_{li}$ that are sums of various ${\bf W}_l {\bf Z}_{lt}'$,  ${\bf Z}_{lx}{\bf Q}_\theta$, and ${\bf Z}_{lx}{\bf Z}_{lt}'$ terms. The exact form of $I_{li}$ terms  is not important for our analysis. Using binomial expansion to write each $\vect{(l+\alpha)^{ i}}$ vector as $\sum_{j=0}^{i}\binom{i}{j}l^j\vect{\alpha^{ i-j}}$, and grouping terms by the vectors $\vect{\alpha^{ i}}$, we can write,
\begin{align}
\vect{A^{[\theta]}}&=\sum_{l=1}^{L}W_{\theta l}\vect{\Delta}\circ\vect{(l+\alpha)^{-1}}+
\sum_{i=0}^{X+T-1}\vect{\Delta}\circ\vect{\alpha^{ i}} I_{i}' .
\end{align}
Thus,  all interference is aligned within the subspace spanned by vectors $\vect{\Delta}$, $\vect{\Delta}\circ\vect{\alpha}$, $\dots$, ${\vect{\Delta}\circ\vect{\alpha^{ X+T-1}}}$. As explained in Section \ref{sec:cross}, this is because of \emph{cross subspace alignment}.

In matrix notation, we have,
\begin{align}
\vect{A^{[\theta]}}=\left[
\begin{matrix}
A_1^{[\theta]}\\
A_2^{[\theta]}\\
\vdots\\
A_N^{[\theta]}\\
\end{matrix}
\right]
&=
M_N
\left[\begin{matrix} 
W_{\theta1}\\
\vdots\\
W_{\theta L}\\
I_{0}'\\
\vdots\\
I_{(X+T-1)}'
\end{matrix}
\right]
\end{align}
where the $N\times N$ square matrix (note that $L+X+T=N$)
\begin{align}
M_N&=\left[\begin{matrix}
\frac{\Delta_1}{1+\alpha_1}&\cdots&\frac{\Delta_1}{L+\alpha_1}&\Delta_1&\Delta_1\alpha_1&\cdots&\Delta_1\alpha_1^{X+T-1}\\
\frac{\Delta_2}{1+\alpha_2}&\cdots&\frac{\Delta_2}{L+\alpha_2}&\Delta_2&\Delta_2\alpha_2&\cdots&\Delta_2\alpha_2^{X+T-1}\\
\vdots\\
\frac{\Delta_N}{1+\alpha_N}&\cdots&\frac{\Delta_N}{L+\alpha_N}&\Delta_N&\Delta_N\alpha_N&\cdots&\Delta_N\alpha_N^{X+T-1}
\end{matrix}
\right]\\
&=\left[\begin{matrix}
\vect{\Delta}\circ\vect{(1+\alpha)^{-1}}&\cdots&\vect{\Delta}\circ\vect{(L+\alpha)^{-1}}&\vect{\Delta}&\vect{\Delta}\circ\vect{\alpha}&\cdots&\vect{\Delta}\circ\vect{\alpha^{ X+T-1}}
\end{matrix}
\right]
\end{align}
is called the decoding matrix. Evidently, if the decoding matrix is invertible, then  the user can recover his $L$ desired message symbols. We show that if $\alpha_n, n\in[1:N]$ are distinct elements of $\mathbb{G}$, then $M_N$ is invertible. This result is stated in the following lemma. Note that in our design, we have chosen $\alpha_n$ as distinct elements, so Lemma \ref{disinv} guarantees that the scheme satisfies the correctness constraint. Fixing distinct values of $\alpha_1, \cdots, \alpha_N$ completes the design of the scheme. 

\begin{lemma}\label{disinv}
The decoding matrix $M_N$ is invertible if all $\alpha_n, n\in[1:N]$ are distinct.
\end{lemma}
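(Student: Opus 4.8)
The plan is to show that $\det M_N \neq 0$ by factoring out the common scalar $\Delta_n$ from each row and reducing the claim to the invertibility of a structured matrix that is a horizontal concatenation of a ``Cauchy-like'' block and a Vandermonde block. First I would write $M_N = \operatorname{diag}(\Delta_1,\dots,\Delta_N)\, M_N'$, where the $n^{th}$ row of $M_N'$ is
\begin{align*}
\left(\frac{1}{1+\alpha_n},\ \dots,\ \frac{1}{L+\alpha_n},\ 1,\ \alpha_n,\ \dots,\ \alpha_n^{X+T-1}\right).
\end{align*}
Since $\alpha_n \in \mathbb{G}$, each $i+\alpha_n \neq 0$ for $i \in [1:L]$, so $\Delta_n = \prod_{i=1}^L(i+\alpha_n) \neq 0$; hence $\operatorname{diag}(\Delta_1,\dots,\Delta_N)$ is invertible and it suffices to prove $\det M_N' \neq 0$.

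Next I would clear denominators row-wise in a bookkeeping sense: multiply the determinant expansion through by $\prod_n \Delta_n$ to see that $\det M_N' \cdot \prod_n \Delta_n$ equals the determinant of the matrix $\widetilde M$ whose $n^{th}$ row is
\begin{align*}
\left(\tfrac{\Delta_n}{1+\alpha_n},\ \dots,\ \tfrac{\Delta_n}{L+\alpha_n},\ \Delta_n,\ \Delta_n\alpha_n,\ \dots,\ \Delta_n\alpha_n^{X+T-1}\right),
\end{align*}
i.e.\ back to $M_N$ itself — so this is circular and I instead argue directly on $M_N'$ via a polynomial/degree argument. The cleanest route: regard the columns of $M_N'$ as evaluations at the points $\alpha_n$ of the $N = L + X + T$ rational functions
\begin{align*}
f_1(\alpha) = \tfrac{1}{1+\alpha},\ \dots,\ f_L(\alpha) = \tfrac{1}{L+\alpha},\quad g_0(\alpha)=1,\ g_1(\alpha)=\alpha,\ \dots,\ g_{X+T-1}(\alpha)=\alpha^{X+T-1}.
\end{align*}
The determinant vanishes for some choice of distinct $\alpha_n$ if and only if these $N$ functions are linearly dependent over $\mathbb{F}_q$ after restriction to any $N$ distinct points, which (since each is a rational function with controlled pole/degree structure) happens only if they are identically linearly dependent as functions. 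So I would assume a dependence $\sum_{l=1}^L \tfrac{c_l}{l+\alpha} + \sum_{j=0}^{X+T-1} d_j \alpha^j \equiv 0$, clear denominators by multiplying by $\Delta(\alpha) = \prod_{i=1}^L(i+\alpha)$, and obtain a polynomial identity: $\sum_l c_l \prod_{i\neq l}(i+\alpha) + \Delta(\alpha)\sum_j d_j \alpha^j \equiv 0$. Evaluating at $\alpha = -l_0$ kills $\Delta(\alpha)$ and all terms except $l=l_0$, giving $c_{l_0}\prod_{i\neq l_0}(i-l_0) = 0$; since the characteristic exceeds $L-1$, the product of the nonzero integers $i - l_0$ is nonzero in $\mathbb{F}_q$, forcing $c_{l_0}=0$ for every $l_0$. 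Then $\Delta(\alpha)\sum_j d_j\alpha^j \equiv 0$ with $\Delta \not\equiv 0$ forces all $d_j = 0$.

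The one subtlety to nail down — and the step I expect to be the main obstacle — is passing from ``no nontrivial dependence of the functions'' to ``$\det M_N' \neq 0$ at the specific distinct points $\alpha_1,\dots,\alpha_N \in \mathbb{G}$.'' This is not automatic over a finite field, since a nonzero polynomial can vanish at prescribed points. The fix is to observe that the determinant, after multiplying row $n$ by $\Delta_n$, is (up to the nonzero factor $\prod_n \Delta_n^{-1}$, which we have already isolated) a polynomial in $\alpha_1,\dots,\alpha_N$ that is \emph{antisymmetric} and hence divisible by the Vandermonde product $\prod_{m<n}(\alpha_n - \alpha_m)$; a degree count shows the quotient is a constant (or a product of the fixed nonzero integer factors arising from the $f_l$), which the above polynomial-identity computation shows is nonzero in $\mathbb{F}_q$ precisely because $\operatorname{char}\mathbb{F}_q > L-1$. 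Thus $\det M_N = \big(\prod_n \Delta_n\big)\cdot c \cdot \prod_{m<n}(\alpha_n-\alpha_m)$ for an explicit nonzero constant $c$, and since the $\alpha_n$ are distinct and lie in $\mathbb{G}$, every factor is nonzero, so $M_N$ is invertible. I would present the antisymmetry-plus-degree argument as the backbone and use the evaluation-at-$-l_0$ trick to pin down that the leftover constant is a product of factors of the form $\prod_{i \neq l}(i-l)$ and powers of $-1$, all manifestly nonzero under the stated characteristic hypothesis.
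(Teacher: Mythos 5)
Your core computation is exactly the paper's: clear denominators against $\Delta(\alpha)$, observe that the cleared expression is a polynomial of degree at most $N-1$, conclude it is the zero polynomial, then recover $c_{l_0}=0$ by evaluating at $\alpha=-l_0$ (using characteristic $>L-1$) and $d_j=0$ from the leading coefficients. The only real difference is where you apply it, and your worry about the finite-field transfer is unnecessary: the paper does not first prove independence of the functions and then pass to the specific points; it assumes a dependence among the columns of $M_N$ \emph{at the given points} $\alpha_1,\dots,\alpha_N\in\mathbb{G}$, so the cleared polynomial of degree at most $N-1$ already has $N$ distinct roots and must vanish identically — no pole issues arise since $\alpha_n\in\mathbb{G}$, and your evaluation-at-$-l_0$ step then finishes the proof verbatim. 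This makes the entire antisymmetry/Vandermonde superstructure redundant. If you do want to keep that route, two points need care: the "degree count" must be done per variable ($\deg_{\alpha_n}\det M_N\le N-1$ versus degree exactly $N-1$ for the Vandermonde factor), since the naive total-degree bound exceeds $\binom{N}{2}$ by $L(L-1)/2$ and so does not force a constant quotient; and the cleanest way to see the constant is nonzero is to note that the columns of $M_N$ are evaluations of $N$ polynomials of degree at most $N-1$, so $M_N=VC$ with $V$ the Vandermonde matrix and $C$ the coefficient matrix, whence the constant equals $\det C$, which is nonzero precisely by the linear-independence computation you already carried out. Either way the lemma follows; the paper's direct contradiction argument is simply the shortest path.
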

\begin{proof}
To set up the proof by contradiction, suppose on the contrary that $M_N$ is singular. Then there must exist $c_n \in \mathbb{F}_q, n\in[1:N]$, at least one of which is non-zero, such that 
\begin{equation}
c_1\vect{\Delta}\circ\vect{(1+\alpha)^{-1}}+\cdots+c_L\vect{\Delta}\circ\vect{(L+\alpha)^{-1}}+c_{L+1}\vect{\Delta}+c_{L+2}\vect{\Delta}\circ\vect{\alpha}+\cdots+c_N\vect{\Delta}\circ\vect{\alpha^{ X+T-1}}=\vect{0} \label{eq:de}
\end{equation}
where $\vect{0}$ is the vector whose elements are all 0.
Now consider $n$-th row of (\ref{eq:de}).
\begin{equation}
c_1\frac{\Delta_n}{1+\alpha_n}+\cdots+c_L\frac{\Delta_n}{L+\alpha_n}+c_{L+1}\Delta_n+c_{L+2}\Delta_n\alpha_n+\cdots+c_N\Delta_n\alpha_n^{X+T-1}=0.
\end{equation}
From (\ref{eq:delta}) and (\ref{eq:gg}), we know that $\Delta_n\neq0$. Then $\alpha_n$ must be the root of the following polynomial
\begin{align}
g(\alpha) &= \sum_{i=1}^L c_i\left(\frac{\Delta}{i+\alpha}\right) + \sum_{i=L+1}^N c_i \Delta \alpha^{i-(L+1)}
\end{align}
Note that $\Delta$ (as a function of $\alpha$) has order $L$ and $i+\alpha$ is a factor of $\Delta$ (refer to (\ref{eq:delta})), so $g(\alpha)$ has order \emph{at most} $N-1$. If $g(\alpha)$ is a non-zero polynomial, then it can have at most $N-1$ roots over $\mathbb{F}_q$. Now $\alpha_n, n\in[1:N]$ are $N$ distinct roots of $g(\alpha)$, thus $g(\alpha)$ must be the zero polynomial, i.e., the coefficients of all monomials in $g(\alpha)$ must be zero. The coefficient of $\alpha^{N-1}$ is $c_N$ so we must have $c_N=0$. Then, the remaining coefficient of $\alpha^{N-2}$ is $c_{N-1}$, so we must have $c_{N-1}=0$. Similarly, we find $c_{L+1}=c_{L+2}=\cdots = c_N=0$, leaving us with 
\begin{align}
g(\alpha) &= \sum_{i=1}^L c_i\left(\frac{\Delta}{i+\alpha}\right) .
\end{align}
Now, if this $g(\alpha)$ is the zero polynomial, then it must be zero for every $\alpha \in \mathbb{F}_q$. Choosing $\alpha$ such\footnote{Note that $\frac{\Delta}{i+\alpha}$ is simply a compact notation for $\prod_{l\in[1:L], l\neq i}(l+\alpha)$, i.e., it only means that the $(i+\alpha)$ factor is eliminated from $\Delta$, so there is no `division by $0$' when we set $i+\alpha=0$ in $\frac{\Delta}{i+\alpha}$.} that $(i+\alpha)=0$, gives us $c_i=0$ for every $i\in[1:L]$. Thus, we have $c_1=c_2=\cdots=c_N=0$. This is a contradiction since we assumed that at least one of $c_n, n\in[1:N]$ is non-zero. Thus, the proof is complete. $\hfill\square$
\end{proof}

Now consider the security guarantee. For any $X$ colluding servers, $i_1, i_2, \cdots, i_X$, the $X$ observations, $U_{kl1}, \cdots, U_{klX}$, of each message symbol $W_{kl}, k\in[1:K], l\in[1:L]$, are protected by  noise terms as follows.
\begin{align}
\left[
\begin{matrix}
U_{kl1}\\
\vdots\\
U_{klX}
\end{matrix}
\right]&=\left[
\begin{matrix}
W_{kl}\\
\vdots\\
W_{kl}
\end{matrix}
\right]+\underbrace{\left[
\begin{matrix}
l+\alpha_{i_1} & (l+\alpha_{i_1})^2 &\cdots& (l+\alpha_{i_1})^X \\
l+\alpha_{i_2} & (l+\alpha_{i_2})^2 &\cdots& (l+\alpha_{i_2})^X\\
\vdots&\vdots&&\vdots\\
l+\alpha_{i_X} & (l+\alpha_{i_X})^2 &\cdots& (l+\alpha_{i_X})^X
\end{matrix}
\right]}_{P}\underbrace{\left[
\begin{matrix}
{\bf Z}_{l1}(k)\\
\vdots\\
{\bf Z}_{lX}(k)
\end{matrix}
\right]}_{Z}\\
&=W_{kl}\underbrace{\left[
\begin{matrix}
1\\
\vdots\\
1
\end{matrix}
\right]}_{\bf 1}+
\left[\begin{matrix}
l+\alpha_{i_1}&0&\cdots&0\\
0&l+\alpha_{i_2}&\cdots&0\\
0&0&\ddots&0\\
0&0&\cdots&l+\alpha_{i_X}
\end{matrix}
\right]\left[
\begin{matrix}
1& l+\alpha_{i_1} &\cdots& (l+\alpha_{i_1})^{X-1} \\
1& l+\alpha_{i_2} &\cdots& (l+\alpha_{i_2})^{X-1}\\
\vdots&\vdots&&\vdots\\
1 & l+\alpha_{i_X} &\cdots& (l+\alpha_{i_X})^{X-1}
\end{matrix}
\right]\left[
\begin{matrix}
{\bf Z}_{l1}(k)\\
\vdots\\
{\bf Z}_{lX}(k)
\end{matrix}
\right] .
\end{align}
where ${\bf Z}_{lx}(k)$ is the $k^{th}$ element of the vector ${\bf Z}_{lx}$.
Note that $P$ is a product of a diagonal matrix which is invertible because  $(l+\alpha_{i_j})$ are non-zero, and a Vandermonde matrix which is invertible because $(l+\alpha_{i_j})$ are distinct. Therefore, $P$ is invertible, and the observations are independent of the message symbols as shown below.
\begin{align}
I(W_{kl}; \left(U_{klx}\right)_{x\in[1:X]})&=I(W_{kl}; W_{kl}{\bf 1}+PZ)=I(W_{kl}; W_{kl}P^{-1}{\bf 1}+Z)=I(W_{kl}; Z)=0.
\end{align}
Furthermore, since the noise terms protecting each message symbol $W_{kl}, k\in[1:K], l\in[1:L]$, i.e., ${\bf Z}_{lx}(k), x\in[1:X]$ are independent across $(k,l,x)$, security is preserved for all data.

The noise terms protecting each \emph{query} also have the same structure and independence properties by design. Therefore, it follows from the same reasoning that user's privacy is protected from any $T$ colluding servers.

Finally, note that the user is able to retrieve $L=N-X-T$ desired $q$-ary symbols by downloading $N$ $q$-ary symbols, one from each server. The rate achieved is $L/N=1-(X+T)/N$, which is the asymptotic capacity for this general setting. This completes the proof of Theorem \ref{theorem:main}. $\hfill\square$

\subsection{Example: $(X=1)$ Secure, $(T=1)$ Private Scheme with $N=5$ Servers}
Each message consists of $L=3$ symbols from a finite field $\mathbb{F}_q$, $q\geq N+L=8$, and characteristic greater than 2.  For this setting, $\Delta=(1+\alpha)(2+\alpha)(3+\alpha)$.
\begin{align*}
\begin{array}{cc}\hline
&\mbox{Server `$n$' (Replace $\alpha,\Delta$ with $\alpha_n,\Delta_n$)} \\\hline
\mbox{Storage} & \mathbf{W}_1+(1+\alpha)\mathbf{Z}_{1},\\
(S_n) & \mathbf{W}_2+(2+\alpha)\mathbf{Z}_{2},\\
 & \mathbf{W}_3+(3+\alpha)\mathbf{Z}_{3}\\\hline
\mbox{Query} &\frac{\Delta}{1+\alpha}\Big(\mathbf{Q_\theta}+(1+\alpha)\mathbf{Z}_{1}'\Big),\\
(Q_n^{[\theta]}) &\frac{\Delta}{2+\alpha}\Big(\mathbf{Q_\theta}+(2+\alpha)\mathbf{Z}_{2}'\Big),\\
&\frac{\Delta}{3+\alpha}\Big(\mathbf{Q_\theta}+(3+\alpha)\mathbf{Z}_{3}'\Big)\\\hline
\multicolumn{2}{c}{\mbox{Desired symbols appear along vectors}}\\
\multicolumn{2}{c}{\vect{\Delta}\circ\Big(\vect{(1+\alpha)^{-1}}, \vect{(2+\alpha)^{-1}}, \vect{(3+\alpha)^{-1}}\Big)}\\
\multicolumn{2}{c}{\mbox{Interference symbols appear along vectors}}\\
\multicolumn{2}{c}{\vect{\Delta}\circ\Big(\vect{1}, \vect{1+\alpha}, \vect{2+\alpha}, \vect{3+\alpha}\Big)}\\\hline
\end{array}
\end{align*}
The answers from all $N=5$ servers may be written explicitly as,
\begin{align}
\vect{A^{[\theta]}}&=\vect{\Delta}\circ\vect{(1+\alpha)^{-1}}{\bf W}_1{\bf Q}_\theta +\vect{\Delta}\circ\vect{(2+\alpha)^{-1}}{\bf W}_2{\bf Q}_\theta +
\vect{\Delta}\circ\vect{(3+\alpha)^{-1}}{\bf W}_3{\bf Q}_\theta\nonumber\\
&\hspace{1cm} +
\vect{\Delta} \underbrace{\left({\bf W}_1{\bf Z}_1'+{\bf W}_2{\bf Z}_2'+{\bf W}_3{\bf Z}_3'+{\bf Z}_{1}{\bf Q}_\theta+{\bf Z}_{2}{\bf Q}_\theta+{\bf Z}_{3}{\bf Q}_\theta\right)}_{I_{10}+I_{20}+I_{30}}\nonumber\\
&\hspace{1cm} +\vect{\Delta}\circ\vect{(1+\alpha)} \underbrace{\left({\bf Z}_{1}{\bf Z}_1'\right)}_{I_{11}}+\vect{\Delta}\circ\vect{(2+\alpha)} \underbrace{\left({\bf Z}_{2}{\bf Z}_2'\right)}_{I_{21}}+\vect{\Delta}\circ\vect{(3+\alpha)} \underbrace{\left({\bf Z}_{3}{\bf Z}_3'\right)}_{I_{31}}\\
&=\vect{\Delta}\circ\vect{(1+\alpha)^{-1}}W_{\theta 1} +\vect{\Delta}\circ\vect{(2+\alpha)^{-1}}W_{\theta 2} +
\vect{\Delta}\circ\vect{(3+\alpha)^{-1}}W_{\theta 3}\nonumber\\
&\hspace{1cm}  +\vect{\Delta}(I_{10}+I_{20}+I_{30}+I_{11}+2I_{21}+3I_{31})+\vect{\Delta}\circ\vect{\alpha}(I_{11}+I_{21}+I_{31}) .
\end{align}
Privacy and security are guaranteed since $1+\alpha_n\neq0, \forall n\in[1:5]$, the messages and queries are hidden behind the noise. 

Interference terms align into the space spanned by the two vectors, $\vect{\Delta}, \vect{\Delta}\circ\vect{\alpha}$, while the $3$ symbols of the desired message appear along $\vect{\Delta}\circ\vect{(1+\alpha)^{-1}}, \vect{\Delta}\circ\vect{(2+\alpha)^{-1}}, \vect{\Delta}\circ\vect{(3+\alpha)^{-1}}$.  Independence of the $3$ desired signal dimensions from the two  interference dimensions is trivially verified, because the highest exponent of $\alpha$ along desired signal dimensions is $2$, but each interference dimension has an $\alpha^3$ term (contributed by $\Delta$). Independence of the $3$ desired signal dimensions among themselves is also easily verified, because for
\begin{align}
c_1(2+\alpha)(3+\alpha)+c_2(1+\alpha)(3+\alpha)+c_3(1+\alpha)(2+\alpha)
\end{align}
to be the zero polynomial it must be zero everywhere, but in that case, setting $\alpha+i=0$ for $i=1,2,3$, leads us to  $c_1=c_2=c_3=0$, thus proving their independence. The rate achieved is $3/5$, which matches the asymptotic capacity for this setting.
\subsection{Example: $(X=2)$ Secure, $(T=1)$ Private Scheme with $N=4$ Servers}
Each message consists of $L=1$ symbol from a finite field $\mathbb{F}_q$, $q\geq N+L=5$.  $\Delta=(1+\alpha)$.
\begin{align*}
\begin{array}{cc}\hline
&\mbox{Server `$n$' (Replace $\alpha,\Delta$ with $\alpha_n,\Delta_n$)} \\\hline
\mbox{Storage $(S_n)$} & \mathbf{W}_1+(1+\alpha)\mathbf{Z}_{11}+(1+\alpha)^2\mathbf{Z}_{12}\\\hline
\mbox{Query $(Q_n^{[\theta]})$} &\mathbf{Q_\theta}+(1+\alpha)\mathbf{Z}_1'\\\hline
\multicolumn{2}{c}{\mbox{Desired symbols appear along vector}}\\
\multicolumn{2}{c}{\vect{1}}\\
\multicolumn{2}{c}{\mbox{Interference symbols appear along vectors}}\\
\multicolumn{2}{c}{\vect{\Delta}\circ\Big(\vect{1}, \vect{1+\alpha}, \vect{(1+\alpha)^{2}}\Big)}\\\hline
\end{array}
\end{align*}
The answers from all $N=4$ servers may be written explicitly as,
\begin{align}
\vect{A^{[\theta]}}&=\vect{1}{\bf W}_1{\bf Q}_\theta +\vect{(1+\alpha)} \left({\bf W}_1{\bf Z}_1'+{\bf Z}_{11}{\bf Q}_\theta\right)+\vect{(1+\alpha)^{2}} \left({\bf Z}_{11}{\bf Z}_1'+{\bf Z}_{12}{\bf Q}_\theta\right)+\vect{(1+\alpha)^{3}}{\bf Z}_{12}{\bf Z}_1'\\
&=\vect{1}W_{\theta 1} +\vect{\Delta}\circ
\left( \vect{1}
\underbrace{\left({\bf W}_1{\bf Z}_1'+{\bf Z}_{11}{\bf Q}_\theta\right)}_{I_{10}}+\vect{(1+\alpha)} \underbrace{\left({\bf Z}_{11}{\bf Z}_1'+{\bf Z}_{12}{\bf Q}_\theta\right)}_{I_{11}}+\vect{(1+\alpha)^{2}}\underbrace{{\bf Z}_{12}{\bf Z}_1'}_{I_{12}}\right)\\
&=\vect{1}W_{\theta 1} +\vect{\Delta}\underbrace{\left(I_{10}+I_{11}+I_{12}\right)}_{I_{0}'}+\vect{\Delta}\circ\vect{\alpha}\underbrace{(I_{11}+2I_{12})}_{I_{1}'}+\vect{\Delta}\circ\vect{\alpha^{2}}\underbrace{(I_{12})}_{I_2'} .
\end{align}
Interference aligns in the space spanned by the three vectors, $\vect{\Delta}, \vect{\Delta}\circ\vect{\alpha}, \vect{\Delta}\circ\vect{\alpha^{2}}$, while the desired symbol appears along the vector of all ones. The independence of these directions is easily established. Privacy is guaranteed because  $ 1+\alpha_n\neq 0$, $\forall n\in[1:4]$, so the queries are hidden behind random noise. Security is guaranteed because for any $X=2$ colluding servers, $i$ and $j$, the independent noise protecting each message $W_{k}$, $k\in[1:K]$, 
\begin{align}
\underbrace{\left[
\begin{matrix}
1+\alpha_i & (1+\alpha_i)^2\\
1+\alpha_j & (1+\alpha_j)^2
\end{matrix}
\right]}_{P_{ij}}\left[
\begin{matrix}
{\bf Z}_{11}(k)\\
{\bf Z}_{12}(k)
\end{matrix}
\right]&=
\left[\begin{matrix}
1+\alpha_i&0\\
0&1+\alpha_j
\end{matrix}
\right]\left[
\begin{matrix}
1 & (1+\alpha_i)\\
1 & (1+\alpha_j)
\end{matrix}
\right]\left[
\begin{matrix}
{\bf Z}_{11}(k)\\
{\bf Z}_{12}(k)
\end{matrix}
\right]
\end{align}
spans $X=2$ dimensions, because $P_{ij}$  is invertible for distinct and non-zero values of $(1+\alpha_i), (1+\alpha_j)$.
The rate achieved is $1/4$ which matches the asymptotic capacity for this setting.

\subsection{Example: $(X=1)$ Secure, $(T=2)$ Private Scheme with $N=5$ Servers}
Each message consists of $L=N-X-T=2$ symbols from $\mathbb{F}_q$, $q\geq 7$.
\begin{align}
\Delta&=(1+\alpha)(2+\alpha).
\end{align}
\begin{align*}
\begin{array}{cc}\hline
&\mbox{Server `$n$' (Replace $\alpha,\Delta$ with $\alpha_n,\Delta_n$)} \\\hline
\mbox{Storage} & \mathbf{W}_1+(1+\alpha)\mathbf{Z}_{1},\\
(S_n) &\mathbf{W}_2+(2+\alpha)\mathbf{Z}_{2}\\\hline
\mbox{Query} &\frac{\Delta}{1+\alpha}\Big(\mathbf{Q_\theta}+(1+\alpha)\mathbf{Z}_{11}'+(1+\alpha)^2\mathbf{Z}_{12}'\Big),\\
(Q_n^{[\theta]}) &\frac{\Delta}{2+\alpha}\Big(\mathbf{Q_\theta}+(2+\alpha)\mathbf{Z}_{21}'+(2+\alpha)^2\mathbf{Z}_{22}'\Big)\\\hline
\multicolumn{2}{c}{\mbox{Desired symbols appear along vectors}}\\
\multicolumn{2}{c}{\vect{\Delta}\circ\Big(\vect{(1+\alpha)^{-1}}, \vect{(2+\alpha)^{-1}}\Big)}\\
\multicolumn{2}{c}{\mbox{Interference symbols appear along vectors}}\\
\multicolumn{2}{c}{\vect{\Delta}\circ\Big(\vect{1}, \vect{1+\alpha}, \vect{(1+\alpha)^{2}}, \vect{2+\alpha}, \vect{(2+\alpha)^{2}}\Big)}\\\hline
\end{array}
\end{align*}

The answers from all $N=5$ servers may be written explicitly as,
\begin{align}
\vect{A^{[\theta]}}&=\vect{\Delta}\circ\vect{(1+\alpha)^{-1}}{\bf W}_1{\bf Q}_\theta +\vect{\Delta}\circ\vect{(2+\alpha)^{-1}}{\bf W}_2{\bf Q}_\theta \\
&\hspace{1cm}+\vect{\Delta}
\underbrace{\left({\bf W}_1{\bf Z}_{11}'+{\bf W}_2{\bf Z}_{21}'+{\bf Z}_{1}{\bf Q}_\theta+{\bf Z}_{2}{\bf Q}_\theta\right)}_{I_{10}+I_{20}}+\vect{\Delta}\circ\vect{(2+\alpha)^{2}}\underbrace{{\bf Z}_{2}{\bf Z}_{22}'}_{I_{22}}\nonumber\\
&\hspace{1cm}+\vect{\Delta}\circ\vect{(1+\alpha)} \underbrace{\left({\bf Z}_{1}{\bf Z}_{11}'+{\bf W}_{1}{\bf Z}_{12}'\right)}_{I_{11}}+\vect{\Delta}\circ\vect{(2+\alpha)} \underbrace{\left({\bf Z}_{2}{\bf Z}_{21}'+{\bf W}_{2}{\bf Z}_{22}'\right)}_{I_{21}}+\vect{\Delta}\circ\vect{(1+\alpha)^{2}}\underbrace{{\bf Z}_{1}{\bf Z}_{12}'}_{I_{12}}\nonumber\\
&=\vect{\Delta}\circ\vect{(1+\alpha)^{-1}}W_{\theta 1} +\vect{\Delta}\circ\vect{(2+\alpha)^{-1}}W_{\theta 2} +\vect{\Delta}(I_{10}+I_{20}+I_{11}+2I_{21}+I_{12}+4I_{22})\nonumber\\
&\hspace{1cm}+\vect{\Delta}\circ\vect{\alpha}(I_{11}+I_{21}+2I_{12}+4I_{22})+\vect{\Delta}\circ\vect{\alpha^{2}}(I_{12}+I_{22}) .
\end{align}
Thus, interference aligns into the  space spanned by the $3$ vectors: $\vect{\Delta}, \vect{\Delta}\circ\vect{\alpha}, \vect{\Delta}\circ\vect{\alpha^{2}}$, while the $2$ desired symbols appear along $\vect{\Delta}\circ\vect{(1+\alpha)^{-1}}, \vect{\Delta}\circ\vect{(2+\alpha)^{-1}}$. Note that the highest exponent of $\alpha$ along a desired signal dimension is $1$, but every interference dimension contains $\alpha^2$ (contributed by $\Delta$), so the desired signals are independent of the interference. The independence of desired signals among themselves is also easily verified because if $c_1\frac{\Delta}{1+\alpha}+c_2\frac{\Delta}{2+\alpha}=c_1(2+\alpha)+c_2(1+\alpha)$ is the zero polynomial, then by substituting $i+\alpha=0$ for $i=1,2$ we find that we must have $c_1=c_2=0$. Privacy and security are guaranteed by the MDS coded independent noise terms mixed with the message and query symbols. The rate achieved is $2/5$, which matches the asymptotic capacity for this setting.

\subsection{Example: $(X=2)$ Secure, $(T=2)$ Private Scheme with $N=7$ Servers}
Each message consists of $L=3$ symbols from a finite field $\mathbb{F}_q$, of size $q\geq 10$ and characteristic greater than $2$. $$\Delta=(1+\alpha)(2+\alpha)(3+\alpha).$$

\begin{align*}
\begin{array}{cc}\hline
&\mbox{Server `$n$' (Replace $\alpha,\Delta$ with $\alpha_n, \Delta_n)$} \\\hline
\mbox{Storage} & \mathbf{W}_1+(1+\alpha)\mathbf{Z}_{11}+(1+\alpha)^2\mathbf{Z}_{12},\\
(S_n)&\mathbf{W}_2+(2+\alpha)\mathbf{Z}_{21}+(2+\alpha)^2\mathbf{Z}_{22},\\
&\mathbf{W}_3+(3+\alpha)\mathbf{Z}_{31}+(3+\alpha)^2\mathbf{Z}_{32}\\\hline
\mbox{Query} &\frac{\Delta}{1+\alpha}\Big(\mathbf{Q_\theta}+(1+\alpha)\mathbf{Z}_{11}'+(1+\alpha)^2\mathbf{Z}_{12}'\Big),\\
(Q_n^{[\theta]})&\frac{\Delta}{2+\alpha}\Big(\mathbf{Q_\theta}+(2+\alpha)\mathbf{Z}_{21}'+(2+\alpha)^2\mathbf{Z}_{22}'\Big),\\
&\frac{\Delta}{3+\alpha}\Big(\mathbf{Q_\theta}+(3+\alpha)\mathbf{Z}_{31}'+(3+\alpha)^2\mathbf{Z}_{32}'\Big)\\\hline
\multicolumn{2}{c}{\mbox{Desired symbols appear along vectors}}\\
\multicolumn{2}{c}{\vect{\Delta}\circ\Big(\vect{(1+\alpha)^{-1}}, \vect{(2+\alpha)^{-1}}, \vect{(3+\alpha)^{-1}}\Big)}\\
\multicolumn{2}{c}{\mbox{Interference symbols appear along vectors}}\\
\multicolumn{2}{c}{\vect{\Delta}\circ\Big(\vect{1}, \vect{1+\alpha}, \vect{(1+\alpha)^{2}}, \vect{(1+\alpha)^{3}}, \vect{2+\alpha}, \vect{(2+\alpha)^{2}}, \vect{(2+\alpha)^{3}}, }\\
\multicolumn{2}{r} {\hspace{1cm} \vect{3+\alpha}, \vect{(3+\alpha)^{2}}, \vect{(3+\alpha)^{3}}\Big)}\\\hline
\end{array}
\end{align*}
Interference aligns into the  space spanned by the $4$ vectors: $\vect{\Delta}, \vect{\Delta}\circ\vect{\alpha},  \vect{\Delta}\circ\vect{\alpha^{2}}, \vect{\Delta}\circ\vect{\alpha^{3}}$.
Independence of desired signals from interference is trivially verified -- highest exponent of $\alpha$ along any desired signal dimension is $2$, but each interference dimension has an $\alpha^3$ term (contributed by $\Delta$). The desired signal dimensions are easily verified to be linearly independent among themselves because in order for 
\begin{align}
c_1(2+\alpha)(3+\alpha)+c_2(1+\alpha)(3+\alpha)+c_3(1+\alpha)(2+\alpha)
\end{align}
to be the zero polynomial it must be zero everywhere, but in that case, setting $\alpha+i=0$ for $i=1,2,3$ leads us to  $c_1=c_2=c_3=0$. Privacy and security are guaranteed by the MDS coded independent noise terms mixed with the message and query symbols. 
The rate achieved is $3/7$, which matches the asymptotic capacity for this setting.

\section{Proof of Theorem \ref{theorem:N3}}\label{proof:N3}
In Section \ref{proof:main} we presented an XSTPIR scheme for arbitrary $X, T, N, K$ that achieves capacity as $K\rightarrow\infty$. Since the scheme also works for any $K$, a natural starting point for finite $K$ settings is to apply the same scheme. A key insight here is that the rate achieved by the scheme improves as $K$ decreases. Let us elaborate. Note that the query $Q_n^{[\theta]}$ that is sent to each server is uniformly distributed in $\mathbb{F}_q^{LK}$. Therefore, with probability $\frac{1}{q^{LK}}$, the query vector is the all zero vector. Whenever this happens, no download is needed from the server. Thus, the average download is reduced by the factor $(1-1/q^{LK})$ and the rate achieved is expressed as follows.
\begin{lemma}\label{lemma:finite}
The asymptotically capacity achieving XSTPIR scheme of Section \ref{proof:main} achieves the rate
\begin{align}
R&=\left(1-\frac{1}{q^{KL}}\right)^{-1}\left(1-\left(\frac{X+T}{N}\right)\right)
\end{align}
for arbitrary $X, L, K, N$ values, where $N>X+T$.
\end{lemma}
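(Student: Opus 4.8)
The plan is to observe that in the scheme of Section~\ref{proof:main} each query vector $Q_n^{[\theta]}$ is \emph{uniformly} distributed over $\mathbb{F}_q^{LK}$, so that it equals the all-zero vector with probability exactly $q^{-LK}$; whenever this happens the answer $A_n^{[\theta]}=S_nQ_n^{[\theta]}$ is the zero symbol regardless of $S_n$, the user (who generated the query) knows this, and downloads nothing from Server~$n$. Tracking the resulting reduction in expected download, while checking that correctness, privacy and security are all untouched, then yields the stated rate.

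First I would establish the uniformity of $Q_n^{[\theta]}$. Recall that $Q_n^{[\theta]}$ is the concatenation of the $L$ length-$K$ blocks $\frac{\Delta_n}{l+\alpha_n}\big(\mathbf{Q}_\theta+\sum_{t=1}^T(l+\alpha_n)^t\mathbf{Z}_{lt}'\big)$, $l\in[1:L]$. Expanding the $l$-th block, the coefficient multiplying $\mathbf{Z}_{l1}'$ is $\frac{\Delta_n}{l+\alpha_n}\cdot(l+\alpha_n)=\Delta_n$, which is nonzero since $\alpha_n\in\mathbb{G}$; because $\mathbf{Z}_{l1}'$ is uniform over $\mathbb{F}_q^{K}$ and independent of $\mathbf{Q}_\theta$ and of all the other noise, the $l$-th block is uniform over $\mathbb{F}_q^{K}$ for every fixed value of $\theta$. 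Distinct blocks involve disjoint, mutually independent noise vectors, so the $L$ blocks are jointly independent, whence $Q_n^{[\theta]}$ is uniform over $\mathbb{F}_q^{LK}$ and $\Pr[Q_n^{[\theta]}=\mathbf{0}]=q^{-LK}$. Consequently the expected number of symbols downloaded from Server~$n$ is $D_n=(1)(1-q^{-LK})+(0)(q^{-LK})=1-q^{-LK}$, and by linearity of expectation $D=\sum_{n=1}^{N}D_n=N(1-q^{-LK})$, independently of how the zero-query events correlate across servers.

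Finally I would check that skipping the zero-query servers preserves decodability. The relation $\vect{A^{[\theta]}}=M_N\,(W_{\theta 1},\dots,W_{\theta L},I_0',\dots,I_{X+T-1}')^{T}$ is an algebraic identity in the noise, $M_N$ is the fixed matrix shown to be invertible in Lemma~\ref{disinv}, and the user simply fills in the entries of $\vect{A^{[\theta]}}$ belonging to skipped servers with their known value $0$ before applying $M_N^{-1}$, so $W_\theta$ is still recovered; privacy and security carry over verbatim from Section~\ref{proof:main} because the queries themselves are unchanged. With $L=N-X-T$ this gives
\[
R=\frac{L}{D}=\frac{N-X-T}{N}\left(1-q^{-LK}\right)^{-1}=\left(1-\frac{X+T}{N}\right)\left(1-\frac{1}{q^{KL}}\right)^{-1}.
\]
The only genuinely delicate point is the uniformity claim, which rests entirely on the leading noise coefficient $\Delta_n$ being nonzero in every block and on the independence of the noise across blocks; everything else is bookkeeping, and no field-size or characteristic conditions beyond those already imposed by the scheme of Section~\ref{proof:main} are needed.
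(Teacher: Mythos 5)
Your proposal is correct and follows essentially the same route as the paper: the paper's own justification is precisely that $Q_n^{[\theta]}$ is uniform over $\mathbb{F}_q^{LK}$, hence all-zero with probability $1/q^{LK}$, in which case no download is needed, reducing the expected download by the factor $(1-1/q^{LK})$ and yielding $R=\frac{L}{N}\left(1-\frac{1}{q^{KL}}\right)^{-1}$ with $L=N-X-T$. Your write-up merely supplies details the paper leaves implicit (the nonzero coefficient $\Delta_n$ of $\mathbf{Z}_{l1}'$ establishing uniformity, and that skipped zero answers can be filled in before inverting $M_N$), which is a faithful elaboration rather than a different argument.
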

Note that $N\leq X+T$ is excluded as the degenerate setting where we already know the capacity for all parameters, according to Theorem \ref{theorem:mds}. Remarkably, the rate in Lemma \ref{lemma:finite}  depends on the message size $L$ and the field size $q$ used by the scheme. As presented, the scheme uses $q\geq L+N$ and $L=N-X-T$. So the achieved rate for finite $K$ becomes
\begin{align}
R&=\left(1-\frac{1}{(2N-X-T)^{K(N-X-T)}}\right)^{-1}\left(1-\left(\frac{X+T}{N}\right)\right).
\end{align}

Consider the simplest non-trivial setting of interest, i.e., the setting for Theorem \ref{theorem:N3}, where $T=X=1$, $N=3$ and $K$ is arbitrary. The scheme  of Section \ref{proof:main} uses $L=1, q\geq 4$, so the rate achieved for arbitrary $K$ is 
\begin{align}
R&=\frac{1}{3}\left(1-\frac{1}{4^K}\right)^{-1}.
\end{align}
However, note that if the field size could be reduced to $q=2$, then the rate achieved by the scheme would become $$\frac{1}{3}\left(1-\frac{1}{2^K}\right)^{-1}=\frac{2}{3}\left(\frac{1-\frac{1}{2}}{1-\frac{1}{2^K}}\right)=\frac{2}{3}\left(1+\frac{1}{2}+\frac{1}{2^2}+\cdots+\frac{1}{2^{K-1}}\right)^{-1}$$ which matches the capacity upper bound from Theorem \ref{theorem:converse}. 
Surprisingly, this can be done with some modification to the structure of the scheme, as explained below.

Suppose each message $W_k, k \in [1:K]$ consists of $L=1$ symbol (bit) from $\mathbb{F}_2$. Let ${\bf W} = (W_1, W_2, \cdots, W_K)$ be a random row vector in $\mathbb{F}_2^{1\times K}$, containing all messages. Let ${\bf Z}$ and ${\bf Z}'$ be uniformly random noise vectors from $\mathbb{F}_2^{1\times K}$ and $\mathbb{F}_2^{K\times 1}$, that are used to guarantee data security and user privacy, respectively. The noise vectors are independent of each other and of the message vector and $\theta$, i.e., $H({\bf W}, {\bf Z}, {\bf Z}', \theta) = H({\bf W}) + H({\bf Z}) + H({\bf Z}') + H(\theta)$. Let ${\bf Q}_\theta$ represent the $\theta^{th}$ column of ${\bf I}_K$ (the $K \times K$ identity matrix). Note that ${\bf W}{\bf Q}_\theta = W_\theta$ is the message desired by the user. The storage at the servers, the queries and the answers are listed below.
\begin{align*}
\begin{array}{|c|c|c|c|}\hline
&\mbox{Server $1$} & \mbox{Server $2$} &\mbox{Server $3$}\\\hline
\mbox{Storage $S_n$} & \mathbf{W}+\mathbf{Z} &\mathbf{W}+\mathbf{Z}\mathbf{B}&\mathbf{Z}\\\hline
\mbox{Query $Q_n^{[\theta]}$} &\mathbf{Z'}&\mathbf{Q_\theta}+\mathbf{Z'}&(\mathbf{I}_K + \mathbf{B})\mathbf{Z'} + \mathbf{B}{\bf Q}_\theta\\\hline
\mbox{Answer $A_n^{[\theta]}$}&\mathbf{W}\mathbf{Z'}+\mathbf{Z}\mathbf{Z}'&\mathbf{W}\mathbf{Q_\theta}+\mathbf{W}\mathbf{Z'}+\mathbf{Z}\mathbf{B}\mathbf{Z}'+\mathbf{Z}\mathbf{B}\mathbf{Q_\theta}&\mathbf{Z}\mathbf{Z}'+\mathbf{Z}\mathbf{B}\mathbf{Z}'+\mathbf{Z}\mathbf{B}\mathbf{Q_\theta}\\\hline
\end{array}
\end{align*}
where ${\bf B}$ is a $K\times K$ deterministic binary matrix such that  ${\bf B}$ and ${\bf I}_K + {\bf B}$ are both full rank. Any such choice of ${\bf B}$ will work for our scheme. The existence of such ${\bf B}$ is established in the following lemma whose proof  appears in Appendix \ref{sec:bk}.
\begin{lemma}\label{lemma:bk}
For all $K\geq 2$, there exists a   matrix ${\bf B}\in\mathbb{F}_2^{K\times K}$ such that ${\bf B}$ and ${\bf I}_K + {\bf B}$ are both invertible. 
\end{lemma}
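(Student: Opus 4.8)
The plan is to give an explicit $\mathbf{B}$ rather than argue abstractly. First I would reformulate the two nonsingularity requirements in terms of eigenvalues. Over $\mathbb{F}_2$ one has $\mathbf{I}_K + \mathbf{B} = \mathbf{I}_K - \mathbf{B}$, so asking that both $\mathbf{B}$ and $\mathbf{I}_K + \mathbf{B}$ be invertible is exactly asking that neither $0$ nor $1$ is an eigenvalue of $\mathbf{B}$. Writing $\chi_\mathbf{B}(x) = \det(x\mathbf{I}_K - \mathbf{B})$ for the characteristic polynomial, this is equivalent to $\chi_\mathbf{B}(0)\neq 0$ and $\chi_\mathbf{B}(1)\neq 0$, since $\det\mathbf{B} = (-1)^K\chi_\mathbf{B}(0) = \chi_\mathbf{B}(0)$ and $\det(\mathbf{I}_K+\mathbf{B}) = \det(\mathbf{I}_K-\mathbf{B}) = \chi_\mathbf{B}(1)$ in $\mathbb{F}_2$. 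So the lemma reduces to exhibiting one $K\times K$ matrix over $\mathbb{F}_2$ whose characteristic polynomial takes the value $1$ at both $0$ and $1$.

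For that, I would take $p(x) = x^K + x + 1 \in \mathbb{F}_2[x]$, which is monic of degree exactly $K$ as soon as $K\ge 2$ (for $K=2$ it reads $x^2+x+1$), and which satisfies $p(0)=1$ and $p(1)=1+1+1=1$ in $\mathbb{F}_2$. Letting $\mathbf{B}$ be the companion matrix of $p$ gives $\chi_\mathbf{B}=p$, and hence, by the previous paragraph, $\mathbf{B}$ and $\mathbf{I}_K+\mathbf{B}$ are both invertible, which finishes the proof. An equally clean alternative, if one prefers to avoid invoking the companion-matrix fact, is to take $\mathbf{B}$ block diagonal with $2\times 2$ companion blocks of $x^2+x+1$, using a single $3\times 3$ companion block of $x^3+x+1$ when $K$ is odd; each block — and therefore $\mathbf{B}$ and $\mathbf{I}_K+\mathbf{B}$ — is invertible, and one can compute the two $2\times 2$ and $3\times 3$ determinants directly.

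There is essentially no serious obstacle here; the only point worth flagging is that an explicit construction really is the natural route, because a counting argument does not work: over $\mathbb{F}_2$ the singular matrices form more than half of all $K\times K$ matrices, so a union bound over the two events ``$\mathbf{B}$ singular'' and ``$\mathbf{I}_K+\mathbf{B}$ singular'' is not good enough to conclude existence by pigeonhole. The one routine thing to double-check in the write-up is that the chosen polynomial has degree exactly $K$ (so its companion matrix is $K\times K$) and is nonzero at $0$ and $1$; both hold for $x^K+x+1$ precisely because $K\ge 2$, which is exactly the regime in which the lemma is used.
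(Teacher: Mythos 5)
Your proof is correct, and it takes a genuinely different route from the paper. You reduce the problem to exhibiting a monic degree-$K$ polynomial over $\mathbb{F}_2$ that is nonzero at both $0$ and $1$ (namely $x^K+x+1$) and take $\mathbf{B}$ to be its companion matrix; the two invertibility conditions are exactly $\chi_{\mathbf{B}}(0)\neq 0$ and $\chi_{\mathbf{B}}(1)\neq 0$, and both evaluations equal $1$. The paper instead writes down an explicit $\mathbf{B}$ built from identity and anti-diagonal identity blocks, with separate constructions for $K$ even and $K$ odd: $\det(\mathbf{B})=1$ is read off because $\mathbf{B}$ is upper anti-triangular with ones on the anti-diagonal, and $\det(\mathbf{I}_K+\mathbf{B})=1$ is verified via the block-determinant (Schur complement) formula. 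Your argument is shorter, uniform in $K$ (no parity case split), and makes the underlying reason transparent — one only needs a characteristic polynomial avoiding the roots $0$ and $1$ — at the cost of invoking the companion-matrix fact; the paper's construction is more hands-on and checkable by direct matrix computation, but requires the two-case bookkeeping and the block-determinant identity. Your block-diagonal variant with $2\times 2$ blocks of $x^2+x+1$ (plus one $3\times 3$ block of $x^3+x+1$ for odd $K$) is also fine, and your side remark is apt: a naive union bound over the events ``$\mathbf{B}$ singular'' and ``$\mathbf{I}_K+\mathbf{B}$ singular'' indeed fails over $\mathbb{F}_2$, since singular matrices form well over half of all matrices, so an explicit construction (or a finer counting argument) is really needed.
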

Now, let us check the correctness, security and privacy of this scheme. The scheme is obviously correct because by adding the three answers shown in the table above, the user recovers $W_\theta$. It is obviously secure because ${\bf B}$ is invertible, so ${\bf Z}{\bf B}\sim{\bf Z}$, is still uniform noise independent of ${\bf W}$. And similarly, it is also obviously private, because ${\bf I}_K+{\bf B}$ is also invertible, so $({\bf I}_K+{\bf B}){\bf Z}' \sim {\bf Z}'$ is still uniform noise independent of ${\bf B}{\bf Q}_\theta$. Thus, surprisingly, we have achieved the capacity of XSTPIR for arbitrary $K$, when $X=T=1$ and $N=3$, completing the proof of Theorem \ref{theorem:N3}. $\hfill\square$

\section{Conclusion}
The  XSTPIR problem is timely due to the growing importance of privacy and security concerns in modern information storage and retrieval systems. It is a conceptually rich topic that reveals new insights into alignment of noise terms, dependence of coding and query structures, cost of symmetric security, significance of field size for the rate of information retrieval, etc. As indicated by various open problems identified here, XSTPIR is a fertile research avenue for future work. In particular, the capacity characterization for arbitrary $K$ could reveal fundamentally new schemes for PIR. Especially intriguing would be the role that field size might play in such a result. Capacity of Sym-XSTPIR is another promising open problem. XSTPIR with constraints on the amount of storage per server, coded storage, multi-message retrieval are other open problems that merit investigation.

\appendix
\section{Proof of Lemma \ref{lemma:bk}}\label{sec:bk}
Let ${\bf J}_k$  denote the $k\times k$ anti-diagonal identity matrix, and let ${\bf 0}_{k_1 \times k_2}$  denote the $k_1 \times k_2$ matrix where all elements are equal to 0 (when $k_1 = k_2$, this notation is further simplified to ${\bf 0}_{k_1}$). Define
\begin{align}
{\bf I}'_{k} = \left[\begin{array}{cc}
{\bf I}_k & {\bf 0}_{k\times 1}\\
{\bf 0}_{1\times k} & 0
\end{array}
\right].
\end{align}

\noindent Choose ${\bf B}$ as follows.

\begin{align}
{\bf B}&=\left\{
\begin{array}{ll}
\left[\begin{array}{cc}
{\bf I}_{\frac{K}{2}} & {\bf J}_{\frac{K}{2}} \\
{\bf J}_{\frac{K}{2}} & {\bf 0}_{\frac{K}{2}}
\end{array}
\right], & \mbox{if } K \mbox{ is even,}\\
\\
 \left[\begin{array}{c|c}
 {\bf J}_{\frac{K+1}{2}} + {\bf I}'_{\frac{K-1}{2}} + {\bf I}_{\frac{K+1}{2}}& \begin{array}{cc}
 {\bf J}_{\frac{K-1}{2}} \\
 {\bf 0}_{1\times\frac{K-1}{2}}
 \end{array} \\ \hdashline
{\bf J}_{\frac{K-1}{2}} ~~ \begin{array}{c}
{\bf 0}_{\frac{K-1}{2}\times 1}
\end{array} & {\bf 0}_{\frac{K-1}{2}}
\end{array}
\right],& \mbox{if } K \mbox{ is odd.}\\
\end{array}
\right. 
\end{align}

\noindent For example,
\begin{flalign}
\mbox{when $K=4$:}&& {\bf B} = \left[
\begin{array}{cccc}
1 & 0 & 0 & 1\\
0 & 1 & 1 & 0\\
0 & 1 & 0 & 0\\
1 & 0 & 0 & 0
\end{array}
\right],
&& \mbox{and when $K=5$:}&&
{\bf B} = \left[
\begin{array}{ccccc}
0 & 0 & 1 & 0 & 1\\
0 & 1 & 0 & 1 & 0\\
1 & 0 & 1 & 0 & 0\\
0 & 1 & 0 & 0 & 0\\
1 & 0 & 0 & 0 & 0
\end{array}
\right].
\end{flalign}

Let us show that ${\bf B}$ and ${\bf I}_K + {\bf B}$ are both invertible.

First, consider ${\bf B}$. Regardless of whether $K$ is even or odd, ${\bf B}$ is an upper anti-triangular matrix where all anti-diagonal elements are 1 so that $\det({\bf B}) = 1$ and ${\bf B}$ has full rank.

Next, consider ${\bf I}_K + {\bf B}$.
\begin{align}
\mbox{When $K$ is even:} & ~~{\bf I}_K + {\bf B} = \left[\begin{array}{cc}
{\bf I}_{\frac{K}{2}} & {\bf 0}_{\frac{K}{2}} \\
{\bf 0}_{\frac{K}{2}} & {\bf I}_{\frac{K}{2}}
\end{array}
\right]
+ \left[\begin{array}{cc}
{\bf I}_{\frac{K}{2}} & {\bf J}_{\frac{K}{2}} \\
{\bf J}_{\frac{K}{2}} & {\bf 0}_{\frac{K}{2}}
\end{array}
\right] = 
\left[\begin{array}{cc}
{\bf 0}_{\frac{K}{2}} & {\bf J}_{\frac{K}{2}} \\
{\bf J}_{\frac{K}{2}} & {\bf I}_{\frac{K}{2}}
\end{array}
\right] \\
&~~\Rightarrow ~~\det({\bf I}_K+{\bf B}) = 1. \\
\mbox{When $K$ is odd:} & ~~{\bf I}_K + {\bf B} = \left[\begin{array}{c|c}
{\bf I}_{\frac{K+1}{2}} & {\bf 0}_{\frac{K+1}{2} \times \frac{K-1}{2}} \\ \hdashline
{\bf 0}_{\frac{K-1}{2} \times \frac{K+1}{2}} & {\bf I}_{\frac{K-1}{2}}
\end{array}
\right]
+   \left[\begin{array}{c|c}
 {\bf J}_{\frac{K+1}{2}} + {\bf I}'_{\frac{K-1}{2}} + {\bf I}_{\frac{K+1}{2}}& \begin{array}{cc}
 {\bf J}_{\frac{K-1}{2}} \\
 {\bf 0}_{1\times\frac{K-1}{2}}
 \end{array} \\ \hdashline
{\bf J}_{\frac{K-1}{2}} ~~ \begin{array}{c}
{\bf 0}_{\frac{K-1}{2}\times 1}
\end{array} & {\bf 0}_{\frac{K-1}{2}}
\end{array}
\right] \\
&~~~~~~~~~~~= \left[\begin{array}{c|c}
 {\bf J}_{\frac{K+1}{2}} + {\bf I}'_{\frac{K-1}{2}}  & \begin{array}{cc}
 {\bf J}_{\frac{K-1}{2}} \\
 {\bf 0}_{1\times\frac{K-1}{2}}
 \end{array} \\ \hdashline
{\bf J}_{\frac{K-1}{2}} ~~ \begin{array}{c}
{\bf 0}_{\frac{K-1}{2}\times 1}
\end{array} & {\bf I}_{\frac{K-1}{2}}
\end{array}
\right] \\
&~~\Rightarrow ~~\det({\bf I}_K+{\bf B}) = \det\left( {\bf J}_{\frac{K+1}{2}} + {\bf I}'_{\frac{K-1}{2}}  + \left[\begin{array}{cc}
 {\bf J}_{\frac{K-1}{2}} \\
 {\bf 0}_{1\times\frac{K-1}{2}}
 \end{array}\right] {\bf I}_{\frac{K-1}{2}}^{-1} \left[
 {\bf J}_{\frac{K-1}{2}} ~~ \begin{array}{c}
{\bf 0}_{\frac{K-1}{2}\times 1}
\end{array}
 \right] 
 \right) \label{eq:e4}\\
 &~~~~~~~~~~~~~~~~~~~~~~~~~= \det\left( {\bf J}_{\frac{K+1}{2}} + {\bf I}'_{\frac{K-1}{2}}  +  {\bf I}'_{\frac{K-1}{2}} \right) = \det\left( {\bf J}_{\frac{K+1}{2}} \right) = 1
\end{align}
where (\ref{eq:e4}) follows from the following formula on the determinant of a block matrix that is made up of matrices ${\bf A}, {\bf B}, {\bf C}, {\bf D}$ with proper dimensions and ${\bf D}$ is invertible.
\begin{equation}
{\displaystyle {\rm {det}}\left({\begin{matrix}\mathbf{A}&\mathbf{B}\\\mathbf{C}&\mathbf{D}\end{matrix}}\right)={\rm {det}}(\mathbf{D})\,{\rm {det}}\left(\mathbf{A}-\mathbf{BD}^{-1}\mathbf{C}\right).}
\end{equation}
The proof is thus complete.$\hfill\square$

\section{Proof of Corollaries \ref{cor:converse}, \ref{cor:mds}, \ref{cor:main}, \ref{cor:N3}}\label{proof:cor}
The proof of Corollary \ref{cor:converse} is trivial because imposing the symmetric security constraint cannot increase capacity.

\subsection{Proof of Corollary \ref{cor:mds}}
To prove Corollary \ref{cor:mds}  we provide a scheme as follows. Each message $W_k, k\in[1:K]$, consists of $L=1$ symbol from some finite field $\mathbb{F}_q$. Let $Z_{x,k,m}, x\in[1:X],  k\in[1:K], m\in[1:K]$ be independent uniform noise symbols from $\mathbb{F}_q$. The subscript, $m$, in $Z_{x,k,m}$ is interpreted modulo $K$, i.e., $Z_{x, k, m}=Z_{x, k,m+K}$.
The storage at each server is specified as,
\begin{align}
S_n&= \{Z_{n,k,m}, k\in[1:K], m\in[1:K]\}, && n\in[1:X],\\
S_n&= \left\{W_{k}+\sum_{x=1}^X Z_{x,k,m}, k\in[1:K], m\in[1:K]\right\}, && n=N.
\end{align}
The queries from each server are specified as,
\begin{align}
Q^{[\theta]}_n&: \mbox{Ask for } \{Z_{n,k, m_o}, k\in[1:K]\},&& n\in[1:X],\\
Q^{[\theta]}_n&: \mbox{Ask for } \{W_{k}+\sum_{x=1}^X Z_{x,k,m_o-\theta+k}, k\in[1:K]\},&&n=N,
\end{align}
where $m_o$ is chosen privately and uniformly randomly by the user from $[1:K]$.
Thus, in order to retrieve $1$ desired message symbol, the user downloads a total of $KN$ symbols from all servers. The scheme is $X$-secure because each message symbol is protected by independent uniform noise terms. It is correct because for $k=\theta$ the download from Server $N$, contains the symbol $W_{\theta}+\sum_{x=1}^XZ_{x,\theta,m_o}$ and the downloads from the first $X$ servers include all the noise terms $Z_{x, \theta, m_o}$.  The scheme is private because $m_o$ is chosen uniformly and privately by the user. It satisfies symmetric security because all the undesired message symbols $W_{k}, k\neq \theta$, contained in the answers are protected by noise terms $Z_{x,k,m_o-\theta+k}$ and these noise terms are independent of the noise terms downloaded from servers $n\in[1:X]$ because $m_o-\theta+k\neq m_o$ when $k\neq \theta$. The rate achieved is $\frac{1}{KN}$, which is the capacity for this setting.$\hfill\square$

Note that in  the Sym-XSPIR scheme described above, each server stores $K^2$ symbols, when the total data is only $KL=K$ symbols. Thus, this Sym-XSPIR scheme takes advantage of unconstrained storage when $K$ is large, more so than the XSTPIR schemes which store no more than $KL$ symbols at each server.

\subsection{Proof of Corollary \ref{cor:main}}
To prove Corollary \ref{cor:main}, we show that the scheme presented in Section \ref{proof:main} automatically guarantees symmetric security when $T=1$. Define
\begin{align}
{\bf W}_{i}^c&=\{{\bf W}_{l}, l\in[1:L],   l\neq i \}\\
{\bf Z}_{ij}^c&=\{{\bf Z}_{lx}, l\in[1:L],  x\in[1:X], (l,x)\neq(i,j) \}.
\end{align}
We need to prove that beyond the information that the user must have, i.e., $W_\theta, Q_{[1:N]}^{[\theta]},\theta$, he cannot learn anything about the messages ${\bf W}_{[1:L]}$ from the answers $A_{[1:N]}^{[\theta]}$.
\begin{align}
I\left({\bf W}_{[1:L]}; A_{[1:N]}^{[\theta]}\mid W_{\theta}, Q^{[\theta]}_{[1:N]},\theta\right)&=\sum_{l\in[1:L]}I\left({\bf W}_l; A_{[1:N]}^{[\theta]}\mid {\bf W}_{[1:l-1]}, W_{\theta}, Q^{[\theta]}_{[1:N]},\theta\right)\\
&\leq \sum_{l\in[1:L]}I\left({\bf W}_l; A_{[1:N]}^{[\theta]}\mid {\bf W}_{l}^c, W_{\theta}, Q^{[\theta]}_{[1:N]},\theta\right)\\
&\leq \sum_{l\in[1:L]} I\left({\bf W}_l; A_{[1:N]}^{[\theta]}\mid {\bf Z}_{l1}^c, {\bf W}_{l}^c, W_{\theta}, Q^{[\theta]}_{[1:N]},\theta\right)
\end{align}
where we repeatedly used the fact that $I(A;B\mid C)\leq I(A;B\mid C,D)$ if $I(A;D\mid C)=0$ and the facts that
\begin{align}
I({\bf W}_l; {\bf W}_{[l+1:L]}\mid W_{[1:l-1]},W_\theta,Q_{[1:N]}^{[\theta]},\theta)&=0\\
I\left({\bf W}_l;  {\bf Z}_{l1}^c\mid {\bf W}_{l}^c, W_{\theta}, Q^{[\theta]}_{[1:N]},\theta\right)&=0
\end{align} 
that follow from the indepenence of messages, queries, and the noise terms, by construction of the scheme in Section \ref{proof:main}.
To prove  Corollary \ref{cor:main} it suffices to show that each of the terms in the summation is zero. Without loss of generality, let us consider $l=1$. Because of the conditioning on ${\bf Z}_{11}^c, {\bf W}_{1}^c, W_{\theta}, Q^{[\theta]}_{[1:N]},\theta$, we can subtract the contributions from these terms, whose values are fixed, from  $A_n^{[\theta]}$, leaving us with only
\begin{align}
{A'}^{[\theta]}_n&=\left({\bf W}_1+(1+\alpha_n){\bf Z}_{11}\right)\left(\frac{\Delta_n}{1+\alpha_n}\right)\left({\bf Q}_\theta+(1+\alpha_n){\bf Z}'_{1}\right)\\
&=\left(\frac{\Delta_n}{1+\alpha_n}\right){\bf W}_1{\bf Q}_{\theta}+\Delta_n({\bf W}_1{\bf Z}'_{1}+{\bf Z}_{11}{\bf Q}_\theta)+\Delta_n(1+\alpha_n){\bf Z}_{11}{\bf Z}_{1}'\\
&=\left(\frac{\Delta_n}{1+\alpha_n}\right)W_{\theta 1}+\Delta_n({\bf W}_1{\bf Z}'_{1}+{\bf Z}_{11}(\theta))+\Delta_n(1+\alpha_n){\bf Z}_{11}{\bf Z}_{1}'\label{eq:thus}
\end{align}
where ${\bf Z}_{11}(i)$ is the $i^{th}$ element of the vector ${\bf Z}_{11}$. Note that $W_{\theta 1}$ is also a constant because of the conditioning on $W_\theta$.
Given ${\bf Z}_{11}^c, {\bf W}_{1}^c, W_{\theta}, Q^{[\theta]}_{[1:N]},\theta$, the random variable $A_{[1:N]}^{[\theta]}$ is  an invertible function of ${A'}^{[\theta]}_{[1:N]}$. 
\begin{align}
&I\left({\bf W}_1; A_{[1:N]}^{[\theta]}\mid {\bf Z}_{11}^c, {\bf W}_{1}^c, W_{\theta}, Q^{[\theta]}_{[1:N]},\theta\right)\\
&= I\left({\bf W}_1; {A'}_{[1:N]}^{[\theta]}\mid {\bf Z}_{11}^c, {\bf W}_{1}^c, W_{\theta}, Q^{[\theta]}_{[1:N]},\theta\right)\\
&= I\left({\bf W}_1; {A'}_{[1:N]}^{[\theta]}\mid  W_{\theta 1}, Q^{[\theta]}_{[1:N]},\theta\right)\\
&\leq I({\bf W}_1; {\bf W}_1{\bf Z}_{1}'+{\bf Z}_{11}(\theta),  {\bf Z}_{11}{\bf Z}_{1}'\mid  W_{\theta 1}, {\bf Q}_\theta,  \theta)\label{eq:explain1}\\
&\leq I({\bf W}_1; {\bf W}_1{\bf Z}_{1}'+{\bf Z}_{11}(\theta),  {\bf Z}_{11}{\bf Z}_{1}'\mid  W_{\theta 1}, {\bf Q}_\theta, {\bf Z}_{1}', \theta)\label{eq:explain2}\\
&=0.
\end{align}
In \eqref{eq:explain1} we used the fact that given $W_{\theta 1}$, the random variable ${A'}_{[1:N]}^{[\theta]}$ is a function of ${\bf W}_1{\bf Z}_{1}'+{\bf Z}_{11}(\theta),  {\bf Z}_{11}{\bf Z}_{1}'$ because of \eqref{eq:thus}, and the fact that for any random variables $A,B,C$, we must have $I(A;f(B)\mid C)\leq I(A;B\mid C)$. In \eqref{eq:explain2} we used the fact that conditioning on an independent random variable cannot reduce mutual information, i.e., $I(A;B\mid C)\leq I(A;B\mid C,D)$ if $I(A;D\mid C)=0$, and the fact that ${\bf Z}_1'$ is independent of ${\bf W}_1$ after conditioning on $W_{\theta 1}, {\bf Q}_\theta, \theta$ by construction of the scheme as described in Section \ref{proof:main}.
The last step is justified as follows. Because of the conditioning on ${\bf Z}_1'$, its value is a  constant for which there are only three possibilities: ${\bf Z}_1'$ is either the zero vector, or it is equal to $\mu{\bf Q}_\theta$ for some non-zero $\mu\in\mathbb{F}_q$, or it is neither zero nor equal to $\mu{\bf Q}_\theta$. If ${\bf Z}_1'$ is the zero vector, then the mutual information is automatically zero because ${\bf W}_1$ is eliminated entirely. If ${\bf Z}_1'=\mu{\bf Q}_\theta$ for some non-zero $\mu$, then ${\bf W}_1{\bf Z}_1'=\mu W_{\theta 1}$ and the mutual information is again zero because of the conditioning on $W_{\theta 1}$. Finally, if ${\bf Z}_1'$ is neither zero nor a scaled version of ${\bf Q}_\theta$, then ${\bf Z}_{11}{\bf Z}_1'$ is a sum of uniformly random noise terms  in $\mathbb{F}_q$, at least one of which is independent of ${\bf Z}_{11}(\theta)$ and ${\bf Z}_1'$. So in this case also the mutual information is zero. This completes the proof of Corollary \ref{cor:main}. $\hfill\square$

\subsection{Proof of Corollary \ref{cor:N3}}
The proof of Corollary \ref{cor:N3} is presented next. Recall that in the scheme of the proof of  Theorem \ref{theorem:N3}, the user obtains the following three symbols from the answers,
\begin{eqnarray}
&& \mathbf{W}\mathbf{Q_\theta} = W_\theta\\
&& \mathbf{W}\mathbf{Z}'+\mathbf{Z}\mathbf{Z}' \label{eq:syma2}\\
&& \mathbf{W}\mathbf{Z}'+\mathbf{Z}\mathbf{B}\mathbf{Z}'+\mathbf{Z}\mathbf{B}\mathbf{Q}_\theta \label{eq:syma3}.
\end{eqnarray}
We show that symmetric security holds, i.e., conditioned on ${\bf Z}'$, from these three symbols the user learns nothing about the undesired messages $W_1,\cdots, W_{\theta-1}, W_{\theta+1}, \cdots, W_K$. When $\mathbf{Z}'$ is the zero vector, the symbol $\mathbf{W}\mathbf{Z}'$ is zero as well, leaking nothing about the  undesired messages.  Now consider (\ref{eq:syma2}). If $\mathbf{Z}'$ is not the zero vector, then the symbol $\mathbf{W}\mathbf{Z}'$ is protected by an independent noise term.  Similarly, consider (\ref{eq:syma3}) and consider three possibilities: ${\bf B(Z'+Q_\theta)}$ is either  zero, or equal to ${\bf Z}'$, or not zero and not equal to ${\bf Z}'$. If $\mathbf{B}(\mathbf{Z}' + {\bf Q}_\theta)$ is the zero vector, then because ${\bf B}$ is invertible, we must have ${\bf Z}' = {\bf Q}_\theta$, so the symbol $\mathbf{W}\mathbf{Z}' = \mathbf{W}\mathbf{Q}_\theta$ is the desired message, again leaking nothing about undesired messages. If $\mathbf{B}(\mathbf{Z}' + {\bf Q}_\theta)={\bf Z}'$ then \eqref{eq:syma3} is redundant, i.e., same as \eqref{eq:syma2}, so it leaks no new information. Finally, if $\mathbf{B}(\mathbf{Z}' + {\bf Q}_\theta)$ is not zero and not equal to ${\bf Z}'$, then ${\bf ZB(Z'+Q_\theta)}$ is independent of ${\bf ZZ}'$, so that \eqref{eq:syma3} is protected by an independent noise term.
Therefore, in all cases, the user learns nothing about undesired messages, and this completes the proof of symmetric security.$\hfill\square$

\medskip

\bibliographystyle{IEEEtran}
\bibliography{Thesis}

\begin{thebibliography}{10}
\providecommand{\url}[1]{#1}
\csname url@samestyle\endcsname
\providecommand{\newblock}{\relax}
\providecommand{\bibinfo}[2]{#2}
\providecommand{\BIBentrySTDinterwordspacing}{\spaceskip=0pt\relax}
\providecommand{\BIBentryALTinterwordstretchfactor}{4}
\providecommand{\BIBentryALTinterwordspacing}{\spaceskip=\fontdimen2\font plus
\BIBentryALTinterwordstretchfactor\fontdimen3\font minus
  \fontdimen4\font\relax}
\providecommand{\BIBforeignlanguage}[2]{{%
\expandafter\ifx\csname l@#1\endcsname\relax
\typeout{** WARNING: IEEEtran.bst: No hyphenation pattern has been}%
\typeout{** loaded for the language `#1'. Using the pattern for}%
\typeout{** the default language instead.}%
\else
\language=\csname l@#1\endcsname
\fi
#2}}
\providecommand{\BIBdecl}{\relax}
\BIBdecl

\bibitem{Sun_Jafar_PIR}
H.~Sun and S.~A. Jafar, ``{The Capacity of Private Information Retrieval},''
  \emph{IEEE Transactions on Information Theory}, vol.~63, no.~7, pp.
  4075--4088, July 2017.

\bibitem{Sun_Jafar_TPIR}
------, ``{The Capacity of Robust Private Information Retrieval with Colluding
  Databases},'' \emph{IEEE Transactions on Information Theory}, vol.~64, no.~4,
  pp. 2361--2370, April 2018.

\bibitem{Tajeddine_Gnilke_Karpuk_Etal}
R.~Tajeddine, O.~W. Gnilke, D.~Karpuk, R.~Freij-Hollanti, C.~Hollanti, and
  S.~El~Rouayheb, ``Private information retrieval schemes for codec data with
  arbitrary collusion patterns,'' \emph{IEEE International Symposium on
  Information Theory (ISIT)}, pp. 1908--1912, 2017.

\bibitem{Jia_Sun_Jafar}
Z.~Jia, H.~Sun, and S.~Jafar, ``The capacity of private information retrieval
  with disjoint colluding sets,'' in \emph{IEEE GLOBECOM}, 2017.

\bibitem{Banawan_Ulukus}
K.~Banawan and S.~Ulukus, ``{The Capacity of Private Information Retrieval from
  Coded Databases},'' \emph{IEEE Transactions on Information Theory}, vol.~64,
  no.~3, pp. 1945--1956, 2018.

\bibitem{FREIJ_HOLLANTI}
R.~Freij-Hollanti, O.~Gnilke, C.~Hollanti, and D.~Karpuk, ``{Private
  Information Retrieval from Coded Databases with Colluding Servers},''
  \emph{SIAM Journal on Applied Algebra and Geometry}, vol.~1, no.~1, pp.
  647--664, 2017.

\bibitem{Sun_Jafar_MDSTPIR}
H.~Sun and S.~A. Jafar, ``{Private Information Retrieval from MDS Coded Data
  with Colluding Servers: Settling a Conjecture by Freij-Hollanti et al.}''
  \emph{IEEE Transactions on Information Theory}, vol.~64, no.~2, pp.
  1000--1022, February 2018.

\bibitem{Lin_Kumar_Rosnes_Amat}
H.-Y. Lin, S.~Kumar, E.~Rosnes, and A.~G. i~Amat, ``A capacity-achieving {PIR}
  protocol for distributed storage using an arbitrary linear code,''
  \emph{arXiv preprint arXiv:1801.04923}, 2018.

\bibitem{Attia_Kumar_Tandon}
R.~T. Mohamed Adel~Attia, Deepak~Kumar, ``The capacity of private information
  retrieval from uncoded storage constrained databases,'' \emph{arXiv preprint
  arXiv:1805.04104}, 2018.

\bibitem{Banawan_Ulukus_MPIR}
K.~Banawan and S.~Ulukus, ``Multi-message private information retrieval:
  Capacity results and near-optimal schemes,'' \emph{arXiv preprint
  arXiv:1702.01739}, 2017.

\bibitem{Sun_Jafar_PIRL}
H.~Sun and S.~A. Jafar, ``Optimal download cost of private information
  retrieval for arbitrary message length,'' \emph{IEEE Transactions on
  Information Forensics and Security}, vol.~12, no.~12, pp. 2920--2932, 2017.

\bibitem{Sun_Jafar_MPIR}
------, ``{Multiround Private Information Retrieval: Capacity and Storage
  Overhead},'' \emph{IEEE Transactions on Information Theory}, vol.~64, no.~8,
  pp. 5743--5754, August 2018.

\bibitem{Tian_Sun_Chen}
C.~Tian, H.~Sun, and J.~Chen, ``A {Shannon}-theoretic approach to the
  storage-retrieval tradeoff in pir systems,'' \emph{IEEE International
  Symposium on Information Theory, ISIT 2018}, pp. 1904--1908, 2018.

\bibitem{Kadhe_Garcia_Heidarzadeh_Rouayheb_Sprintson}
S.~Kadhe, B.~Garcia, A.~Heidarzadeh, S.~E. Rouayheb, and A.~Sprintson,
  ``Private information retrieval with side information,'' \emph{arXiv preprint
  arXiv:1709.00112}, 2017.

\bibitem{Chen_Wang_Jafar_Side}
Z.~Chen, Z.~Wang, and S.~Jafar, ``The capacity of private information retrieval
  with private side information,'' \emph{arXiv preprint arXiv:1709.03022},
  2017.

\bibitem{Tandon_CachePIR}
R.~Tandon, ``The capacity of cache aided private information retrieval,''
  \emph{arXiv preprint arXiv:1706.07035}, 2017.

\bibitem{Wei_Banawan_Ulukus}
Y.-P. Wei, K.~Banawan, and S.~Ulukus, ``Fundamental limits of cache-aided
  private information retrieval with unknown and uncoded prefetching,''
  \emph{arXiv preprint arXiv:1709.01056}, 2017.

\bibitem{Wei_Banawan_Ulukus_Side}
------, ``The capacity of private information retrieval with partially known
  private side information,'' \emph{arXiv preprint arXiv:1710.00809}, 2017.

\bibitem{Shariatpanahi_Siavoshani_Maddah}
S.~P. Shariatpanahi, M.~J. Siavoshani, and M.~A. Maddah-Ali, ``Multi-message
  private information retrieval with private side information,'' \emph{arXiv
  preprint arXiv:1805.11892}, 2018.

\bibitem{Li_Gastpar}
S.~Li and M.~Gastpar, ``Single-server multi-message private information
  retrieval with side information,'' \emph{arXiv preprint arXiv:1808.05797},
  2018.

\bibitem{Banawan_Ulukus_Asymmetric}
K.~Banawan and S.~Ulukus, ``Private information retrieval through wiretap
  channel ii: Privacy meets security,'' \emph{arXiv preprint arXiv:1801.06171},
  2018.

\bibitem{Wang_Sun_Skoglund}
Q.~Wang, H.~Sun, and M.~Skoglund, ``The capacity of private information
  retrieval with eavesdroppers,'' \emph{arXiv preprint arXiv:1804.10189}, 2018.

\bibitem{Banawan_Ulukus_Byzantine}
K.~Banawan and S.~Ulukus, ``The capacity of private information retrieval from
  byzantine and colluding databases,'' \emph{arXiv preprint arXiv:1706.01442},
  2017.

\bibitem{Zhang_Ge_Variant}
Y.~Zhang and G.~Ge, ``Private information retrieval from {MDS} coded databases
  with colluding servers under several variant models,'' \emph{arXiv preprint
  arXiv:1705.03186}, 2017.

\bibitem{Tajeddine_Gnilke_Karpuk_Hollanti}
R.~Tajeddine, O.~W. Gnilke, D.~Karpuk, R.~Freij-Hollanti, and C.~Hollanti,
  ``Private information retrieval from coded storage systems with colluding,
  byzantine, and unresponsive servers,'' \emph{arXiv preprint
  arXiv:1806.08006}, 2018.

\bibitem{wang2018e}
Q.~Wang, H.~Sun, and M.~Skoglund, ``The $\epsilon$-error capacity of symmetric
  pir with byzantine adversaries,'' in \emph{2018 IEEE Information Theory
  Workshop (ITW)}.\hskip 1em plus 0.5em minus 0.4em\relax IEEE, 2018, pp. 1--5.

\bibitem{yao2019capacity}
X.~Yao, N.~Liu, and W.~Kang, ``The capacity of multi-round private information
  retrieval from byzantine databases,'' \emph{arXiv preprint arXiv:1901.06907},
  2019.

\bibitem{Sun_Jafar_SPIR}
H.~Sun and S.~A. Jafar, ``The capacity of symmetric private information
  retrieval,'' \emph{IEEE Transactions on Information Theory}, 2018.

\bibitem{Wang_Skoglund_TSPIR}
Q.~Wang and M.~Skoglund, ``Linear symmetric private information retrieval for
  {MDS} coded distributed storage with colluding servers,'' \emph{arXiv
  preprint arXiv:1708.05673}, 2017.

\bibitem{Wang_Skoglund_SPIRAd}
------, ``Secure symmetric private information retrieval from colluding
  databases with adversaries,'' \emph{arXiv preprint arXiv:1707.02152}, 2017.

\bibitem{Yang_Shin_Lee}
H.~Yang, W.~Shin, and J.~Lee, ``Private information retrieval for secure
  distributed storage systems,'' \emph{IEEE Transactions on Information
  Forensics and Security}, vol.~13, no.~12, pp. 2953--2964, December 2018.

\bibitem{Sun_Jafar_BIAPIR}
H.~Sun and S.~A. Jafar, ``Blind interference alignment for private information
  retrieval,'' \emph{2016 IEEE International Symposium on Information Theory
  (ISIT)}, pp. 560--564, 2016.

\bibitem{Sun_Jafar_PC}
------, ``The capacity of private computation,'' \emph{arXiv preprint
  arXiv:1710.11098}, 2017.

\bibitem{Mirmohseni_Maddah}
M.~Mirmohseni and M.~A. Maddah-Ali, ``Private function retrieval,'' \emph{arXiv
  preprint arXiv:1711.04677}, 2017.

\end{thebibliography}
\end{document}